\newcommand{\J}{\mathbb{I}}
\newcommand{\nml}{\mathbf{n}}
\newcommand{\sphere}{\mathbb{S}}
\newcommand{\setR}{\mathbb{R}}
\newcommand{\setN}{\mathbb{N}}
\newcommand{\E}{\mathbb{E}}
\newcommand{\eps}{\epsilon}
\newcommand{\eins}{\mathbf{1}}
\newcommand{\tr}{\operatorname{tr}}
\newcommand{\law}{\operatorname{Law}}
\newcommand{\tq}{\Upsilon}
\newcommand{\ee}{\mathbf{e}}
\newcommand{\cov}{\operatorname{Cov}}
\newcommand{\cls}{\mathcal{C}}
\newcommand{\CB}{\mathcal{B}}
\newcommand{\CF}{\mathcal{F}}
\newcommand{\zz}{\mathbf{z}}
\newcommand{\smat}{\mathcal{S}}
\newcommand{\meas}{\mathcal{P}}
\newcommand{\fnorm}{\mathbf{D}}
\newcommand{\newd}{\mathbf{d}}
\newcommand{\so}{\operatorname{SO}}
\newcommand{\half}{{\textstyle{\frac12}}}
\newcommand{\prb}{\mathbb{P}}
\newcommand{\sigalg}{\mathcal{B}}
\newcommand{\indi}{\mathbf{1}}
\newcommand{\tmp}{\mathbf{\Theta}}
\newcommand{\distreq}{\stackrel{\mathcal{D}}{=}}
\newcommand{\ball}{\mathbf{B}}
\newcommand{\indy}{\mathbf{1}}
\newcommand{\xx}{X}
\newcommand{\xu}{\underline{X}}
\newcommand{\xo}{\overline{X}}
\newcommand{\LL}{l}
\newcommand{\RR}{r}
\newcommand{\weight}{\omega} 
\newcommand{\wweight}{\mathcal{W}} 
\newcommand{\sgn}{\operatorname{sgn}}
\newcommand{\dd}{\mathrm{d}}
\newcommand{\Si} {\Sigma}
\newtheorem{thm}{Theorem}
\newtheorem{lem}{Lemma}
\newtheorem{prp}{Proposition}
\newtheorem{assum}{Assumption}
\newtheorem{rmk}{Remark}
\title[Homogeneous kinetic equations]{Homogeneous kinetic equations 
  for probabilistic linear collisions in multiple space dimensions}
\author{Federico Bassetti}
\address{F.~Bassetti \\ Dipartimento di Matematica \\ Universit\`a di Pavia \\ Via Ferrata 1 \\ 27100 Pavia \\ Italy}
\email{federico.bassetti@unipv.it}
\author{Daniel Matthes}
\address{D.~Matthes \\ Zentrum Mathematik \\ Technische Universit\"at M\"unchen \\ Boltzmannstra\ss e 3 \\ 85748 Garching bei M\"unchen \\ Germany}
\email{matthes@ma.tum.de}
\begin{document}

\begin{abstract}
  We analyze the convergence to equilibrium in a family of Kac-like kinetic equations in multiple space dimensions.
  These equations describe the change of the velocity distribution in a spatially homogeneous gas 
  due to binary collisions between the particles.
  We consider a general linear mechanism for the exchange of the particles' momenta,
  with interaction coefficients that are random matrices with a distribution that is {independent} of the velocities of the colliding particles.
 Applying a synthesis of probabilistic methods and Fourier analysis,
  we are able to identify sufficient conditions for the existence and uniqueness of a stationary state,
  we characterize this stationary state as a mixture of Gaussian distributions,
  and we prove equilibration of transient solutions under minimal hypotheses on the initial conditions.
  In particular, we are able to classify the high-energy tails of the stationary distribution,
  which might be of Pareto type.
  We also discuss several examples to which our theory applies,
  among them models with a non-symmetric stationary state.
\end{abstract}

\maketitle

\section{Introduction}

This paper is concerned with the spatially homogeneous Boltzmann equation
\begin{align}
  \label{eq.b}
  \partial_t\mu + \mu = Q_+[\mu], \qquad \mu(0)=\mu_0,
\end{align}
for a time-dependent probability distribution $\mu(t,\dd v)$ on the particle velocities $v\in\setR^d$ in a gas. 
Above, the collisional gain operator $Q_+[\mu]\equiv Q_+[\mu,\mu]$ is bi-linear and given (in weak form) by
\begin{align}
  \label{eq.gain}
  \int_{\setR^d} \psi(v')  Q_+[\mu_1,\mu_2](\dd v')
  = \frac12 \int_{\setR^d\times\setR^d} \E\big[\psi(v')+\psi(v_*') \big]\, \mu_1(\dd v)\, \mu_2(\dd v_*) ,
\end{align}
for every $\psi\in C_b(\setR^d)$. 
It accounts for the change in velocity due to binary collisions between particles.
Taking the expectation $\E$ in \eqref{eq.gain} corresponds --- morally --- to integration with respect to the scattering angle.
Our main assumptions are the following:
\begin{enumerate}
\item Collisions occur at {unit frequency}, i.e., the probability for a particle to collide is independent of its velocity.
  This justifies the particular form \eqref{eq.b} of the Boltzmann equation.
\item The post-collisional velocities $v'$ and $v_*'$ are determined from the pre-collisional velocities $v$ and $v_*$ by {linear} rules
  \begin{align}
    \label{eq.rules}
    v'=Lv+Rv_*, \quad v_*' = L_*v_*+R_*v,
  \end{align}
  in which $L$, $R$, $L_*$ and $R_*$ are $d\times d$ random matrices of a \emph{given distribution},
  that is independent of $v$ and $v_*$.
  For simplicity, we assume that $(L,R)$ and $(L_*,R_*)$ are identically distributed,
  but we do not assume their stochastic independence.
\item The model conserves the second absolute moment of $\mu(t;\dd v)$, 
  which means that
  \begin{align}
    \label{eq.keepenergy}
    \E\big[ |v'|^2 + |v_*'|^2 \big] = |v|^2 + |v_*|^2 .
  \end{align}
  Thus, the total kinetic energy (or temperature) is a constant.
\end{enumerate}
Equation \eqref{eq.b} with kernel \eqref{eq.gain} constitutes one possible generalization
of the celebrated Kac model to multiple space dimensions.
The original Kac model
--- which is \eqref{eq.b} and \eqref{eq.gain} in $d=1$ dimension with $(L,R)=(\cos\phi,\sin\phi)$,
and $\phi$ uniformly distributed on $[0,2\pi)$ ---
has been introduced \cite{Kac} as a caricature of the Boltzmann equation
to investigate the propagation of chaos in particle systems.
In multiple space dimensions,
\emph{Maxwell molecules} \cite{Bobylev} are the commonly accepted canonical ``interpolation''
between the simplified Kac model and the genuine Boltzmann equation.
Also in the Maxwell case, one uses 
linear rules of the type \eqref{eq.rules}, which is the so-called $\omega$-representation of collisions,
and a uniform collision frequency, just as we do here.
The difference is that for Maxwell molecules, 
the probability distribution of the coefficients $(L,R)$ typically depends on the spatial direction of $v_*-v$.
(Maxwell molecules \emph{without} this dependence are a special case of the model considered here,
see Section \ref{sct.maxwell}.)

Our approach is not an attempt to bridge the gap between the Kac model and the physical Boltzmann equation in an alternative way.
Rather, we introduce a novel class of binary interaction models, which are reminiscent of the Kac equation,
but with distinguished features that makes this class interesting on its own right.
We emphasize that there is a huge freedom in the choice of the matrix coefficients $(L,R)$.
The theory that we develop does not require the conservation of momentum or energy in binary collisions,
but only the much weaker stochastic condition \eqref{eq.keepenergy}.
Neither do we ask for any covariance under rotations of $\setR^d$.
Binary particle collisions without momentum/energy conservation have been used before,
for instance to combine particle collisions and the influence of a heat bath into a single interaction mechanism \cite{CorCarTos},
as well as in applications of kinetic theory to simple models for wealth distribution \cite{DurMatTos,ParTos}.
The violation of rotational symmetry, on the other hand, can be thought of as resulting from a non-isotropic background,
which induces an a priori preference for certain post-collisional directions on the interacting particles.
As a consequence of its high degree of freedom, 
there is a huge variety of potential stationary states for the velocity distributions,
the details of which depend sensitively on the model parameter.
These stationary distributions might possess high energy tails of Pareto type,
and are not necessarily rotationally symmetric.

For the development of the theory, we continue in the spirit of various recent works,
like \cite{BasLadReg,BobCerGam,GabettaRegazziniCLT,GabettaRegazziniWM},
which have been concerned with the applicability of probabilistic methods 
to study the equilibration in kinetic systems.
Specifically, our approach parallels in large parts the one taken recently in \cite{BasLad,BasLadMat} 
for the one-dimensional reduction of \eqref{eq.b}.
In one spatial dimension, when $L$ and $R$ are simply non-negative random numbers $\ell$ and $r$,
it has been proven that the conditions
\begin{align}
  \label{eq.1d1}
  \E[\ell^2+r^2] &= 1, \quad \text{and}\\
  \label{eq.1d2}
  \E[\ell^p+r^p] &< 1 \quad \text{for some $p>2$},
\end{align}
are sufficient for existence and uniqueness of a stationary distribution $\mu_\infty$ of finite second moment.
Moreover, if \eqref{eq.1d1} and \eqref{eq.1d2} are satisfied, 
then finiteness of the second moment of the initial condition $\mu_0$ is sufficient 
for (weak) convergence of the transient solution $\mu(t)$ to $\mu_\infty$ as $t\to\infty$.

Here, we are going to prove an analogous statement in the multi-dimensional situation $d>1$.
However, the substitutes for conditions \eqref{eq.1d1} and \eqref{eq.1d2} are not obvious.
In fact, the most straight-forward generalizations of these conditions to matrices $L$ and $R$ would not work.
Moreover, our proof needs a third condition, see \eqref{eq.ass2} below.
We are now going to state and briefly discuss the individual conditions.

The first condition, which parallels \eqref{eq.1d1}, accounts for our hypothesis \eqref{eq.keepenergy}  
that the second absolute moment is constant in time.
\begin{assum}
  \label{ass.1}
  The matrix coefficients $L$ and $R$ satisfy
  \begin{align}
    \label{eq.general}
    \E[L^TL+R^TR] = \eins ,
  \end{align}
  where $\eins\in\setR^{d\times d}$ denotes the unit matrix. 
\end{assum}
It is tempting to substitute \eqref{eq.1d2} by the condition
\begin{align*}
  \E[\|L\|^p+\|R\|^p]< 1
\end{align*}
with, say, the operator norm $\|A\|=\sup_{|\ee|=1} |A\ee|$.
In general, however, this condition would be too strict.
It is not even satisfied in the paradigmatic example of Maxwell molecules 
discussed in Section \ref{sct.maxwell}.
Instead, the following appears to be the most appropriate substitute.
\begin{assum}
  \label{ass.3new}
  There exist real numbers $p>2$, $\bar\weight\ge1$ and $\kappa_p\in(0,1)$,
  and a \emph{weight function} $\weight:\setR^d\to\setR$ with the following properties:
  for every $\eta\in\setR^d$,
  \begin{align}
    \label{eq.phomogen}
    & \weight(\lambda\eta)=\lambda^p\weight(\eta) \quad \text{for all $\lambda\geq0$}, \\
    \label{eq.boundass}
    & |\eta|^p \leq \weight(\eta) \leq \bar \weight |\eta|^p  , \quad \text{and} \\
    \label{eq.weirdass}
    & \E[\weight(L^T\eta) + \weight(R^T\eta)] \leq \kappa_p \,\weight(\eta).
  \end{align}
  We refer to \eqref{eq.phomogen} and \eqref{eq.boundass}
  as \emph{$p$-homogeneity} and \emph{boundedness} of $\weight$, respectively.
\end{assum}
This condition is employed at various points of the discussion;
in combination with Assumption \ref{ass.2} below,
it provides contractivity estimates in suitable metrics and weak compactness of transient solutions.
Notice that $\weight(\xi)=|\xi|^p$ defines a $p$-homogeneous and bounded weight function.
In Section \ref{sct.cross} we give a multi-dimensional example
in which a more complicated weight function is needed in order to satisfy \eqref{eq.weirdass}.

Our last hypothesis has no analogue in one dimension.
It guarantees that the multi-dimensional model cannot be decomposed into a family of one-dimensional sub-systems. 
\begin{assum}
  \label{ass.2}
  The map $\tq$, defined on symmetric matrices $M\in\setR^{d\times d}$ by
  \begin{align}
    \tq(M) = \E[LML^T+RMR^T],
  \end{align}
  possesses a fixed point $\Sigma_*=\tq(\Sigma_*)$
  that is positive definite and has $\tr\Sigma_*=d$.
  Moreover, there is a constant $\kappa<1$ 
  such that
  \begin{align}
    \label{eq.ass2}
    \| \tq(M) \|_\infty \leq \kappa \|M\|_\infty .
  \end{align}
  holds for every \emph{traceless} symmetric $d\times d$-matrix $M$.
\end{assum}
Here $\|M\|_\infty$ denotes the spectral radius of the matrix $M$;
see Section \ref{sct.norms} for its definition.
An immediate implication of \eqref{eq.ass2} --- in combination with \eqref{eq.general} --- 
is that the fixed points of $\tq$ are scalar multiples of $\Sigma_*$.
This condition is needed to ensure that all ``directional information'' of the initial condition 
is lost in the long-time limit.

Assumption \ref{ass.2} --- although not very intuitive --- appears to be essential
for the development of the theory.
It accounts for the fact that we are \emph{not} working in the one-dimensional or radially symmetric situation,
like in \cite{BasLadMat} or \cite{BobCerGam}, respectively.
Indeed, one could think of a straight-forward generalization 
of the one-dimensional/radially symmetric concepts from \cite{BasLadMat} or \cite{BobCerGam} 
to general solutions in $\setR^d$ by using
\begin{align*}
  L=\ell\eins, \quad R=r\eins, 
\end{align*}
where $\ell$ and $r$ are non-negative random variables satisfying \eqref{eq.1d1}\&\eqref{eq.1d2}.
Clearly, while Assumptions \ref{ass.1} and \ref{ass.3new} hold in this case,
Assumption \ref{ass.2} is violated,
since
\begin{align*}
  \| \E[LML^T+RMR^T] \|_\infty = \| \E[\ell^2+r^2]M \|_\infty = \|M\|_\infty
\end{align*}
for every matrix $M$.
By the results of \cite{BasLadMat}, one can associate
to every one-dimensional linear subspace $\operatorname{span}(w)\subset\setR^d$ with $0\neq w\in\setR^d$
a non-trivial stationary distribution $\mu_\infty^w$,
which attracts all transient solutions $\mu(t)$ to \eqref{eq.b} 
that are initially supported on $\operatorname{span}(w)$ and have unit second moment.
Consequently, there are infinitely many stationary distributions.

We shall now summarize our main results.
Below, we mean by \emph{temperature} of a centered probability distribution $\mu$ on $\setR^d$
the quantity
\begin{align*}
  \tmp[\mu] = \frac1d \int_{\setR^d} |v|^2\mu(\dd v).
\end{align*}
The normalization is chosen such that
the Gaussian distribution with unit covariance matrix is of unit temperature.
%

\begin{thm}
  \label{thm.main}
  Under Assumptions 1--3,
  equation \eqref{eq.b} possesses precisely one stationary distribution $\mu_\infty=Q_+[\mu_\infty]$
  that is centered and of unit temperature.
  This distribution $\mu_\infty$ is a scale mixture of Gaussians, i.e.,
  \begin{align}
    \label{eq.fmix}
    \int_{\setR^d} \psi(v) \mu_\infty(\dd v)
    =\E \left [\int_{\setR^d} \psi\big(\sqrt{S}w) \frac{e^{-\frac{1}{2}|w|^2 }}{(2\pi)^{d/2}} \dd w \right ]
  \end{align}
  for every test function $\psi \in C_b(\setR^d)$, 
  where $S$ is a symmetric and positive semi-definite random matrix in $\setR^{d\times d}$,
  and $\sqrt{S}$ denotes its positive semi-definite square root. 
  Moreover, the $p$th absolute moment of $\mu_\infty$ is finite.

  If $\mu_0$ is a centered initial distribution with finite temperature,
  then the corresponding transient solution $\mu(t)$ of the initial value problem \eqref{eq.b} converges weakly
  to a non-trivial stationary distribution $\mu_*$,  and $\mu_*$ is obtained from $\mu_\infty$ 
  by the unique isotropic rescaling for which $\tmp[\mu_*]=\tmp[\mu_0]$.

  If in addition $\prb\{L^T\ee=R^T\ee=0\}=0$ and $\prb\{L^T\ee\neq0\neq R^T\ee\}>0$ for every unit vector $\ee$,
  then $\mu_\infty$ is absolutely continuous.
  In this case, if $\mu_0$ is a centered initial distribution then the corresponding transient solution $\mu(t)$ converges weakly
  to $\mu_*$ if and only if $\mu_0$ has finite temperature.
\end{thm}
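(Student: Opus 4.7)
The plan is to pass to characteristic functions and read the equation as a stochastic smoothing transform. Writing $\widehat\mu$ for the Fourier transform, the stationary equation reads
\[
 \widehat\mu(\xi)=\E\bigl[\widehat\mu(L^T\xi)\,\widehat\mu(R^T\xi)\bigr],
\]
equivalently the fixed point $X\distreq LX_1+RX_2$ for $X\sim\mu$ and independent copies $X_1,X_2$ of $X$, themselves independent of $(L,R)$. Iterating $k$ times attaches to a complete binary tree of depth $k$ a family $\{M_j^{(k)}\}_{j=1}^{2^k}$ of random $d\times d$ matrix products built from i.i.d.\ copies of $L,R$ along the branches, so that
\[
 \widehat\mu(\xi)=\E\Bigl[\prod_{j=1}^{2^k}\widehat\mu\bigl((M_j^{(k)})^T\xi\bigr)\Bigr]
\]
at the stationary level, and a Wild-type expansion over a Yule tree gives the analogous representation for the transient solution of \eqref{eq.b}.

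First I would establish existence, uniqueness, and finiteness of the $p$th moment of $\mu_\infty$ via contraction in the weighted Fourier distance
\[
 d_\weight(\mu,\nu)=\sup_{\xi\neq0}\frac{|\widehat\mu(\xi)-\widehat\nu(\xi)|}{\weight(\xi)},
\]
which, by \eqref{eq.boundass} and $p>2$, is finite on any class of centered measures with common covariance and finite $p$th moment. Bi-linearity of $Q_+$ and \eqref{eq.weirdass} yield
\[
 d_\weight(Q_+[\mu],Q_+[\nu])\le\kappa_p\,d_\weight(\mu,\nu),
\]
while a bound $\int\weight(v)\,\mu(t,\dd v)\le C$ is propagated uniformly in $t$ through the evolution by the same \eqref{eq.weirdass} in combination with \eqref{eq.phomogen}. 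To pin down the target covariance $\Si_*$, note that the covariance $\Si(t)$ of any centered solution evolves by $\Si'=\tq(\Si)-\Si$: Assumption~\ref{ass.1} preserves its trace and Assumption~\ref{ass.2} makes the traceless part decay exponentially, so $\Si(t)\to\tmp[\mu_0]\,\Si_*$. Banach's fixed point theorem applied on centered measures with covariance $\Si_*$ then produces $\mu_\infty$.

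For the Gaussian-scale-mixture representation \eqref{eq.fmix}, set
\[
 S_k=\sum_{j=1}^{2^k}M_j^{(k)}\,\Si_*\,(M_j^{(k)})^T.
\]
Since $\tq(\Si_*)=\Si_*$, $(S_k)$ is a positive-semidefinite matrix martingale with mean $\Si_*$, hence $S_k\to S_\infty$ almost surely and in $L^1$. The iterated bound
\[
 \E\sum_{j=1}^{2^k}\weight\bigl((M_j^{(k)})^T\xi\bigr)\le\kappa_p^k\,\weight(\xi)\longrightarrow 0
\]
together with $\weight\ge|\cdot|^p$ and $p>2$ forces $\max_j|(M_j^{(k)})^T\xi|\to 0$ in probability. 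Inserting the second-order expansion $\widehat\mu(\eta)=1-\half\,\eta^T\Si_*\eta+o(|\eta|^2)$ into the product formula and using the martingale convergence of $S_k$ yields $\widehat\mu(\xi)=\E\bigl[\exp(-\half\,\xi^T S_\infty\xi)\bigr]$, which is \eqref{eq.fmix} with $S=S_\infty$. Weak convergence $\mu(t)\to\mu_*$ for finite-temperature $\mu_0$ then follows from the $d_\weight$-contraction, the covariance analysis above, and a mollification/truncation reducing arbitrary finite-temperature data to the class where $d_\weight$ is finite.

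For the final block I would show $S_\infty>0$ almost surely, whence absolute continuity of $\mu_\infty$ by dominated convergence in \eqref{eq.fmix}. The argument is directional: $\ee^T S_\infty\ee=0$ would force $\ee^T M_j^{(k)}=0$ for every leaf $j$ at every depth $k$; this is incompatible with $\prb\{L^T\ee=R^T\ee=0\}=0$ at any single branching, and a Borel--Cantelli argument using $\prb\{L^T\ee\neq0\neq R^T\ee\}>0$ to ensure proliferation of non-vanishing descendants of any surviving direction yields $\ee^T S_\infty\ee>0$ a.s., which lifts to $S_\infty>0$ a.s.\ by a compactness argument on the unit sphere. For the ``only if'' direction, if $\mu_0$ has infinite temperature but $\mu(t)\to\mu_*$ weakly, the Wild representation exhibits $\mu(t)$ as a tree mixture whose low-depth leaves retain non-negligible weight of the heavy tails of $\mu_0$, contradicting the finite $p$th moment of the target $\mu_*$. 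I expect this tail analysis to be the main technical hurdle, and would carry it out following the pattern developed for the one-dimensional case in \cite{BasLadMat}.
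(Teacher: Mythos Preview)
Your overall strategy---Fourier contraction plus a tree/martingale representation---is broadly aligned with the paper, and your derivation of the Gaussian-mixture form via the martingale $S_k=\sum_j M_j^{(k)}\Sigma_*(M_j^{(k)})^T$ is in fact a nice alternative to what the paper does: the paper \emph{starts} from the ansatz \eqref{eq.scalemix}, reduces to the matrix smoothing equation $\law(S)=\law(LS'L^T+RS''R^T)$, and then constructs its solution by a contraction argument on probability measures on $\smat_+$ (Theorem~\ref{thm.exuniq}). Your route---first obtain $\mu_\infty$ by contraction on $\setR^d$-measures, then identify it as a Gaussian mixture by a CLT along the binary tree---is the paper's Section~\ref{sct.dynamic} logic transplanted to the stationary problem; either way works.

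There are, however, two genuine gaps. First, your metric $d_\weight$ is finite only between measures with the \emph{same} covariance (the Taylor remainder of $\hat\mu-\hat\nu$ is $O(|\xi|^p)$ only after the second-order terms cancel). Restricting to covariance $\Sigma_*$ is fine for constructing $\mu_\infty$, but for the transient problem $\cov\mu(t)$ is \emph{never} equal to $\Sigma_*$ at finite time, so $d_\weight(\mu(t),\mu_\infty)=\infty$ and your ``$d_\weight$-contraction plus covariance analysis'' does not compose. The paper resolves this by using the augmented distance $\newd_\weight$ which carries an explicit $\alpha\|\cov\mu_1-\cov\mu_2\|_\infty$ term and is jointly contractive (Lemma~\ref{lem.fcontract}); this is where Assumption~\ref{ass.2} enters on the dynamic side. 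Relatedly, your ``mollification/truncation reducing arbitrary finite-temperature data'' to finite $p$th moment is exactly the hard step: the Fourier-contraction route (Theorem~\ref{thm.fourierconverge}) needs a $(2+\sigma)$th moment, and for initial data with \emph{only} finite second moment the paper gives a separate probabilistic proof via the Lindeberg condition (Lemma~\ref{lemmalindeberg} and Theorem~\ref{thm.clt}). A truncation argument here is not routine because you must control how the Boltzmann evolution propagates the truncation error.

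Second, your absolute-continuity argument has a real hole: from ``$\ee^TS_\infty\ee>0$ a.s.\ for each fixed $\ee$'' you cannot conclude ``$S_\infty>0$ a.s.'' by a compactness argument on the sphere, since an uncountable union of null sets need not be null (and the putative kernel direction is $\omega$-dependent). The paper avoids this entirely: in Lemma~\ref{lem.psitozero} it works directly with $\Psi(\xi)=\E[\exp(-\tfrac12\xi^TS\xi)]$, shows by the fixed-point equation and upper-semicontinuity that $\theta:=\sup_{|\ee|=1}\lim_{\lambda\to\infty}\Psi(\lambda\ee)$ satisfies $\theta\le q\theta^2+(1-q)\theta$ with $q>0$, whence $\theta=0$ (the alternative $\theta=1$ is ruled out by positive definiteness of $\Sigma_*$), and then upgrades this to uniform decay of $\Psi$ along rays. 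For the infinite-temperature ``only if'' direction the paper's argument (Section~\ref{sct.explosion}) is also more delicate than your sketch suggests: it truncates $\mu_0$ to a large ball, compares via the quantitative Fourier estimate, and sends the truncation radius to infinity; the Wild representation alone does not give it.
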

Apart from Theorem \ref{thm.main},
we prove results on the Sobolev-regularity of $\mu_\infty$'s density (Theorem \ref{thm.smooth}),
on the rate of convergence of $\mu(t)$ to $\mu_\infty$ in Fourier metrics (Theorem \ref{thm.fourierconverge}),
and on the divergence of higher moments (Proposition \ref{prp.moments}).

Finally, we outline the structure of the paper and give an overview on the applied techniques.

Section \ref{sct.prelim} collects various definitions and preliminary observations.
In particular, we introduce a family of matrix norms. 

Section \ref{sct.stationary} is concerned with the existence and certain properties 
of the stationary distribution $\mu_\infty$.
Assuming that $\mu_\infty$ is of the form \eqref{eq.fmix},
we first show that the random matrix $S$ needs to satisfy the fixed point equation
\begin{align}
  \label{eq.smoothing}
 \law(S) = \law(LS'L^T + RS''R^T) .
\end{align}
The right-hand side of \eqref{eq.smoothing} 
--- where $S'$ and $S''$ are i.i.d.\ copies of $S$, which are independent of $(L,R)$ ---
is a multi-dimensional generalization of a smoothing transformation.
For (scalar) random variables, 
fixed points of such transformations have been intensively studied in the literature,
see e.g., \cite{DurrettLiggett1983,iksanov,Liu1998,Liu2000}.
Some multidimensional generalizations of the smoothing transformation  have been treated  in 
\cite{Barral,Neininger}, although there are apparently no results on equations of the form \eqref{eq.smoothing}.

In Theorem \ref{thm.exuniq}, we obtain the (essentially unique) solvability of \eqref{eq.smoothing}
under Assumptions 1--3.
The proof uses a generalization of \emph{Fourier metrics}.
Originally, these define a distance between probability measures on $\setR^d$
from a suitably weighted difference of their respective Fourier transforms;
see \cite{CarTos} for a survey.
The basic ideas have been developed in \cite{Bobylev} and \cite{GabTosWen},
and since then applied --- with appropriate modifications --- in various context
related to homogeneous kinetic equation of Kac type, see e.g. \cite{BobCerGam,CarGabTos,DurMatTos}.
Our generalization to probability measures on matrices uses the same basic concept,
but is technically more involved.

Subsequently, results on symmetries, moments and further properties of $\mu_\infty$ are derived.
The most interesting auxiliary result is probably that on the existence and regularity of a density.
Since $\mu_\infty$ is not single Gaussian in general, but a mixture of the form \eqref{eq.smix},
it might possess concentrations.
In one spatial dimension, one can generally decompose any such scale mixture
into a singular measure concentrated at the origin,
and a regular measure with a density that is $C^\infty$ on $\setR\setminus\{0\}$,
but in general only $L^1$ on $\setR$.
We refer to \cite{MatTos} for a discussion on the Sobolev regularity of stationary states,
and on the propagation of smoothness in \eqref{eq.b} with \eqref{eq.gain},
and to \cite{CarGabTos,DesFurTer,FPTT} for closely related results on Maxwellian molecules. 

The situation is more difficult in multiple dimensions, since
concentrations can occur not only at the origin,
but on any lower-dimensional linear subspace, or on the union of subspaces.
A non-trivial example is given in Section \ref{sct.cross}.
We do not discuss the variants of possible concentrations in detail, 
but instead provide a sufficient condition for absolute continuity of the stationary distribution in Theorem \ref{thm.smooth}.
Its proof is a multi-dimensional extension
of the particularly elegant approach made in \cite{Liu}.

In Section \ref{sct.dynamic}, 
we prove that arbitrary solutions $\mu(t)$ to the initial value problem \eqref{eq.b} converge to $\mu_\infty$
if they are initially of unit temperature.
We use two approaches that lead to slightly different results.

The first result, see Theorem \ref{thm.clt}, is obtained by probabilistic methods.
The key element is a stochastic representation of $\mu(t)$ as the law of a weighted sum of independent random variables.
The long-time asymptotics of $\mu(t)$ can then be obtained
by studying the convergence of these weighted sums.
The latter is conveniently done in the framework of the central limit theorem.
The idea goes back essentially to McKean \cite{McKean1966,McKean1967}, 
and has been brought to full power in the recent works \cite{DolGabReg,dolera2}.
We are able to adapt large parts of the general strategy 
--- that has been successfully employed in a related one-dimensional situation \cite{BasLadMat} ---
to our current needs.
The result is \emph{qualitative} as it gives no direct information on the rate of convergence of $\mu(t)$ to $\mu_\infty$;
the advantage of this proof is that it works under the minimal hypothesis on the initial condition $\mu_0$, 
namely finiteness of the second moment.

The second result on convergence in Theorem \ref{thm.fourierconverge}
makes the previous statement \emph{quantitative} as it provides an exponential rate for 
the convergence of $\mu(t)$ to $\mu_\infty$ in Fourier distance.
The additional requirement is finiteness of the $p'$th absolute moment of $\mu_0$ for some $p'>2$,
and the exponential rate depends sensitively on $p'$.
The proof, which closely follows a by now classical strategy, is only sketched.

The proofs are carried out for an initial datum of unit temperature, $\tmp[\mu_0]=1$,
in which case the associated transient solution converges weakly towards the stationary distribution $\mu_\infty$
with covariance matrix $\cov\mu_\infty=\Sigma_*$, the fixed point of $\tq$.
A simply scaling argument shows that 
if $\vartheta:=\tmp[\mu_0]$ is not one but finite, 
then the transient solution still converges to a stationary distribution $\mu_\infty^\vartheta$,
which is simply a rescaling of $\mu_\infty$, with covariance matrix $\cov\mu_\infty^\vartheta=\vartheta\Sigma_*$.

In Section \ref{sct.explosion}, we analyze the limiting case $\tmp[\mu_0]=+\infty$.
The transient solution $\mu(t)$ exists for all times, but it does not converge weakly to a limit measure as $t\to\infty$.
Loosely speaking, $\mu(t)$ concentrates at infinity instead.
This statement has been made precise in the case of the spatially homogeneous Boltzmann equation 
for pseudo Maxwellian molecules in \cite{CaGaReExplosion} and for the Kac equation in \cite{explosion1}:
the restriction of $\mu(t)$ to any bounded set in phase space vanishes in the long time limit.

In Theorem \ref{thm.explosion}, we prove --- based on the ideas from \cite{CaGaReExplosion,explosion1} ---
that the situation is the same for the class of equations considered here.
The main idea is to consider a restriction $\bar\mu_0$ of the initial condition $\mu_0$ to a large ball,
such that $\bar\mu_0$ has finite but huge temperature $\vartheta = \tmp[\bar\mu_l] \gg 1$,
and compare the original transient solution $\mu(t)$ 
to the one associated to the initial value $\bar\mu_0$.
Combining the probabilistic representation with the quantitative convergence results from Section \ref{sct.dynamic},
we thus obtain an upper bound on the density of $\mu(t)$ --- for sufficiently large $t$ ---
by a multiple of the rescaling $\mu_\infty^\vartheta$ of $\mu_\infty$.
For $\vartheta\to\infty$, this upper bound vanishes,
and consequently, the density of $\mu(t)$ must converge locally uniformly to zero as $t\to\infty$.

Finally, Section \ref{sct.examples} provides several examples to which the developed theory applies.

\section{Notations and preliminary results}
\label{sct.prelim}

\subsection{Conventions}
For simplicity, we shall assume throughout the paper 
that the random coefficients $L$ and $R$ are symmetric in the sense
that $(L,R)\distreq(R,L)$.
Above and in the rest of the paper, $Z\distreq Z'$ means
that the random elements $Z$ and $Z'$ have the same distribution, i.e. $\law(Z)=\law(Z')$.

\subsection{Spaces of probability measures}\label{sct.norms}
By $\meas(X)$, we denote the space of probability measures on $X$.
Introduce the class $\cls_1(\setR^d)$ of centered probability measures on $\setR^d$,
\begin{align*}
  \cls_1(\setR^d):=\bigg\{ \mu\in\meas(\setR^d)\bigg|\int_{\setR^d} v\,\mu(\dd v)=0  \bigg\},
\end{align*}
and for $q\geq2$ the subclasses
\begin{align*}
  \cls_q(\setR^d):=\bigg\{ \mu\in\cls_1(\setR^d)\,\bigg|
  \int_{\setR^d}|v|^2\, \mu(\dd v)=d,\,\int_{\setR^d} |v|^q\, \mu(\dd v)<\infty \bigg\}
\end{align*}
of measures with unit temperature and finite $q$th absolute moment.
The natural topology on $\cls_q(\setR^d)$ is the one induced by weak convergence of probability measures,
plus convergence of the $q$th absolute moment.

As usual, we denote by the $\setR^{d\times d}$ the set of all real $d\times d$-matrices.
Introduce the subsets of symmetric and positive semi-definite matrices by
\begin{align*}
  \smat = \big\{\Si \in\setR^{d\times d} \big| \Si^T=\Si \big\}, 
  \quad
  \smat_+= \big\{\Si \in\smat \big| v^T \Si v \geq 0\, \text{ for all $v\in\setR^d$} \big\},
\end{align*}
and define accordingly for $p\ge2$
\begin{align*}
  \cls_p(\smat_+) = \left \{ \nu\in\meas(\smat_+) \bigg|  \int_{\smat_+} 
    (\tr\Si)\,\nu( \dd \Si) =d,\, \int_{\smat_+} (\tr \Si)^{\frac{p}{2}} \,\, \nu (\dd \Si)<\infty \right \} ,
\end{align*}
the probability measures on $\smat_+$ with finite $p$th absolute moment and normalized first moment.

\subsection{Matrix norms}
Several different norms for matrices will be used in the sequel.
First recall that a real symmetric matrix $\Si \in\smat$ always possesses a spectral decomposition of the form
\begin{align*}
  \Si = \sum_{j=1}^d \lambda_j\ee_j\ee_j^T,
\end{align*}
where $\{\ee_j\}_{j=1}^d$ is an orthonormal basis of $\setR^d$ of eigenvectors of $\Si$,
and the $\lambda_j$ are the corresponding (real) eigenvalues.
For $q\ge1$, introduce the $q$th matrix norm by
\begin{align*}
  \|\Si\|_q := \Big( \sum_{j=1}^d |\lambda_j|^q \Big)^{1/q},
\end{align*}
as well as
\begin{align*}
  \|\Si\|_\infty := \max_{i=1,...,d} |\lambda_i| = \sup_{|\ee|=1} |\ee^T\Si\ee|.
\end{align*}
All norms $\|\cdot\|_q$ are equivalent.
In particular, the 2-norm is induced by the scalar product
\begin{align*}
  \tr(\Xi^T \Si) = \sum_{i,j=1}^d \Xi_{ij}\Si_{ij}.
\end{align*}
The following submultiplicativity estimates will be used,
\begin{align}
  \label{eq.matrixholder}
  |\tr(\Xi^T\Si)| &\le \|\Xi\|_1 \|\Si\|_\infty, \\
  \label{eq.submult}
  |\tr(A^T\Si A)| &\leq \|\Si\|_2\|A\|_2^2 , \\
  \label{eq.submult2}
  |v^T\Si w| &\le \|\Si\|_\infty |v| |w|,
\end{align}
which hold for arbitrary $\Si,\Xi\in\smat$, $A\in\setR^{d\times d}$ and $v,w\in\setR^d$.

\subsection{Weight functions}
In several proofs, we shall work under the simplifying hypothesis 
that Assumption \ref{ass.3new} holds with $p\in(2,3)$.
We shall now show that there is no loss of generality in doing so.
\begin{lem}
  \label{lem.prime}
  For every $p'\in(2,p)$, Assumption \ref{ass.3new} is satisfied with the modified weight function
  \begin{align*}
    \weight'(\xi) = \zeta_-^{-\epsilon}\weight(\xi)^{1-\epsilon}(\xi^T\Sigma_*\xi)^\epsilon, \quad 
    \text{where} \quad \epsilon:=\frac{p-p'}{p-2} \in(0,1),
  \end{align*}  
  and constants $\bar\omega'=(\zeta_+/\zeta_-)^\epsilon\bar\omega^{1-\epsilon}\ge1$, 
  $\kappa_{p'}=\kappa_p^{1-\epsilon}\in(0,1)$,
  where $\zeta_+,\zeta_-\in\setR_+$ are the biggest and smallest eigenvalue of $\Sigma_*$, respectively.
\end{lem}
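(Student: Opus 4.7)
The plan is to verify the three defining properties of Assumption \ref{ass.3new} one by one for $\weight'$ with exponent $p'$, exploiting the fact that $\weight'$ has been engineered as a geometric interpolation between the original $p$-homogeneous weight $\weight$ and the quadratic form $\xi\mapsto \xi^T\Sigma_*\xi$. The prefactor $\zeta_-^{-\epsilon}$ is a cosmetic correction chosen solely to make the lower pointwise bound come out with constant exactly $1$.

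The $p'$-homogeneity in \eqref{eq.phomogen} is an immediate exponent count: $\weight$ contributes $\lambda^{p(1-\epsilon)}$ and $\xi^T\Sigma_*\xi$ contributes $\lambda^{2\epsilon}$, and the total exponent is $p(1-\epsilon)+2\epsilon = p-(p-2)\epsilon = p'$ by the choice of $\epsilon$. For the pointwise bounds \eqref{eq.boundass}, I would combine the inequalities $|\xi|^p \le \weight(\xi) \le \bar\weight|\xi|^p$ with the eigenvalue sandwich $\zeta_-|\xi|^2 \le \xi^T\Sigma_*\xi \le \zeta_+|\xi|^2$; raising these to the powers $1-\epsilon$ and $\epsilon$ respectively and multiplying by $\zeta_-^{-\epsilon}$ yields $|\xi|^{p'} \le \weight'(\xi) \le (\zeta_+/\zeta_-)^\epsilon\bar\weight^{1-\epsilon}|\xi|^{p'}$, which gives exactly the claimed constant $\bar\weight'$.

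The main step, and the only one with any real content, is the contraction estimate \eqref{eq.weirdass}. The plan is to apply H\"older's inequality twice in succession. First, for each of the two summands $\weight'(L^T\eta)$ and $\weight'(R^T\eta)$ separately, I would bound
\[
\E\bigl[\weight(L^T\eta)^{1-\epsilon}(\eta^T L\Sigma_* L^T\eta)^\epsilon\bigr] \le \bigl(\E[\weight(L^T\eta)]\bigr)^{1-\epsilon}\bigl(\E[\eta^T L\Sigma_* L^T\eta]\bigr)^\epsilon
\]
(and analogously with $R$) via H\"older with conjugate exponents $1/(1-\epsilon)$ and $1/\epsilon$. Second, I would apply the discrete H\"older inequality $a^{1-\epsilon}b^\epsilon + c^{1-\epsilon}d^\epsilon \le (a+c)^{1-\epsilon}(b+d)^\epsilon$ to combine the $L$- and $R$-contributions. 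At that point the contraction hypothesis \eqref{eq.weirdass} for $\weight$ gives $\E[\weight(L^T\eta)+\weight(R^T\eta)] \le \kappa_p\weight(\eta)$, while the fixed point property $\tq(\Sigma_*)=\Sigma_*$ from Assumption \ref{ass.2} gives $\E[\eta^T(L\Sigma_* L^T+R\Sigma_* R^T)\eta]=\eta^T\Sigma_*\eta$. Multiplying these two bounds in the appropriate fractional powers reassembles $\kappa_p^{1-\epsilon}\weight(\eta)^{1-\epsilon}(\eta^T\Sigma_*\eta)^\epsilon = \kappa_p^{1-\epsilon}\zeta_-^\epsilon\weight'(\eta)$, and after cancelling the prefactor $\zeta_-^{-\epsilon}$ one lands exactly on $\kappa_{p'}\weight'(\eta)$ with $\kappa_{p'}=\kappa_p^{1-\epsilon}\in(0,1)$.

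There is no serious obstacle here: the only subtlety is to choose the order of the two H\"older applications so that both hypotheses can be invoked in the form they are given (one with the expectation on the outside, one summed over $L$ and $R$). Once this bookkeeping is right, every exponent and constant falls into place automatically from the definition of $\epsilon$.
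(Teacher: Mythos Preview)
Your argument is correct and follows the same core idea as the paper: interpolate via H\"older between the $p$-contraction \eqref{eq.weirdass} for $\weight$ and the fixed-point identity $\tq(\Sigma_*)=\Sigma_*$ for the quadratic form. The only difference lies in how the $L$- and $R$-contributions are merged. The paper invokes the standing symmetry convention $(L,R)\distreq(R,L)$ to collapse the sum to $2\E[\weight'(L^T\eta)]$, applies a single probabilistic H\"older, and then unfolds via symmetry again. You instead keep both terms, apply H\"older in the expectation to each separately, and then combine them with the two-point discrete H\"older inequality $a^{1-\epsilon}b^\epsilon+c^{1-\epsilon}d^\epsilon\le(a+c)^{1-\epsilon}(b+d)^\epsilon$. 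Your route is marginally more robust in that it does not rely on the symmetry assumption, at the cost of one extra (elementary) inequality; otherwise the two proofs are interchangeable.
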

\begin{proof}
  One has $\weight'(\lambda\xi)=\lambda^{p(1-\epsilon)}\lambda^{2\epsilon}\weight'(\xi)=\lambda^{p'}\weight'(\xi)$ 
  for all $\xi\in\setR^d$ and $\lambda>0$, showing \eqref{eq.phomogen}.
  And from $\zeta_-|\xi|^2 \le \xi^T\Sigma_*\xi \le \zeta_+|\xi|^2$ it follows that
  $(\zeta_-/\zeta_-)^\epsilon|\xi|^{p'}\leq\weight'(\xi)\leq(\zeta_+/\zeta_-)^\epsilon\bar\weight^{1-\epsilon}\zeta_+^\epsilon|\xi|^{p'}$,
  which is \eqref{eq.boundass}.
  Finally, by H\"older's inequality and the symmetry $(L,R)\distreq(R,L)$,
  \begin{align*}
    \E[\weight'(L^T\xi)+\weight'(R^T\xi)]
    = 2\E[\weight'(L^T\xi)] 
    & \leq \zeta_-^{-\epsilon}\big(2\E[\weight(L^T\xi)]\big)^{1-\epsilon}\big(2\E[\xi^TL\Sigma_*L^T\xi]\big)^\epsilon \\
    & = \zeta_-^{-\epsilon}\E[\weight(L^T\xi)+\weight(R^T\xi)]^{1-\epsilon}\big(\xi^T\E[L\Sigma_*L^T+R\Sigma_*R^T]\xi\big)^\epsilon \\
    & \leq \zeta_-^{-\epsilon}\kappa_p^{1-\epsilon}\weight(\xi)(\xi^T\Sigma_*\xi)^\epsilon = \kappa_p^{1-\epsilon}\weight'(\xi).
  \end{align*}
  This confirms \eqref{eq.weirdass}.
\end{proof}

From the weight function $\weight$ on $\setR$, 
define the following weight function on symmetric matrices $\Xi\in\smat$:
\begin{align}
  \label{eq.defww}
  \wweight(\Xi) = \inf \Big\{ \sum_{j=1}^d\weight(\xi_j) \,\Big|\, 
  \Xi = \sum_{j=1}^d \tau_j\xi_j\xi_j^T \text{ with }\xi_1,\ldots,\xi_d\in\setR^d,\,\tau_1,\ldots,\tau_d\in\{\pm1\} \Big\}.
\end{align}
The next lemma lists useful properties of $\wweight$;
recall the definition of $p>2$ from Assumption \ref{ass.3new}.
\begin{lem}
 The weight function $\wweight$
  \begin{itemize}
  \item[(i)] is homogeneous of degree $p/2$:
    \begin{align}
      \label{eq.wwhom}
      \wweight(\lambda\Xi) = \lambda^{p/2}\wweight(\Xi) \quad \text{for all $\Xi\in\smat,\,\lambda>0$};
    \end{align}
  \item[(ii)] is compatible with the usual matrix norms on the set $\smat$:
    \begin{align}
      \label{eq.wwcompatible}
      d^{-(p/2-1)} \|\Xi\|_1^{p/2} \leq \wweight(\Xi) \leq \overline\weight \|\Xi\|_1^{p/2} 
      \quad \text{for all $\Xi\in\smat$};
    \end{align}
 \item[(iii)] satisfies the following analogue of \eqref{eq.weirdass}:
    \begin{align}
      \label{eq.wweight}
      \E[\wweight(L^T\Xi L)+\wweight(R^T\Xi R)] \leq \kappa_p\wweight(\Xi) \quad \text{for all $\Xi\in\smat$}.
    \end{align}
  \end{itemize}
\end{lem}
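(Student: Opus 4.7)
The three parts are essentially consequences of the definition of $\wweight$ as an infimum over decompositions, combined with properties of $\weight$.

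For \textbf{(i)}, I would observe that if $\Xi=\sum_j\tau_j\xi_j\xi_j^T$ is an admissible decomposition, then for $\lambda>0$ the vectors $\xi_j':=\sqrt{\lambda}\,\xi_j$ (with the same signs $\tau_j$) give $\lambda\Xi=\sum_j\tau_j\xi_j'(\xi_j')^T$, and by $p$-homogeneity of $\weight$ one has $\sum_j\weight(\xi_j')=\lambda^{p/2}\sum_j\weight(\xi_j)$. Taking the infimum yields $\wweight(\lambda\Xi)\le\lambda^{p/2}\wweight(\Xi)$, and applying the same bound to $\lambda^{-1}(\lambda\Xi)$ gives the reverse inequality.

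For \textbf{(ii)}, the upper bound comes directly from the canonical decomposition via the spectral theorem: writing $\Xi=\sum_j\lambda_j\ee_j\ee_j^T=\sum_j\sgn(\lambda_j)(\sqrt{|\lambda_j|}\ee_j)(\sqrt{|\lambda_j|}\ee_j)^T$ and using the upper bound in \eqref{eq.boundass} gives $\wweight(\Xi)\le\bar\weight\sum_j|\lambda_j|^{p/2}$. Since $p/2\ge1$ and the $\lambda_j$ are real numbers, superadditivity of $t\mapsto t^{p/2}$ on $[0,\infty)$ gives $\sum_j|\lambda_j|^{p/2}\le\bigl(\sum_j|\lambda_j|\bigr)^{p/2}=\|\Xi\|_1^{p/2}$. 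For the lower bound, I start from any admissible decomposition $\Xi=\sum_j\tau_j\xi_j\xi_j^T$ and use the triangle inequality for $\|\cdot\|_1$ together with $\|\xi_j\xi_j^T\|_1=|\xi_j|^2$ to get $\|\Xi\|_1\le\sum_j|\xi_j|^2$. Jensen's inequality (or power-mean, using that there are $d$ terms) yields $\bigl(\sum_j|\xi_j|^2\bigr)^{p/2}\le d^{p/2-1}\sum_j|\xi_j|^p$, and the lower bound in \eqref{eq.boundass} then gives $d^{-(p/2-1)}\|\Xi\|_1^{p/2}\le\sum_j\weight(\xi_j)$. Taking the infimum finishes the claim.

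For \textbf{(iii)}, which is the key compatibility with $(L,R)$, the point is that admissible decompositions pull back nicely: for any $A\in\setR^{d\times d}$ and any decomposition $\Xi=\sum_j\tau_j\xi_j\xi_j^T$, one has $A^T\Xi A=\sum_j\tau_j(A^T\xi_j)(A^T\xi_j)^T$, and this is again an admissible decomposition (the number of vectors is still $d$, the signs unchanged). Applied to $A=L$ and $A=R$, this gives the pointwise bounds $\wweight(L^T\Xi L)\le\sum_j\weight(L^T\xi_j)$ and $\wweight(R^T\Xi R)\le\sum_j\weight(R^T\xi_j)$. Summing, taking expectations, and applying Assumption \ref{ass.3new} termwise yields
\begin{align*}
  \E\bigl[\wweight(L^T\Xi L)+\wweight(R^T\Xi R)\bigr]
  \le\sum_{j=1}^d\E\bigl[\weight(L^T\xi_j)+\weight(R^T\xi_j)\bigr]
  \le\kappa_p\sum_{j=1}^d\weight(\xi_j).
\end{align*}
Finally, choosing for each $\varepsilon>0$ a decomposition with $\sum_j\weight(\xi_j)\le\wweight(\Xi)+\varepsilon$ and letting $\varepsilon\downarrow0$ produces \eqref{eq.wweight}.

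The only mild obstacle I anticipate is making sure the lower bound constant $d^{-(p/2-1)}$ in \eqref{eq.wwcompatible} is sharp enough for the intended downstream use; the argument above relies on $\|\xi_j\xi_j^T\|_1=|\xi_j|^2$ and a Jensen step, both of which lose a factor only in $d$. The rest of the argument is essentially a bookkeeping exercise built on the fact that the class of admissible decompositions is closed under the conjugations $\Xi\mapsto A^T\Xi A$.
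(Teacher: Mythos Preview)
Your proposal is correct and follows essentially the same approach as the paper's proof: homogeneity via the bijection $\xi_j\mapsto\sqrt{\lambda}\xi_j$, the upper bound in (ii) via the spectral decomposition, the lower bound via $\|\Xi\|_1\le\sum_j|\xi_j|^2$ followed by Jensen, and (iii) via the pullback of decompositions under $\Xi\mapsto A^T\Xi A$ combined with an $\varepsilon$-argument. The only cosmetic difference is that for the lower bound in (ii) you invoke the triangle inequality for $\|\cdot\|_1$ together with $\|\xi_j\xi_j^T\|_1=|\xi_j|^2$, whereas the paper obtains the same inequality by expanding $\sum_k|\ee_k^T\Xi\ee_k|$ in the eigenbasis of $\Xi$.
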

\begin{proof}
  Relation \eqref{eq.wwhom} is an immediate consequence of the homogeneity of $\weight$
  and the fact that
  \begin{align*}
    \Xi = \sum_{j=1}^n\tau_j\xi_j\xi_j^T 
    \quad \Longleftrightarrow \quad 
    \lambda\Xi = \sum_{j=1}^n \tau_j\big(\sqrt\lambda\xi_j\big)\big(\sqrt\lambda\xi_j\big)^T.
  \end{align*}
  To prove the \emph{second} inequality in \eqref{eq.wwcompatible}, 
  let $\Xi\in\smat$ be given with spectral decomposition 
  \begin{align}
    \label{eq.specky}
    \Xi = \sum_{j=1}^d \lambda_j \ee_j\ee_j^T .
  \end{align}
  Choosing $\hat\xi_j=\sqrt{|\lambda_j|}\ee_j$ and $\hat\tau_j=\sgn\lambda_j$, 
  it follows by definition of $\wweight$ as infimum
  and in combination with \eqref{eq.boundass} that
  \begin{align*}
    \wweight(\Xi) \leq \sum_{j=1}^d \weight(\hat\xi_j) \leq \overline\weight\sum_{j=1}^d |\lambda_j|^{p/2} 
    \leq \overline\weight\bigg(\sum_{j=1}^d|\lambda_j| \bigg)^{p/2} 
    = \overline\weight \|\Xi\|_1^{p/2}.
  \end{align*}
  To obtain the first inequality in \eqref{eq.wwcompatible},
  choose $\Xi\in\smat$ and $\epsilon>0$,
  and let $\xi_1,\ldots,\xi_d\in\setR^d$ be such that
  \begin{align}
    \label{eq.epsdecompose}
    \Xi = \sum_{j=1}^d \tau_j\xi_j\xi_j^T \quad \text{and} \quad \sum_{j=1}^d\weight(\xi_j)<\wweight(\Xi)+\epsilon.
  \end{align}
  In combination with the spectral decomposition \eqref{eq.specky},
  one has
  \begin{align*}
    \|\Xi\|_1 = \sum_{k=1}^d |\lambda_k| = \sum_{k=1}^d |\ee_k^T\Xi\ee_k| 
    = \sum_{k=1}^d \bigg| \sum_{j=1}^d\tau_j(\ee_k^T\xi_j)^2 \bigg|
    \leq \sum_{j,k=1}^d (\ee_k^T\xi_j)^2 = \sum_{j=1}^d |\xi_j|^2.
  \end{align*}
  Then, using Jensen's inequality --- recall that $p/2>1$ --- and the boundedness \eqref{eq.boundass},
  \begin{align*}
    \|\Xi\|_1^{p/2} \leq d^{p/2-1} \sum_{j=1}^d |\xi_j|^p 
    \leq d^{p/2-1} \sum_{j=1}^d \weight(\xi_j) 
    = d^{p/2-1} \big(\wweight(\Xi)+\epsilon\big).
  \end{align*}
  Taking the limit $\epsilon\downarrow0$ shows \eqref{eq.wwcompatible}.
  Furthermore, again under hypothesis \eqref{eq.epsdecompose},
  one has for every matrix pair $(L,R)$ that
  \begin{align*}
    L^T\Xi L = \sum_{j=1}^d \tau_j(L^T\xi_j)(L^T\xi_j)^T, \quad 
    R^T\Xi R = \sum_{j=1}^d \tau_j(R^T\xi_j)(R^T\xi_j)^T.
  \end{align*}
  And consequently,
  \begin{align*}
    \E\big[\wweight(L^T\Xi L)+\wweight(R^T\Xi R)\big] 
    \leq \sum_{j=1}^d \E\big[\weight(L^T\xi_j)+\weight(R^T\xi_j)\big]
    \leq  \kappa_p\sum_{j=1}^d\weight(\xi_j) = \kappa_p\wweight(\Xi)+\kappa_p\epsilon,
  \end{align*}
  proving \eqref{eq.wweight} in the limit $\epsilon\downarrow0$.
\end{proof}

\section{The stationary problem}
\label{sct.stationary}
In this section, we prove --- under Assumptions 1 to 3 --- the existence 
of a non-degenerate distribution $\mu_\infty\in\cls_p(\setR^d)$ satisfying the stationary Boltzmann-equation $\mu_\infty=Q_+[\mu_\infty]$.
Also, we analyze the regularity of $\mu_\infty$ and investigate the finiteness of its moments above the second.
The existence theory is based on the ansatz \eqref{eq.scalemix} below;
we defer the proof of uniqueness for the stationary solution to Section \ref{sct.dynamic}.

\subsection{Existence of a stationary distribution}
Throughout this section, we work under the hypothesis that Assumption \ref{ass.3new} holds with some $p\in(2,3)$;
by Lemma \ref{lem.prime} above, there is no loss of generality in doing so.
Next, recall that the Fourier-Stieltjes transform $\hat\mu$ of a probability measure on $\setR^d$ 
is defined by
\[
\hat\mu (\xi) = \int_{\setR^d} e^{i\xi^T v} \mu(\dd v),
\]
for every $\xi\in\setR^d$.
Taking $\psi(v)=\exp(i\xi\cdot v)$ as a test function in \eqref{eq.gain},
we obtain the following Fourier representation $\widehat Q_+$ of the gain operator:
\begin{align}
  \label{eq.qfourier}
  \widehat Q_+[\hat\mu_1,\hat\mu_2] = \E\big[\hat\mu_1(L^T\xi)\hat\mu_2(R^T\xi)].
\end{align}
We prove the existence of a solution $\Psi=\hat\mu_\infty$ to the stationary equation
\begin{align}
  \label{eq.stationary}
  \Psi(\xi) = \E\big[\Psi(L^T\xi)\Psi(R^T\xi) \big]
\end{align}
which is a scale mixture of Gaussians,
\begin{align}
  \label{eq.scalemix}
  \Psi(\xi) = \E\big[\exp\big(-\half \xi^TS\xi\big) \big].
\end{align}
Here $S$ is a random matrix in $\smat_+$, whose distribution is to be determined.
\begin{rmk}
  The representation \eqref{eq.scalemix} is equivalent to \eqref{eq.fmix} in Theorem \ref{thm.main}.
  Assuming that $S$ is positive definite a.s., then \eqref{eq.fmix} takes the probably more familiar form
  \begin{align}
    \label{eq.smix}
    \frac{\dd\mu_\infty}{\dd v}  (v) = \E\left[\frac1{\operatorname{det}\sqrt{2\pi S}}\exp\Big(-\frac12 v^TS^{-1}v\Big)\right]
  \end{align}
  of a (non-isotropic) scale mixture of Gaussians.
\end{rmk}
\begin{lem}
  \label{lem.aux}
  Suppose that $\nu_\infty$ is a probability measure on $\smat_+$ with the following property:
  if $S$, $S'$ and $S''$ are random matrices with identical distribution $\nu_\infty$,
  which are mutually independent and also independent of $(L,R)$,
  then
  \begin{align}
    \label{eq.aux}
    S \distreq LS'L^T + RS''R^T.
  \end{align}
  Then \eqref{eq.scalemix} defines a solution to \eqref{eq.stationary}.
\end{lem}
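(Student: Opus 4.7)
The lemma asks us to verify that the scale--mixture ansatz \eqref{eq.scalemix} solves the fixed--point equation \eqref{eq.stationary} for $\widehat\mu_\infty$, given that its ``mixing law'' $\nu_\infty$ already solves the matrix fixed--point equation \eqref{eq.aux}. My plan is to perform a direct substitution: plug \eqref{eq.scalemix} into both sides of \eqref{eq.stationary} and simplify the right--hand side by exploiting independence of $(L,R)$ from the auxiliary mixing variables, then invoke \eqref{eq.aux} at the end.

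More concretely, let $S'$ and $S''$ be independent copies of $S\sim\nu_\infty$ that are also independent of $(L,R)$. Using \eqref{eq.scalemix} with these copies, one has
\begin{align*}
\Psi(L^T\xi)=\E\!\left[\exp\!\bigl(-\tfrac12(L^T\xi)^T S'(L^T\xi)\bigr)\,\Big|\,(L,R)\right],
\end{align*}
and the analogous identity for $\Psi(R^T\xi)$ in terms of $S''$. The quadratic form identifies as $(L^T\xi)^TS'(L^T\xi)=\xi^T(LS'L^T)\xi$, and similarly with $R,S''$. Since $S'$ and $S''$ are independent conditional on $(L,R)$, the product $\Psi(L^T\xi)\Psi(R^T\xi)$ becomes
\begin{align*}
\E\!\left[\exp\!\bigl(-\tfrac12\,\xi^T(LS'L^T+RS''R^T)\xi\bigr)\,\Big|\,(L,R)\right].
\end{align*}
Taking the outer expectation removes the conditioning, and the distributional identity \eqref{eq.aux} allows us to replace $LS'L^T+RS''R^T$ by $S$ inside the expectation, yielding exactly $\Psi(\xi)$.

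There is no real obstacle here — the only mild subtlety is to be careful about the probabilistic setup, namely that the two independent copies $S'$ and $S''$ appearing in the evaluation of $\Psi$ at $L^T\xi$ and $R^T\xi$ must be chosen independent of each other and of $(L,R)$, which is precisely the setup provided in the hypothesis of the lemma. Once this is stated, the computation above is a two--line verification and the lemma follows.
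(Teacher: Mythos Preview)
Your proposal is correct and follows essentially the same approach as the paper's own proof: condition on $(L,R)$, use independence of $S'$ and $S''$ to turn the product of conditional expectations into a single conditional expectation of the exponential of the combined quadratic form, then take the outer expectation and invoke \eqref{eq.aux}. The paper writes the conditioning via the $\sigma$-algebra generated by $(L,R)$, but the computation is line-for-line the same as yours.
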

\begin{proof}
  Let $\sigalg$ be the $\sigma$-algebra generated by $(L,R)$.
  Define $\Psi$ as in \eqref{eq.scalemix}.
  Then, using \eqref{eq.aux},
  \begin{align*}
    \E\big[\Psi(L^T\xi)\Psi(R^T\xi) \big]
    &= \E\Big[ \E\big[\exp\big(-\half (L^T\xi)^TS'(L^T\xi)\big)\big|\sigalg\big] 
    \E\big[\exp\big(-\half (R^T\xi)^TS''(R^T\xi)\big)\big|\sigalg\big] \Big] \\
    &= \E\big[ \exp\big(-\half \xi^T(LS'L^T+RS''R^T)\xi\big) \big] \\
    &= \E\big[ \exp\big(-\half \xi^TS\xi\big)\big] = \Psi(\xi)
  \end{align*}
  holds for every $\xi\in\setR^d$.
\end{proof}
The rest of this section is devoted to proving
\begin{thm}
  \label{thm.exuniq}
  Under Assumptions 1--3, 
  there exists precisely one measure $\nu_\infty\in\cls_p(\smat)$ which satisfies the hypotheses of Lemma \ref{lem.aux}.
  Consequently, the probability measure $\mu_\infty$ on $\setR^d$ associated to $\nu_\infty$ via \eqref{eq.scalemix}
  is a solution to the stationary equation $Q_+[\mu]=\mu$ in the class $\cls_p(\setR^d)$.
  In particular, $\mu_\infty$ has unit temperature and finite $p$th moment.
\end{thm}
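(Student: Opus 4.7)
The plan is to realize $\nu_\infty$ as the unique fixed point of the \emph{smoothing operator} $T:\meas(\smat_+)\to\meas(\smat_+)$ defined by $T\nu:=\law(LS'L^T+RS''R^T)$, where $S',S''$ are i.i.d.\ of law $\nu$ and independent of $(L,R)$; by Lemma \ref{lem.aux}, any fixed point of $T$ in $\cls_p(\smat_+)$ yields the desired $\mu_\infty$ via the scale-mixture ansatz. The strategy parallels the one-dimensional construction of \cite{BasLadMat}, with the real Fourier transform replaced by the matrix Fourier transform $\hat\nu(\Xi):=\int_{\smat_+} e^{i\tr(\Xi S)}\,\nu(\dd S)$ for $\Xi\in\smat$. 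The identity $\tr(\Xi LSL^T)=\tr((L^T\Xi L)S)$ immediately gives the Bobylev-type formula
\[
  \widehat{T\nu}(\Xi) = \E\big[\hat\nu(L^T\Xi L)\,\hat\nu(R^T\Xi R)\big],
\]
which shows that \eqref{eq.wweight} is the estimate tailored to contracting $T$ in a Fourier metric.

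I would work on the subclass $\cls_p^*(\smat_+):=\{\nu\in\cls_p(\smat_+):\int S\,\nu(\dd S)=\Sigma_*\}$, which is $T$-invariant: the first moment is preserved by $\E_{T\nu}[S]=\tq(\E_\nu[S])=\tq(\Sigma_*)=\Sigma_*$ (Assumption \ref{ass.2}), which in turn forces $\int(\tr S)\,(T\nu)(\dd S)=d$ since $\tr\Sigma_*=d$; finiteness of the $p/2$-trace moment is propagated via $(\tr(LS'L^T))^{p/2}\le\|L\|_F^p(\tr S')^{p/2}$ combined with $\E[\|L\|_F^p]<\infty$, the latter a consequence of \eqref{eq.weirdass} evaluated along an orthonormal basis. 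Thanks to Lemma \ref{lem.prime} we may assume $p\in(2,3)$, and we introduce on $\cls_p^*$ the Fourier distance
\[
  d_\wweight(\nu_1,\nu_2) := \sup_{0\ne\Xi\in\smat}\frac{|\hat\nu_1(\Xi)-\hat\nu_2(\Xi)|}{\wweight(\Xi)}.
\]
Finiteness of $d_\wweight$ on $\cls_p^*\times\cls_p^*$: since $\nu_1,\nu_2$ share the mean $\Sigma_*$, the first-order Taylor expansions at the origin cancel, and the elementary bound $|e^{ix}-1-ix|\le C|x|^{p/2}$ together with $\wweight(\Xi)\ge d^{-(p/2-1)}\|\Xi\|_1^{p/2}$ from \eqref{eq.wwcompatible} and $\int(\tr S)^{p/2}\,\nu_j(\dd S)<\infty$ controls the ratio uniformly.

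The contraction then follows from a standard add-and-subtract computation, using $|\hat\nu_j|\le1$ and \eqref{eq.wweight}:
\begin{align*}
  |\widehat{T\nu_1}(\Xi)-\widehat{T\nu_2}(\Xi)|
   & \le \E\big[|\hat\nu_1(L^T\Xi L)-\hat\nu_2(L^T\Xi L)| + |\hat\nu_1(R^T\Xi R)-\hat\nu_2(R^T\Xi R)|\big] \\
   & \le d_\wweight(\nu_1,\nu_2)\,\E\big[\wweight(L^T\Xi L)+\wweight(R^T\Xi R)\big]
    \le \kappa_p\,d_\wweight(\nu_1,\nu_2)\,\wweight(\Xi).
\end{align*}
Starting from $\nu_0:=\delta_{\Sigma_*}\in\cls_p^*$, the iterates $\nu_n:=T^n\nu_0$ are $d_\wweight$-Cauchy, so $\hat\nu_n$ converges pointwise on $\smat$ to some $\Phi$. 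The uniform trace bound $\E_{\nu_n}[\tr S]=d$ yields tightness via Markov, and Prokhorov together with Fourier uniqueness gives weak convergence $\nu_n\to\nu_\infty\in\meas(\smat_+)$ with $\hat\nu_\infty=\Phi$. Dominated convergence in the Fourier representation of $\widehat{T\nu_n}$ transfers the fixed-point relation to the limit, i.e.\ $T\nu_\infty=\nu_\infty$. Uniqueness is the one-line consequence of the contraction: any two fixed points in $\cls_p^*$ satisfy $d_\wweight(\nu_1,\nu_2)\le\kappa_p d_\wweight(\nu_1,\nu_2)$ with $\kappa_p<1$ and $d_\wweight(\nu_1,\nu_2)<\infty$, forcing $\nu_1=\nu_2$.

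\emph{The main obstacle} is to ensure that $\nu_\infty$ actually lies in $\cls_p^*(\smat_+)$, not merely in $\meas(\smat_+)$: this requires a \emph{uniform} bound $\sup_n \E_{\nu_n}[(\tr S)^{p/2}]<\infty$ along the iteration, from which finiteness of $\E_{\nu_\infty}[(\tr S)^{p/2}]$ follows by weak lower semicontinuity. The subtlety is that the sharp contraction \eqref{eq.wweight} is in the ``$L^T\Xi L$'' direction (dictated by Fourier duality and Assumption \ref{ass.3new}), whereas the matrices appear in the smoothing operation in the ``$LSL^T$'' direction; translating between the two pictures via the matrix weight $\wweight$ and its quasi-subadditivity on $\smat_+$ is the most delicate bookkeeping in the proof.
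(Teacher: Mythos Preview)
Your overall strategy is sound and is a genuinely different route from the paper's. The paper does \emph{not} restrict to the slice $\cls_p^*(\smat_+)$ with fixed first moment $\Sigma_*$; instead it works on all of $\cls_p(\smat_+)$ and pays for this with a two-component metric
\[
  \fnorm_\wweight(\nu_1,\nu_2)=\delta_\wweight(\nu_1,\nu_2)+\alpha\big\|\E[S_1-S_2]\big\|_\infty,
\]
where $\delta_\wweight$ is your $d_\wweight$ after subtracting the linear Taylor term $i\tr(\Xi^T\E[S_1-S_2])$. The contraction then has constant $\lambda=\max(\kappa_p,\kappa+2\alpha^{-1}d^{p/2}\kappa_p)$, with the $\kappa$ from Assumption~\ref{ass.2} entering through the mean part. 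Your restriction to $\cls_p^*$ kills the mean correction and gives the clean contraction constant $\kappa_p$; the price is that uniqueness in $\cls_p(\smat_+)$ --- which is what the theorem actually asserts --- needs one more line you omitted: any fixed point $\nu\in\cls_p(\smat_+)$ has $\E_\nu[S]=\tq(\E_\nu[S])$, and since fixed points of $\tq$ of trace $d$ equal $\Sigma_*$ (immediate from \eqref{eq.ass2}), necessarily $\nu\in\cls_p^*$.

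On the ``main obstacle'' you flag: the paper resolves it not through $\wweight$ on matrices but by going back to the vector weight $\weight$ and the scalar quantity $\sup_{\xi\ne0}\weight(\xi)^{-1}\E[(\xi^TS\xi)^{p/2}]$. The point is that $\xi^TS\xi=(L^T\xi)^TS'(L^T\xi)+(R^T\xi)^TS''(R^T\xi)$ is already in the ``$L^T\xi$'' direction, so the duality mismatch you worry about does not arise here; what remains is only the failure of $x\mapsto x^{p/2}$ to be subadditive, handled by the elementary inequality $(x+y)^{p/2}\le x^{p/2}+y^{p/2}+C_p(x^{p/2-1}y+xy^{p/2-1})$ and a Young-type estimate on the cross terms. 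This yields a recursion of the form $M_{n+1}\le\kappa_p M_n+C$ with $M_n:=\sup_{\xi}\weight(\xi)^{-1}\E_{\nu_n}[(\xi^TS\xi)^{p/2}]$, hence a uniform bound, hence $\nu_\infty\in\cls_p^*$. Your trace-based tightness via $\E_{\nu_n}[\tr S]=d$ alone is enough for weak subsequential limits, but --- as you yourself note --- not for the $p/2$-moment of the limit.
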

We shall obtain $\nu_\infty$ as the fixed point of a map $T$ defined as follows:
Given two probability measures $\nu',\nu''$ on $\smat_+$, 
let $S'$ and $S''$ be random matrices in $\smat_+$ with distribution $\nu'$ and $\nu''$, respectively,
that are independent of each other, and independent of $(L,R)$.
Define
\begin{align*}
  T[\nu',\nu''] = \law[LS'L^T+RS''R^T].
\end{align*}
Clearly, any solution $\nu_\infty$ of $\nu=T[\nu,\nu]$ verifies the hypotheses of Lemma \ref{lem.aux}.
\begin{lem}
  \label{lem.trace2} 
  Let Assumptions 1 and 2 hold.
  If $\nu'$ and $\nu''$ belong to $\cls_p(\smat_+)$, then $T[\nu',\nu'']\in\cls_p(\smat_+)$.
\end{lem}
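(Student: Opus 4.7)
The plan is to verify the three defining properties of $\cls_p(\smat_+)$ for $T[\nu',\nu''] = \law(W)$, where $W := LS'L^T + RS''R^T$ and $S'\sim\nu'$, $S''\sim\nu''$ are independent of each other and of $(L,R)$. The support condition $W\in\smat_+$ is immediate: for every $v\in\setR^d$,
\[
v^T W v = (L^T v)^T S' (L^T v) + (R^T v)^T S'' (R^T v)\ge 0.
\]

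For the normalization $\E[\tr W]=d$, I would invoke cyclicity of the trace to write $\tr W = \tr(L^T L S') + \tr(R^T R S'')$ and then use independence of $(L,R)$ from $(S',S'')$ to factor
\[
\E[\tr W] = \tr\bigl(\E[L^T L]\,\E[S']\bigr) + \tr\bigl(\E[R^T R]\,\E[S'']\bigr).
\]
The symmetry convention $(L,R)\distreq(R,L)$ forces $\E[L^T L] = \E[R^T R]$, and Assumption \ref{ass.1} then pins both to $\eins/2$. Since $\nu',\nu''\in\cls_p(\smat_+)$ imply $\tr\E[S']=\tr\E[S'']=d$, we get $\E[\tr W]=d$.

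For the $p/2$-th moment of $\tr W$, convexity of $x\mapsto x^{p/2}$ (valid since $p>2$) gives
\[
(\tr W)^{p/2}\le 2^{p/2-1}\bigl((\tr LS'L^T)^{p/2} + (\tr RS''R^T)^{p/2}\bigr),
\]
and the trace--H\"older inequality for PSD matrices yields $\tr(LS'L^T) = \tr(L^T L S') \le \|L\|^2\,\tr S'$ (operator norm). Raising to $p/2$ and using independence reduces the problem to bounding $\E[\|L\|^p]$ and $\E[\|R\|^p]$; the factors $\E[(\tr S')^{p/2}]$ and $\E[(\tr S'')^{p/2}]$ are finite by $\nu',\nu''\in\cls_p(\smat_+)$.

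The operator-norm moment bound is the main obstacle, and it uses Assumption \ref{ass.3new} (in force throughout Section \ref{sct.stationary}). Applying \eqref{eq.weirdass} with $\eta$ running over the standard basis $\{e_k\}$ and invoking \eqref{eq.boundass} gives $\E[|L^T e_k|^p]\le\kappa_p\bar\weight$; combining with $\sum_k|L^T e_k|^2 = \|L\|_F^2$, Jensen's inequality $\|L\|_F^p\le d^{p/2-1}\sum_k|L^T e_k|^p$, and $\|L\|\le\|L\|_F$ yields $\E[\|L\|^p]<\infty$, and symmetrically for $R$. The delicate point is precisely this bridge from the Fourier-side, vector-level bound in Assumption \ref{ass.3new} to a matrix operator-norm bound; one cannot directly invoke property (iii) of the preceding lemma, since the dynamics on $S$ involves the map $M\mapsto LML^T$ rather than $M\mapsto L^T ML$, and hence we route through $\|L\|_F$ instead of through $\wweight$.
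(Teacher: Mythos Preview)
Your proof is correct and follows essentially the same route as the paper: the support and trace-normalization arguments are identical, and for the $p/2$-moment you both reduce to showing $\E[\|L\|_F^p]<\infty$ via Assumption~\ref{ass.3new} applied to basis vectors plus Jensen. The only cosmetic difference is that the paper invokes the submultiplicativity estimate \eqref{eq.submult} to get $\tr(LS'L^T)\le\|S'\|_2\|L^T\|_2^2\le(\tr S')\|L^T\|_2^2$ directly in the Frobenius norm, whereas you detour through the operator norm and then bound $\|L\|\le\|L\|_F$; the closing commentary about $LML^T$ versus $L^TML$ is unnecessary, since the Frobenius norm of $L$ and $L^T$ coincide.
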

\begin{proof}
  By linearity and cyclicity of the trace, by Assumption 1, and since $(L,R)\distreq(R,L)$,
  \[
  \begin{split}
    \tr\E[LS'L^T+RS''R^T]
    &=\E[\tr(LS'L^T+RS''R^T)] =\tr(\E[L^TL]\E[S']+\E[R^TR]\E[S''])\\
    &=\frac{1}{2}\tr(\E[S'+S''])=d.
  \end{split}
  \]
  Non-negativity is preserved, 
  since $v^TS'v\geq0$ for all $v\in\setR^d$ clearly implies that $w^T(LS'L^T)w=(L^Tw)^TS'(L^Tw)\geq0$ for every $w\in\setR^d$,
  and similarly for $S''$.
  Finally, note that $\E[\tr(S')^{p/2}]<+\infty$ by hypothesis,
  that $\E[\|L^T\|_2^p]<+\infty$ follows from Assumption \ref{ass.3new},
  and that hence
  \begin{align*}
    \E\big[\big(\tr(LS'L^T)\big)^{p/2}\big]\leq \E\big[\big(\tr S'\big)^{p/2}\big] \E\big[\|L^T\|_2^p\big] < +\infty 
  \end{align*}
  by inequality \eqref{eq.submult}, and by independence of $S'$ and $L$.
  The respective estimate on $\tr(RS''R^T)$ follows in the same way,
  thus concluding the proof.
\end{proof}
In order to prove existence and uniqueness of a solution to $\nu=T[\nu,\nu]$,
we are going to show that $T$ is contractive with respect to the following distance
between elements $\nu_1,\nu_2\in\cls_p(\smat_+)$:
\begin{align}
  \label{eq.ourmetric}
  \fnorm_\wweight(\nu_1,\nu_2) := \delta_\wweight(\nu_1,\nu_2) + \alpha \big\| \E[S_1-S_2] \big\|_\infty.
\end{align}
Here $\alpha>0$ is a parameter (to be specified later),
$S_1,S_2$ are independent random matrices with respective distributions $\nu_1,\nu_2$,
and
\begin{align*}
  \delta_\wweight(\nu_1,\nu_2):= \sup_{0\neq\Xi\in\smat}  \wweight(\Xi)^{-1} 
  \big|\hat\nu_1(\Xi)-\hat\nu_2(\Xi) - i\tr\big(\Xi^T\E[S_1-S_2]\big) \big| .
\end{align*}
$\hat\nu_i(\Xi)$ denotes the characteristic functions of $\nu_i$ at $\Xi\in\setR^{d\times d}$,
which can conveniently be written as follows:
\begin{align}
   \label{eq.matrixfourier}
  \hat\nu_i(\Xi) = \E\big[\exp(-i\tr(\Xi^TS_i)\big],
\end{align}
with $S_i$ being a random matrix of distribution $\nu_i$.
In fact, it suffices to evaluate $\hat\nu_i(\Xi)$ on matrices $\Xi\in\smat$,
since if $\Xi^T=-\Xi$, it follows by the cyclicity of the trace and from $S_i^T=S_i$ a.s. 
that 
\begin{align*}
  -\tr(\Xi S_i) = \tr(\Xi^TS_i) = \tr((S_i^T\Xi)^T)= \tr(S_i\Xi)= \tr(\Xi S_i),
\end{align*}
and hence equals zero.
\begin{lem}
  \label{lem.fnorm}
  The distance $\fnorm_\wweight$ defines a metric on $\cls_p(\smat_+)$
  with the following convexity property:
  If $\nu_1=s\nu_1'+(1-s)\nu_1''$ and $\nu_2=s\nu_2'+(1-s)\nu_2''$ are two convex combinations 
  of measures in $\cls_p(\setR^d)$, with $s\in[0,1]$, 
  then
  \begin{align}
    \label{eq.fnormconvex}
    \fnorm_\wweight(\nu_1,\nu_2) \leq s \fnorm_\wweight(\nu_1',\nu_2') + (1-s) \fnorm_\wweight(\nu_1'',\nu_2'').
  \end{align}
  Moreover, $\fnorm_\wweight$ provides a pointwise control on the Fourier transforms,
  \begin{align}
    \label{eq.fnormcontrol}
    \big| \hat\nu_1(\Xi)-\hat\nu_2(\Xi) \big| \le K\max\big( \|\Xi\|_1,\|\Xi\|_1^{p/2} \big) \,\fnorm_\wweight(\nu_1,\nu_2),
  \end{align}
  with a constant $K$ that only depends on $\alpha$ and $\wweight$.
  Finally, convergence $\fnorm_\wweight(\nu_k,\nu_*)\to0$ for a sequence $\nu_k$ in $\cls_p(\smat_+)$ 
  implies weak convergence of $\nu_k$ to $\nu_*$ and convergence of the expectation values.
\end{lem}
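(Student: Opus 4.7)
The plan is to verify the four assertions separately, in the order stated: that $\fnorm_\wweight$ is a finite metric on $\cls_p(\smat_+)$, the convexity \eqref{eq.fnormconvex}, the pointwise Fourier control \eqref{eq.fnormcontrol}, and the strengthening of weak convergence. The one delicate step is the finiteness of $\delta_\wweight$; everything else follows from linearity, triangle inequalities, and L\'evy's continuity theorem on the finite-dimensional real space $\smat$.

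For the finiteness, I would view $\hat\nu(\Xi)=\E[\exp(-iX)]$ with $X=\tr(\Xi^TS)$ as the characteristic function of a scalar real random variable. A spectral expansion of $S\in\smat_+$ gives the bound $|X|\le\|\Xi\|_\infty\tr S$. The scalar Taylor estimate $|e^{-ix}-1+ix|\le C|x|^{p/2}$, valid precisely because $p/2\in(1,3/2)$, combined with the hypothesis $\E[(\tr S)^{p/2}]<\infty$ encoded in $\cls_p(\smat_+)$, then yields
\[
\bigl|\hat\nu(\Xi)-1+i\tr(\Xi^T\E[S])\bigr|\le C\|\Xi\|_\infty^{p/2}\E[(\tr S)^{p/2}].
\]
Applying this bound to both $\nu_1$ and $\nu_2$ and using the triangle inequality, the numerator appearing in the definition of $\delta_\wweight$ is controlled by a constant multiple of $\|\Xi\|_\infty^{p/2}$, which is in turn absorbed by the lower bound $\wweight(\Xi)\ge d^{-(p/2-1)}\|\Xi\|_\infty^{p/2}$ from \eqref{eq.wwcompatible}; thus the supremum defining $\delta_\wweight$ is finite.

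The remaining metric axioms are routine. Symmetry follows because swapping $\nu_1$ and $\nu_2$ negates simultaneously $\hat\nu_1-\hat\nu_2$ and $\tr(\Xi^T\E[S_1-S_2])$; the triangle inequality is obtained by inserting an intermediate measure into both summands of $\fnorm_\wweight$; and if $\fnorm_\wweight(\nu_1,\nu_2)=0$, the second summand forces $\E[S_1]=\E[S_2]$, after which the vanishing of $\delta_\wweight$ gives $\hat\nu_1\equiv\hat\nu_2$ on $\smat$ and hence $\nu_1=\nu_2$ by uniqueness of the Fourier transform on the finite-dimensional space $\smat$. For the convexity \eqref{eq.fnormconvex}, I would exploit that both $\nu\mapsto\hat\nu(\Xi)$ and $\nu\mapsto\E[S]$ are linear in $\nu$; the argument of the absolute value in $\delta_\wweight$ is therefore affine in the pair $(\nu_1,\nu_2)$, so applying the triangle inequality before taking the sup delivers the bound, and the same linearity together with the triangle inequality for $\|\cdot\|_\infty$ handles the second summand.

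The Fourier control \eqref{eq.fnormcontrol} is immediate from the definition of $\delta_\wweight$, which yields
\[
|\hat\nu_1(\Xi)-\hat\nu_2(\Xi)|\le\wweight(\Xi)\,\delta_\wweight(\nu_1,\nu_2)+|\tr(\Xi^T\E[S_1-S_2])|;
\]
one then applies the upper bound $\wweight(\Xi)\le\bar\weight\|\Xi\|_1^{p/2}$ from \eqref{eq.wwcompatible} and the matrix-H\"older inequality \eqref{eq.matrixholder} to obtain the stated estimate with $K=\max(\bar\weight,\alpha^{-1})$. Finally, if $\fnorm_\wweight(\nu_k,\nu_*)\to0$, then $\E[S_k]\to\E[S_*]$ directly from the second summand, and \eqref{eq.fnormcontrol} gives pointwise convergence $\hat\nu_k(\Xi)\to\hat\nu_*(\Xi)$ on the finite-dimensional vector space $\smat$; L\'evy's continuity theorem then delivers weak convergence $\nu_k\to\nu_*$. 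The single non-routine step is the fractional Taylor estimate underlying finiteness of $\delta_\wweight$: the restriction $p\in(2,3)$ is essential because the fractional remainder of order $p/2$ has precisely the growth $\|\Xi\|_\infty^{p/2}$ that is matched by the $p/2$-homogeneity of $\wweight$, so that the ratio in $\delta_\wweight$ remains bounded as $\Xi\to0$.
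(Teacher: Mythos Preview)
Your proposal is correct and follows essentially the same approach as the paper: the finiteness of $\delta_\wweight$ via the fractional Taylor bound $|e^{-iu}-1+iu|\le C|u|^{p/2}$ (the paper uses \eqref{eq.taylor1} and the pairing \eqref{eq.matrixholder} in place of your spectral bound $|\tr(\Xi^TS)|\le\|\Xi\|_\infty\tr S$, but these are interchangeable), the metric axioms and convexity as routine consequences of linearity, the Fourier control from the definition plus \eqref{eq.wwcompatible} and \eqref{eq.matrixholder}, and weak convergence via L\'evy. One small slip: your constant $K=\max(\bar\weight,\alpha^{-1})$ should be $\bar\weight+\alpha^{-1}$ (or $2\max(\bar\weight,\alpha^{-1})$), since the two contributions add; this does not affect the argument.
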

\begin{proof}
  The only difficulty here is to prove well-definiteness of $\fnorm_\wweight$:
  given $\nu_1,\nu_2\in\cls_p(\smat_+)$, we need to show that $\delta_\wweight(\nu_1,\nu_2)<\infty$.
  Using estimate \eqref{eq.taylor1} from the appendix in the definition \eqref{eq.matrixfourier} of $\hat\nu_1$, 
  we obtain (recall that we assume $p\in(2,3)$ throughout this section)
  \begin{align*}
    \big|\hat\nu_1(\Xi) - \big(1-i\Xi^T\E[S_1]\big) \big|
    & \leq \E\big[ \big| \exp(i\Xi^TS_1) - (1+i\Xi^TS_1) \big| \big] \\
    & \leq C_p \E\big[|\Xi^TS_1|^{p/2} \big] \leq C_p\|\Xi\|_1^{p/2}\E\big[\|S_1\|_\infty^{p/2}\big].
  \end{align*}
  And thus, by the triangle inequality and \eqref{eq.wwcompatible},
  \begin{align*}
    \delta_\wweight(\nu_1,\nu_2) 
    \leq C_p \E\big[\|S_1\|_\infty^{p/2}+\|S_2\|_\infty^{p/2}\big] \sup_{\Xi\neq0}\big(\wweight(\Xi)^{-1}\|\Xi\|_1^{p/2}\big)
    \leq C_p d^{p/2-1}\E\big[(\tr S_1)^{p/2}+(\tr S_2)^{p/2}\big].
  \end{align*}
  The last expression is finite by the definition of $\cls_p(\smat_+)$.

  The rest of the proof is straight-forward:
  Symmetry, non-negativity and the triangle inequality 
  follow directly from the definition of $\fnorm_\wweight$.
  The convexity property \eqref{eq.fnormconvex} is a consequence 
  of the respective individual convexities of $\delta_\wweight$ and $\|\cdot\|_\infty$.
  The estimate \eqref{eq.fnormcontrol} is obtained by observing that
  \begin{align*}
    \big|\hat\nu_1(\Xi)-\hat\nu_2(\Xi)\big|
    \le \wweight(\Xi) \fnorm_\wweight(\nu_1,\nu_2) + | \tr ( \Xi^T\E[S_1-S_2]) |
  \end{align*}
  follows directly from the definition of $\fnorm_\wweight$,
  that
  \begin{align*}
    | \tr ( \Xi^T\E[S_1-S_2]) | \le \|\Xi\|_1 \|\E[S_1-S-2]\|_\infty 
    \le \frac1\alpha \|\Xi\|_1 \fnorm_\wweight(\nu_1,\nu_2)
  \end{align*}
  holds by relation \eqref{eq.matrixholder} for matrix norms,
  and by recalling relation \eqref{eq.wwcompatible} between $\|\Xi\|_1$ and $\wweight(\Xi)$.
  Finally, if $\fnorm_\wweight(\nu_k,\nu_*)\to0$,
  then in particular $\E[S_k]\to\E[S_*]$,
  and the pointwise convergence $\hat\nu_k(\Xi)\to\hat\nu_\infty(\Xi)$ for every $\Xi\in\smat$ due to \eqref{eq.fnormcontrol}
  implies weak convergence of the measures $\nu_k$ to $\nu_*$.
\end{proof}
%
\begin{lem}
  \label{Lemma4} 
  Let Assumptions 1--3 hold.
  Let $(\nu_1',\nu_1'')$ and $(\nu_2',\nu_2'')$ be two pairs of measures, 
  each of which belongs to $\cls_p(\smat_+)$,
  and define $\nu_1=T[\nu_1',\nu_1'']$, $\nu_2=T[\nu_2',\nu_2'']$.
  Then
  \begin{equation}
    \label{eq.fnormcontract}
    \delta_\wweight(\nu_1,\nu_2) \leq \kappa_p \left\{
      \frac12\big(\delta_\wweight (\nu_1',\nu_2')+\delta_\wweight (\nu_1'',\nu_2'')\big)
      + d^{p/2} \big( \| \E[S_1'-S_2'] \|_\infty + \| \E[S_1''-S_2'']\|_\infty \big) \right\}.
  \end{equation}
  {where
  $S_i'$, $S_i''$ respectively, are independent random matrices with law $\nu_i'$, $\nu_i''$ respectively,
  for $i=1,2$. }
\end{lem}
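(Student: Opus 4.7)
The starting point is the Fourier representation. Since $S_i = LS_i'L^T + RS_i''R^T$ with $S_i', S_i''$ mutually independent and independent of $(L,R)$, and since $\tr(\Xi^T(LS'L^T + RS''R^T)) = \tr((L^T\Xi L)S') + \tr((R^T\Xi R)S'')$ (by $\Xi^T=\Xi$ and cyclicity), conditioning on $(L,R)$ yields
\[
\hat\nu_i(\Xi) = \E[\hat\nu_i'(L^T\Xi L)\,\hat\nu_i''(R^T\Xi R)], \qquad i=1,2.
\]
Writing $a=\hat\nu_1'(L^T\Xi L)$, $b=\hat\nu_1''(R^T\Xi R)$, $c=\hat\nu_2'(L^T\Xi L)$, $d=\hat\nu_2''(R^T\Xi R)$, I would use the symmetric decomposition
\[
ab-cd = \tfrac{a+c}{2}(b-d) + (a-c)\tfrac{b+d}{2},
\]
together with the trivial bounds $|(a+c)/2|,|(b+d)/2|\le 1$ following from $|\hat\nu_i'|,|\hat\nu_i''|\le1$.

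By the definition of $\delta_\wweight$, applied at the symmetric argument $L^T\Xi L\in\smat$,
\[
a - c = i\tr\bigl((L^T\Xi L)\E[S_1'-S_2']\bigr) + r'(L^T\Xi L),
\]
with $|r'(\eta)|\le \wweight(\eta)\,\delta_\wweight(\nu_1',\nu_2')$, and analogously for $b-d$. Cyclicity of the trace together with independence of $S_j'$ from $(L,R)$ imply that
\[
i\tr(\Xi^T\E[S_1-S_2]) = \E\bigl[i\tr((L^T\Xi L)\E[S_1'-S_2'])\bigr] + \E\bigl[i\tr((R^T\Xi R)\E[S_1''-S_2''])\bigr],
\]
so after substituting the decompositions of $a-c$ and $b-d$ into the symmetric identity and subtracting the right-hand side above one obtains
\begin{align*}
D(\Xi) &:= \hat\nu_1(\Xi) - \hat\nu_2(\Xi) - i\tr(\Xi^T\E[S_1-S_2]) \\
&= \E\bigl[r'(L^T\Xi L)\tfrac{b+d}{2}\bigr] + \E\bigl[\tfrac{a+c}{2}r''(R^T\Xi R)\bigr] \\
&\quad + \E\bigl[i\tr((L^T\Xi L)\E[S_1'-S_2'])\bigl(\tfrac{b+d}{2}-1\bigr)\bigr] + \E\bigl[\bigl(\tfrac{a+c}{2}-1\bigr)\,i\tr((R^T\Xi R)\E[S_1''-S_2''])\bigr].
\end{align*}

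The first two ``residual'' terms yield the $\tfrac12(\delta_\wweight(\nu_1',\nu_2')+\delta_\wweight(\nu_1'',\nu_2''))$ contribution: the bound on $r'$, together with the symmetry $(L,R)\distreq(R,L)$ and \eqref{eq.wweight}, gives $\E[\wweight(L^T\Xi L)] = \tfrac12\E[\wweight(L^T\Xi L)+\wweight(R^T\Xi R)] \le \tfrac{\kappa_p}{2}\wweight(\Xi)$, and analogously for $R$; each residual term is then bounded by $\tfrac{\kappa_p}{2}\wweight(\Xi)\,\delta_\wweight(\nu_j',\nu_j'')$.

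The remaining two ``mean$\,\times\,$shift'' terms are the main technical obstacle, as they must produce the $\|\E[S_1^*-S_2^*]\|_\infty$ contributions scaled by $\wweight(\Xi)$. The key ingredients are: the interpolation estimate $|e^{ix}-1|\le 2^{2-p/2}|x|^{p/2-1}$ (valid for $p/2-1\in(0,1)$, i.e.\ $p\in(2,3)$, which we may assume by Lemma \ref{lem.prime}), the matrix inequality \eqref{eq.matrixholder}, the PSD-bound $\|S\|_\infty\le \tr S$, Jensen's inequality for the concave map $x\mapsto x^{p/2-1}$, and the normalization $\E[\tr S_j'']=d$ from $\nu_j''\in\cls_p(\smat_+)$; combined, they give $|\tfrac{b+d}{2}-1|\le 2^{2-p/2}d^{p/2-1}\|R^T\Xi R\|_1^{p/2-1}$ and the analogous bound for $\tfrac{a+c}{2}-1$. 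A H\"older application with conjugate exponents $p/2$ and $p/(p-2)$, namely
\[
\E\bigl[\|L^T\Xi L\|_1\,\|R^T\Xi R\|_1^{p/2-1}\bigr] \le \E[\|L^T\Xi L\|_1^{p/2}]^{2/p}\,\E[\|R^T\Xi R\|_1^{p/2}]^{(p-2)/p},
\]
followed by \eqref{eq.wwcompatible} and \eqref{eq.wweight}, bounds each of these expectations by $d^{p/2-1}\tfrac{\kappa_p}{2}\wweight(\Xi)$; the H\"older exponents must be tuned precisely to the $p/2$ appearing in the weight, which is the delicate point. Collecting constants, each ``mean$\,\times\,$shift'' term is bounded by $2^{2-p/2}d^{p-2}\tfrac{\kappa_p}{2}\wweight(\Xi)\|\E[S_1^*-S_2^*]\|_\infty$. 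Since $2^{2-p/2}d^{p-2}\le d^{p/2}$ for $d\ge1$ and $p\in(2,3)$, dividing $|D(\Xi)|$ by $\wweight(\Xi)$ and taking the supremum over $\Xi\ne 0$ produces exactly \eqref{eq.fnormcontract}.
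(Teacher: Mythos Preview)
Your proof is correct and follows essentially the same strategy as the paper: write $\hat\nu_i(\Xi)=\E[\hat\nu_i'(L^T\Xi L)\hat\nu_i''(R^T\Xi R)]$, split the product difference, and handle the ``residual'' and ``mean$\times$shift'' pieces separately using \eqref{eq.wweight} and \eqref{eq.wwcompatible}. The differences are purely cosmetic: the paper uses the asymmetric splitting $ab-cd=(a-c)b+c(b-d)$ rather than your symmetric one, applies Young's inequality \eqref{eq.young} in place of your H\"older step to recombine $\|L^T\Xi L\|_1\cdot\|R^T\Xi R\|_1^{p/2-1}$ into a sum of $p/2$-th powers, and bounds $|1-\hat\nu(\Xi')|$ via $\min(2,d\|\Xi'\|_1)$ rather than your direct interpolation $|e^{ix}-1|\le 2^{2-p/2}|x|^{p/2-1}$. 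One small bookkeeping slip: your claimed inequality $2^{2-p/2}d^{p-2}\le d^{p/2}$ fails for $d=1$, but what you actually need is $\tfrac12\cdot 2^{2-p/2}d^{p-2}\le d^{p/2}$ (comparing $\tfrac{\kappa_p}{2}$ to $\kappa_p$), i.e.\ $2^{1-p/2}\le d^{2-p/2}$, which does hold for all $d\ge1$ since $1-p/2<0$.
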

\begin{proof}
  Fix some $\Xi\in\smat$.
  For $\nu_i=T[\nu_i',\nu_i'']$, we obtain the representations
  \begin{align*}
    \hat\nu_i(\Xi) &= \E\big[\exp\big(-i\tr\Xi^T(LS_i'L^T+RS_i''R^T)\big)\big] \\
    &= \E\Big[\E\big[\exp\big(-i\tr(L^T\Xi L)^TS_i'\big)\big|\sigalg\big]
    \E\big[\exp\big(-i\tr(R^T\Xi R)^TS_i''\big)\big|\sigalg\big]\Big] \\
    &= \E\big[\hat\nu_i'(L^T\Xi L)\hat\nu_i''(R^T\Xi R)\big].
  \end{align*}
  Now define
  \begin{align*}
    \Delta':=\E[S_1'-S_2'], \quad
    \Delta'':=\E[S_1''-S_2''],
  \end{align*}
  and observe that
  \begin{align*}
    \Delta:=\E[S_1-S_2] = \E[L\Delta'L^T+R\Delta''R^T]
  \end{align*}
  {where $S_i$ has law $\nu_i$.}
  By the triangle inequality, 
  using that characteristic functions are bounded in modulus by one,
  \begin{align*}
    \delta &:= \big| \hat\nu_1(\Xi)-\hat\nu_2(\Xi) - i\tr(\Xi^T\Delta) \big| \\
    & = \big|\E[\hat\nu_1'(L^T\Xi L)\hat\nu_1''(R^T\Xi R)-\hat\nu_2'(L^T\Xi L)\hat\nu_2''(R^T\Xi R) -i\tr\Xi^T(L\Delta'L^T+R\Delta''R^T)]\big| \\
    & \leq \E\big[|\hat\nu_1'(L^T\Xi L)-\hat\nu_2'(L^T\Xi L)-i\tr(L^T\Xi L)^T\Delta'|\big] + \E[ |1-\hat \nu_1''(R^T\Xi R)| |\tr(L^T\Xi L)^T\Delta'| ] \\
    & \quad + \E\big[|\hat\nu_1''(R^T\Xi R)-\hat\nu_2''(R^T\Xi R)-i\tr(R^T\Xi R)^T\Delta''|\big] + \E[ |1-\hat \nu_2'(L^T\Xi L)| |\tr(R^T\Xi R)^T\Delta''| ] \\
    & \leq \E[\wweight(L^T\Xi L)] \sup_{\Xi'\neq0}\left(\frac{|\hat\nu_1'(\Xi')-\hat\nu_2'(\Xi')-i\tr(\Xi')^T\Delta'|}{\wweight(\Xi')}\right) \\
    & \qquad + \E[\wweight(R^T\Xi R)] \sup_{\Xi''\neq0} \left(\frac{|\hat\nu_1''(\Xi'')-\hat\nu_2''(\Xi'')-i\tr(\Xi'')^T\Delta''|}{\wweight(\Xi'')}\right) \\
    & \quad + \E\big[\|R^T\Xi R\|_1^{p/2-1} |\tr(L^T\Xi L)^T\Delta'| \big] \sup_{\Xi''\neq0}\left(\frac{|1-\hat \nu_1''(\Xi'')|}{\|\Xi''\|_1^{p/2-1}}\right) \\
    & \qquad + \E\big[\|L^T\Xi L\|_1^{p/2-1} |\tr(R^T\Xi R)^T\Delta''| \big] \sup_{\Xi'\neq0}\left(\frac{|1-\hat \nu_2'(\Xi')|}{\|\Xi'\|_1^{p/2-1}}\right).
  \end{align*}
  The first two terms are estimated further using $\E[\wweight(L^T\Xi L)]=\E[\wweight(R^T\Xi R)]\leq(\kappa_p/2)\wweight(\Xi)$,
  which follows from \eqref{eq.wweight} and the symmetry $(L,R)\distreq(R,L)$.
  For estimation of the last two terms,
  we apply the rule \eqref{eq.matrixholder}
  to the expressions containing the trace,
  and then Young's inequality, see \eqref{eq.young}, to the product inside the expectation.
  Altogether, this yields
  \begin{align*}
    & \delta \leq \frac{\kappa_p\wweight(\Xi)}2
    \left[\sup_{\Xi'\neq0}\left(\frac{|\hat\nu_1'(\Xi')-\hat\nu_2'(\Xi')+i\tr(\Xi')^T\Delta'|}{\wweight(\Xi')}\right)
      +\sup_{\Xi''\neq0} \left(\frac{|\hat\nu_1''(\Xi'')-\hat\nu_2''(\Xi'')+i\tr(\Xi'')^T\Delta''|}{\wweight(\Xi'')}\right)\right] \\
    & + \frac12 \E\big[ \|L^T\Xi L\|_1^{p/2} + \|R^T\Xi R\|_1^{p/2} \big]
    \left[ \|\Delta'\|_\infty \sup_{\Xi''\neq0}\left(\frac{|1-\hat \nu_1''(\Xi'')|}{\|\Xi''\|_1^{p/2-1}}\right)
      + \|\Delta''\|_\infty  \sup_{\Xi'\neq0}\left(\frac{|1-\hat \nu_2'(\Xi')|}{\|\Xi'\|_1^{p/2-1}}\right) \right].
  \end{align*}
  The second terms needs more estimates.
  Observe that by \eqref{eq.wwcompatible} and \eqref{eq.wweight},
  one has
  \begin{align*}
    \E\big[ \|L^T\Xi L\|_1^{p/2} + \|R^T\Xi R\|_1^{p/2} \big] 
    \leq d^{p/2-1} \E\big[ \wweight(L^T\Xi L)+\wweight(R^T\Xi R) \big]
    \leq d^{p/2-1}\kappa_p \wweight(\Xi).
  \end{align*}
  It remains to bound the supremum involving $|1-\nu_2'(\Xi')|$.
  Since $\nu_2' \in \cls_p(\smat_+)$, one has $\tr\E[S_2']=d$.
  Invoking \eqref{eq.matrixholder},
  and using that $\|\Sigma\|_\infty\leq \tr(\Sigma)$ for every $\Sigma\in\smat_+$,
  one finds that
  \begin{align*}
    |1-\hat\nu_2'(\Xi)| & \leq \int_0^1 \bigg|\frac{d}{ds}\hat\nu_2'(s\Xi)\bigg|\dd s 
    \leq \int_0^1 \left|\E\big[-i\tr(\Xi^TS_2')\exp\big(-is\tr(\Xi^TS_2')\big)\big] \right|\dd s \\
    & \leq \|\Xi\|_1\E\big[\|S_2'\|_\infty\big] \leq \|\Xi\|_1\E\big[\tr S_2'\big] = d \|\Xi\|_1 . 
  \end{align*}
  In combination with $|1-\hat\nu_2'(\Xi')|\leq 2$ for all $\Xi'\in\smat$,
  one derives
  \[
  \sup_{\Xi'\neq0} \frac{|1-\hat\nu_2'(\Xi')|}{\|\Xi'\|_1^{p/2-1}} 
  \leq \sup_{\Xi'\neq0} \min\{ 2\|\Xi'\|_1^{1-p/2}  ,d\|\Xi'\|_1^{2-p/2}   \} = 2^{2-p/2}d^{p/2-1} \leq 2d,
  \]
  and the analogous estimate for the term involving $\hat\nu_1''(\Xi'')$.
  Putting everything together, \eqref{eq.fnormcontract} follows.
  Note that we have implicitly used our additional assumption $2<p<3$.
\end{proof}
\begin{lem}
  \label{lem.fwcontract}
   Under the same assumptions and with the same notations as in Lemma \ref{Lemma4},
   it follows that
   \begin{align}
     \label{eq.fwcontract}
     \fnorm_\wweight(\nu_1,\nu_2) \leq \frac\lambda2 \big( \fnorm_\wweight(\nu_1',\nu_2') + \fnorm_\wweight(\nu_1'',\nu_2'') \big)
   \end{align}
   with the constant
   \begin{align}
     \label{eq.deflambda}
     \lambda = \max\big( \kappa_p,\kappa+2\alpha^{-1}d^{p/2}\kappa_p \big),
   \end{align}
   which is smaller than one if $\alpha$ is large enough.
\end{lem}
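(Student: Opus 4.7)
The plan is to combine the contraction estimate for $\delta_\wweight$ from Lemma \ref{Lemma4} with a separate contraction estimate for the $\|\E[S_1-S_2]\|_\infty$ term, then balance the resulting coefficients against the weight $\alpha$ in the definition \eqref{eq.ourmetric} of $\fnorm_\wweight$.

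The key observation that unlocks the second estimate is that, since $\nu_1',\nu_2',\nu_1'',\nu_2''$ all belong to $\cls_p(\smat_+)$ and therefore satisfy $\tr\E[S]=d$, the differences $\Delta':=\E[S_1'-S_2']$ and $\Delta'':=\E[S_1''-S_2'']$ are \emph{traceless} symmetric matrices. The symmetry hypothesis $(L,R)\distreq(R,L)$ gives $\E[L\Delta' L^T]=\E[R\Delta' R^T]=\tfrac12\tq(\Delta')$ and similarly for $\Delta''$, so
\begin{align*}
  \E[S_1-S_2] = \E[L\Delta'L^T + R\Delta''R^T] = \tfrac12\tq(\Delta')+\tfrac12\tq(\Delta'').
\end{align*}
Applying \eqref{eq.ass2} to the traceless matrices $\Delta'$ and $\Delta''$ then yields
\begin{align*}
  \|\E[S_1-S_2]\|_\infty \le \tfrac{\kappa}{2}\bigl(\|\Delta'\|_\infty+\|\Delta''\|_\infty\bigr).
\end{align*}

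Next, I would add $\alpha$ times this bound to \eqref{eq.fnormcontract} from Lemma \ref{Lemma4}. Grouping the terms according to whether they multiply $\delta_\wweight(\nu_i',\nu_i'')$ or $\|\Delta'\|_\infty, \|\Delta''\|_\infty$, one obtains
\begin{align*}
  \fnorm_\wweight(\nu_1,\nu_2)
  \le \tfrac{\kappa_p}{2}\bigl(\delta_\wweight(\nu_1',\nu_2')+\delta_\wweight(\nu_1'',\nu_2'')\bigr)
  + \bigl(\kappa_p d^{p/2}+\tfrac{\alpha\kappa}{2}\bigr)\bigl(\|\Delta'\|_\infty+\|\Delta''\|_\infty\bigr).
\end{align*}
Since $\alpha\|\Delta'\|_\infty\le \fnorm_\wweight(\nu_1',\nu_2')$ and similarly for $\Delta''$ directly from the definition \eqref{eq.ourmetric}, the coefficient of each $\|\Delta\|_\infty$ is bounded by $\tfrac12(\kappa+2\alpha^{-1}\kappa_p d^{p/2})$ times $\alpha$. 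Comparing the two resulting coefficients with the factors $1$ and $\alpha$ appearing in $\fnorm_\wweight$, one reads off
\begin{align*}
  \fnorm_\wweight(\nu_1,\nu_2) \le \tfrac{\lambda}{2}\bigl(\fnorm_\wweight(\nu_1',\nu_2')+\fnorm_\wweight(\nu_1'',\nu_2'')\bigr)
\end{align*}
precisely with $\lambda$ as in \eqref{eq.deflambda}. Finally, $\kappa_p<1$ by Assumption \ref{ass.3new} and $\kappa<1$ by Assumption \ref{ass.2}, so choosing $\alpha>2\kappa_p d^{p/2}/(1-\kappa)$ forces $\lambda<1$.

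The proof is essentially bookkeeping: the genuine analytic work has already been done in Lemma \ref{Lemma4} for the Fourier part and in Assumption \ref{ass.2} for the expectation part. The only subtlety I need to be careful about is the observation that $\Delta',\Delta''$ lie in the traceless subspace, so that the contraction constant $\kappa$ from Assumption \ref{ass.2} (rather than $1$) applies; this is what makes the two terms compatible and allows $\alpha$ to be chosen so that both coefficients are simultaneously strictly less than one.
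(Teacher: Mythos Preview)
Your proof is correct and follows essentially the same approach as the paper: both combine \eqref{eq.fnormcontract} with the contraction of $\|\E[S_1-S_2]\|_\infty$ obtained from Assumption \ref{ass.2}, using the symmetry $(L,R)\distreq(R,L)$ and the crucial fact that $\Delta',\Delta''$ are traceless since all measures lie in $\cls_p(\smat_+)$. Your bookkeeping is slightly more explicit than the paper's, and your identification of $\E[L\Delta'L^T+R\Delta''R^T]=\tfrac12\tq(\Delta')+\tfrac12\tq(\Delta'')$ is a clean way to phrase what the paper does via the triangle inequality.
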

\begin{proof}
  It suffices to combine equation \eqref{eq.fnormcontract} above with the observation that
  \begin{align*}
    \big\|\E[S_1-S_2] \big\|_\infty 
    & = \big\| \E[L(S_1'-S_2')L^T+R(S_1''-S_2'')R^T] \big\|_\infty \\
    & \leq \frac12 \big( \big\| \tq(\E[S_1'-S_2']) \big\|_\infty + \big\| \tq(\E[S_1''-S_2'']) \big\|_\infty \big) \\
    & \leq \frac\kappa2 \big( \|\E[S_1'-S_2']\|_\infty +  \|\E[S_1''-S_2'']\|_\infty \big).
  \end{align*}
  Here we have used the triangle inequality for $\|\cdot\|_\infty$,
  the symmetry $(L,R)\distreq(R,L)$,
  and Assumption \ref{ass.2} in combination with the fact that $\tr\E[S_1']=\tr\E[S_2']=d$.
\end{proof}
\begin{lem}
  \label{lem.tight}
  Let Assumptions 1--3 hold.
  There exists a constant $C$ such that the following is true:
  For arbitrary $\nu',\nu''\in\cls_p(\smat_+)$,
  and if $S$, $S'$ and $S''$ independent random matrices with law, $\nu:=T[\nu',\nu'']$, $\nu'$ and $\nu''$, 
  respectively,
  then
 \begin{equation}
    \label{eq.tight}
    \sup_{\xi\neq0} \frac{ \E[(\xi^T S \xi)^{p/2}]}{\weight(\xi)} 
    \leq \frac{\kappa_p}{2} \left( \sup_{\xi'\neq0} \frac{\E[({\xi'}^TS'\xi')^{p/2}]}{\weight(\xi')}
      +\sup_{\xi''\neq0} \frac{ \E[({\xi''}^T S'' \xi'')^{p/2}]}{\weight(\xi'')}\right) + C. 
  \end{equation}
\end{lem}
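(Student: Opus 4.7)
The plan is to expand $\xi^T S \xi = A + B$ with $A := (L^T\xi)^T S'(L^T\xi)$ and $B := (R^T\xi)^T S''(R^T\xi)$, both nonnegative since $S', S'' \in \smat_+$, and then raise to the $p/2$-th power. Since $p/2 \in (1, 3/2)$, I expect to use the subadditivity $(a+b)^\alpha \leq a^\alpha + b^\alpha$ for $\alpha = p/2 - 1 \in (0, 1/2)$ after factoring $(a+b)^{p/2} = (a+b)^{p/2-1}(a+b)$, which yields the pointwise bound
\begin{equation*}
(A+B)^{p/2} \leq A^{p/2} + B^{p/2} + A^{p/2-1} B + A B^{p/2-1}.
\end{equation*}
Taking expectations, the diagonal terms will produce the contraction factor $\kappa_p/2$, while the mixed terms will be absorbed into the additive constant $C$.

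For the diagonal terms, condition on $(L,R)$: by the independence of $S'$ from $(L,R)$, $\E[A^{p/2} \mid L] = \E[(\eta^T S' \eta)^{p/2}]|_{\eta = L^T\xi}$, which is bounded by $M' \weight(L^T\xi)$ with $M' := \sup_{\eta \neq 0} \E[(\eta^T S' \eta)^{p/2}]/\weight(\eta)$. Averaging over $L$ and exploiting the symmetry $(L,R)\distreq (R,L)$ together with \eqref{eq.weirdass} gives
\begin{equation*}
\E[A^{p/2}] + \E[B^{p/2}] \leq M' \E[\weight(L^T\xi)] + M'' \E[\weight(R^T\xi)] \leq \frac{\kappa_p}{2}(M' + M'')\,\weight(\xi),
\end{equation*}
which is the desired main term after dividing by $\weight(\xi)$ and taking the supremum.

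For the cross terms, condition again on $(L,R)$ and use the independence of $S'$ and $S''$ from each other and from $(L,R)$. On the factor carrying the power $p/2-1 < 1$, Jensen's inequality (concavity) yields
\begin{equation*}
\E\big[A^{p/2-1} \mid L\big] \leq \big((L^T\xi)^T \E[S'] (L^T\xi)\big)^{p/2-1} \leq d^{p/2-1}\,|L^T\xi|^{p-2},
\end{equation*}
where the last bound uses $\|\E[S']\|_\infty \leq \tr\E[S'] = d$ since $\nu' \in \cls_p(\smat_+)$. Likewise $\E[B \mid R] \leq d\,|R^T\xi|^2$. Consequently
\begin{equation*}
\E\big[A^{p/2-1} B\big] \leq d^{p/2}\,\E\big[|L^T\xi|^{p-2} |R^T\xi|^2\big],
\end{equation*}
and symmetrically for $\E[AB^{p/2-1}]$. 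Young's inequality with conjugate exponents $p/(p-2)$ and $p/2$ splits the mixed product into $|L^T\xi|^p$ and $|R^T\xi|^p$ terms; applying \eqref{eq.boundass} to pass to $\weight(L^T\xi) + \weight(R^T\xi)$ and then \eqref{eq.weirdass} bounds this by $\kappa_p \weight(\xi)$. Combining all contributions produces the claim with $C = 2 d^{p/2}\kappa_p$.

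The computation is essentially routine once the pointwise inequality $(a+b)^{p/2} \leq a^{p/2} + b^{p/2} + a^{p/2-1}b + ab^{p/2-1}$ is in place; the only genuinely delicate point is the restriction $p \in (2,3)$, which ensures that $p/2 - 1 \in (0,1)$ so both the subadditivity step and Jensen's inequality (in the concave direction) apply simultaneously. Outside this range one would need a different splitting. Since Lemma \ref{lem.prime} has already reduced the problem to $p \in (2,3)$, this restriction is harmless.
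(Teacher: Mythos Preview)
Your argument is correct and follows essentially the same route as the paper's proof: the pointwise inequality $(a+b)^{p/2}\le a^{p/2}+b^{p/2}+a^{p/2-1}b+ab^{p/2-1}$ (which the paper quotes as \eqref{eq.myineq}), the same treatment of the diagonal terms via \eqref{eq.weirdass} and the symmetry $(L,R)\distreq(R,L)$, and the same handling of the cross terms via Young's inequality and \eqref{eq.boundass}. The only cosmetic difference is in the cross terms: you apply Jensen's inequality to $\E[A^{p/2-1}\mid L]$ and then bound $\|\E[S']\|_\infty\le\tr\E[S']=d$, whereas the paper first uses the pointwise bound $\xi^TS'\xi\le\|S'\|_\infty|\xi|^2$ and then applies Jensen to $\E[(\tr S'')^{p/2-1}]$; both yield the same factor $d^{p/2}$.
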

\begin{proof}
  The starting point is an application of inequality \eqref{eq.myineq}:
  \begin{align*}
    \sup_{\xi\neq0}& \frac{\E\big[(\xi^TS\xi)^{p/2}\big]}{\weight(\xi)}
    =\sup_{\xi\neq0} \frac{\E\Big[\E\big[\big(\xi^T(LS'L^T+RS''R^T)\xi\big)^{p/2}\big|\sigalg\big]\Big]}{\weight(\xi)} \\
    & \leq \sup_{\xi \neq 0} \frac1{\weight(\xi)} 
    \Big \{ \E \big[ \E[(\xi^T LS'L^T\xi)^{p/2}|\sigalg] +  \E[(\xi^TRS''R^T\xi)^{p/2}|\sigalg] \big ] \\ 
    &\quad + C_p  \E\big[\E\big[(\xi^TLS'L^T\xi)(\xi^TRS''R^T\xi)^{p/2-1}
    + (\xi^T LS'L^T\xi)^{p/2-1}(\xi^T RS''R^T\xi)\big|\sigalg\big] \big]\Big \}.
  \end{align*}
 The first two terms inside the expectation are estimated directly using Assumption \ref{ass.3new}:
  \begin{align*}
    & \sup_{\xi \neq0} \frac{1}{\weight(\xi)} \Big \{ 
    \E \big[ \E[(\xi^T LS'L^T \xi )^{p/2}|\sigalg] + \E[(\xi^T RS''R^T\xi)^{p/2}|\sigalg] \big]\Big \} \\
    & \leq \sup_{\xi \neq 0}  \E \bigg[
    \frac{\weight(L^T\xi)}{\weight(\xi)} \sup_{\xi' \neq 0} \frac{\E[|{\xi'}^TS'\xi'|^{p/2}]}{\weight(\xi')} + 
    \frac{\weight(R^T\xi)}{\weight(\xi)} \sup_{\xi'' \neq 0} \frac{\E[|{\xi''}^TS''\xi''|^{p/2}]}{\weight(\xi'')} \bigg] \\
    & \leq \frac{\kappa_p}{2} \Big \{  \sup_{\xi' \neq 0} \frac{\E[({\xi'}^T S'\xi')^{p/2}]}{\weight(\xi')} 
    + \sup_{\xi'' \neq 0} \frac{ \E[({\xi''}^TS''\xi'')^{p/2}]}{\weight(\xi'')} \Big \} .
  \end{align*}
  To estimate the remainder terms, 
  we apply Young's inequality \eqref{eq.young} and inequality \eqref{eq.submult2},
  \begin{align}
    \nonumber
    & \E\Big[\E\big[(\xi^TLS'L^T\xi)(\xi^TRS''R^T\xi)^{p/2-1}\big|\sigalg\big]\Big]
    \leq \E\big[ \|S'\|_\infty \big]\E\big[ \|S''\|_\infty^{p/2-1} \big] \E\big[|L^T\xi|^2|R^T\xi|^{p-2} \big] \\
    \label{eq.thisremainder}
    & \qquad \leq \E[\tr S']\E[(\tr S'')^{p/2-1}]\E\left[ \frac{p-2}{p}|L^T\xi|^p + \frac2p |R^T\xi|^p \right].
  \end{align}
  Finally, Jensen's inequality gives $\E[(\tr S'')^{p/2-1}]\leq\E[\tr S'']^{p/2-1}=d^{p/2-1}$;
  recall our additional assumption that $2<p<3$.
  A similar estimate is derived for the other remainder term;
  in summary, we obtain
  \begin{align*}
    &\E\Big[\E\big[(\xi^TLS'L^T\xi)(\xi^TRS''R^T\xi)^{p/2-1}
    + (\xi^T LS'L^T\xi)^{p/2-1}(\xi^T RS''R^T\xi)\big|\sigalg\big] \Big] \\
    & \qquad \leq d^{p/2}\kappa_p\E[\weight(L^T\xi)+\weight(R^T\xi)] \leq d^{p/2}\kappa_p\weight(\xi).
  \end{align*}
  This means that \eqref{eq.tight} holds with $C=d^{p/2}\kappa_pC_p$.
\end{proof}
\begin{proof}[Proof of Theorem \ref{thm.exuniq}]
  Choose some $\nu_0\in\cls_p(\smat_+)$, 
  and define inductively the iterates $\nu_m=T[\nu_{m-1},\nu_{m-1}]$ for all $m\geq1$.
  Denote by $S_0,\,S_1,\,S_2,\ldots$ mutually independent random matrices with
  respective distributions $\nu_0,\,\nu_1,\,\nu_2,\ldots$
  Iteration of estimate \eqref{eq.tight} gives
  \begin{align*}
    \sup_{\xi\neq0}\frac{\E\big[(\xi^TS_m\xi)^{p/2}\big]}{\omega(\xi)} 
    \leq \kappa_p^m \sup_{\xi\neq0}\frac{\E\big[(\xi^TS_0\xi)^{p/2}\big]}{\omega(\xi)} + \frac{1-\kappa_p^m}{1-\kappa_p}C
  \end{align*}
  for every $m\geq1$.
  This implies the $m$-uniform bound
  \begin{align}
    \label{eq.verytight}
    \E\big[\|S_m\|_\infty^{p/2} \big] \leq \overline\weight \E\big[\|S_0\|_\infty^{p/2} \big] + \frac{C\overline\weight}{1-\kappa_p},
  \end{align}
  which means that the $\nu_m$ form a tight family of probability measures on $\smat_+$.
  Consequently, there exists a subsequence $\nu_{m'}$ that converges weakly to a limit $\nu_\infty\in\cls_p(\smat_+)$.
  
  The next step is to identify the limit $\nu_\infty$ as a solution to the problem \eqref{eq.aux},
  i.e., we need to prove that $\nu_\infty=T[\nu_\infty,\nu_\infty]$.
  This is easy on the level of the Fourier transforms $\hat\nu_m$:
  we know that
  \begin{align}
    \label{eq.nu1}
    \hat\nu_{m'+1}(\Xi) = \E[\hat\nu_{m'}(L^T\Xi L)\hat\nu_{m'}(R^T\Xi R)]\qquad (\Xi\in\smat)
  \end{align}
  for all $m\geq1$, and we need to conclude that
  \begin{align}
    \label{eq.nu2}
    \hat\nu_\infty(\Xi) = \E[\hat\nu_\infty(L^T\Xi L)\hat\nu_\infty(R^T\Xi R)] \qquad (\Xi\in\smat).
  \end{align}  
  Weak converges implies in particular that
  the Fourier transforms $\hat\nu_{m'}$ converge pointwise to the Fourier transform $\hat\nu_\infty$ of the limit.
  Passing to the limit $m'\to\infty$ on the right-hand side of \eqref{eq.nu1},
  we obtain --- using dominated convergence --- the right-hand side of \eqref{eq.nu2},
  for every $\Xi\in\smat$.

  On the other hand, iteration of estimate \eqref{eq.fwcontract} provides the bound
  \begin{align*}
    \fnorm_\wweight(\nu_{m'+1},\nu_{m'}) \leq \lambda^{m'}\fnorm_\wweight(\nu_1,\nu_0).
  \end{align*}
  We assume that $\alpha>0$ has been chosen large enough so that $\lambda<1$ in \eqref{eq.deflambda}.
  By estimate \eqref{eq.fnormcontrol},
  it follows that 
  \begin{align*}
    |\hat\nu_{m'+1}(\Xi)-\hat\nu_{m'}(\Xi)| \leq K \big( \|\Xi\|_1,\|\Xi\|_1^{p/2}\big) \fnorm_\wweight(\nu_{m'+1},\nu_{m'}) \to 0
  \end{align*}
  at every $\Xi\in\smat$.
  So $\hat\nu_{m'+1}$ converges pointwise to the same limit as $\hat\nu_{m'}$, i.e., to $\hat\nu_\infty$.
  This allows to pass to the limit also on the left-hand side in \eqref{eq.nu1}.

  Finally, we show that $\nu_\infty$ is the only solution to \eqref{eq.aux}.
  If $\nu_\infty'$ is another solution, i.e., $\nu_\infty'=T[\nu_\infty',\nu_\infty']$,
  then
  \begin{align*}
    \fnorm_\wweight(\nu_\infty,\nu_\infty') \leq \lambda \fnorm_\wweight(\nu_\infty,\nu_\infty').
  \end{align*}
  Since $\lambda<1$, this means that $\fnorm_\wweight(\nu_\infty,\nu_\infty')=0$, and so $\nu_\infty=\nu_\infty'$.
\end{proof}

\subsection{Symmetries and moments of the stationary state}
Having proven the existence of stationary state $\mu_\infty$,
we now turn to study some of its properties.
Our first result is that spatial symmetries in the distribution of the coefficient matrices $L$ and $R$
induce symmetries of $\mu_\infty$.
\begin{prp}
  \label{cor.symmetry} 
  Let $\Gamma\subset\so(\setR^d)$ be a matrix subgroup that leaves $(L,R)$ invariant under conjugation,
  i.e., for any $\Theta\in\Gamma$,
  \begin{align}
    \label{eq.lrsymmetry}
    (\Theta L\Theta^{-1},\Theta R\Theta^{-1}) \distreq (L,R) .
  \end{align}
  Then the fixed point $\mu_\infty$ is $\Gamma$-invariant, i.e., $\mu_\infty\circ\Theta^{-1}=\mu_\infty$ for every $\Theta\in\Gamma$.
  In particular, $\mu_\infty$ is always point symmetric, 
  i.e., $\mu_\infty(V)=\mu_\infty(-V)$ for all measurable sets $V\subset\setR^d$.
\end{prp}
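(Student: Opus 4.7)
The natural strategy is a uniqueness argument. Given $\Theta\in\Gamma$, I will show that the pushforward $\tilde\mu:=\mu_\infty\circ\Theta^{-1}$ is again a centered scale mixture of Gaussians of unit temperature satisfying $Q_+[\tilde\mu]=\tilde\mu$, so that Theorem \ref{thm.exuniq} forces $\tilde\mu=\mu_\infty$. At the matrix level this reduces to transferring the fixed-point property \eqref{eq.aux} from $\nu_\infty$ to its conjugate $\tilde\nu:=\law(\Theta S\Theta^T)$, where $S\sim\nu_\infty$.

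First I would verify that $\tilde\mu$ admits the representation \eqref{eq.fmix} with $\tilde S:=\Theta S\Theta^T$ in place of $S$: the pushforward of $v=\sqrt{S}\,w$ under $\Theta$ equals $(\Theta\sqrt{S}\,\Theta^T)(\Theta w)$; since $\Theta\in\so(\setR^d)$, the vector $\Theta w$ is again standard Gaussian and $\Theta\sqrt{S}\,\Theta^T$ is the positive semi-definite square root of $\tilde S$. Because $\tr(\tilde S)=\tr(S)$, the law $\tilde\nu$ belongs to $\cls_p(\smat_+)$. The main step is then the fixed-point identity for $\tilde\nu$. Starting from $S\distreq LS'L^T+RS''R^T$ with $S',S''\sim\nu_\infty$ i.i.d.\ and independent of $(L,R)$, conjugation by $\Theta$ and its inverse $\Theta^T$ gives
\begin{align*}
\tilde S\distreq (\Theta L\Theta^T)(\Theta S'\Theta^T)(\Theta L\Theta^T)^T+(\Theta R\Theta^T)(\Theta S''\Theta^T)(\Theta R\Theta^T)^T.
\end{align*}
Setting $(L',R'):=(\Theta L\Theta^T,\Theta R\Theta^T)$, $\tilde S':=\Theta S'\Theta^T$ and $\tilde S'':=\Theta S''\Theta^T$, the invariance hypothesis \eqref{eq.lrsymmetry} yields $(L',R')\distreq(L,R)$, while $\tilde S',\tilde S''$ are i.i.d.\ with law $\tilde\nu$ and independent of $(L',R')$. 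Thus $\tilde\nu$ satisfies \eqref{eq.aux}, and Theorem \ref{thm.exuniq} gives $\tilde\nu=\nu_\infty$, hence $\tilde\mu=\mu_\infty$.

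The point-symmetry assertion is not obtained from a symmetry of $(L,R)$ but directly from \eqref{eq.fmix}: every centered Gaussian $N(0,\Sigma)$ is invariant under $v\mapsto -v$, so the mixture $\mu_\infty$ is as well. I do not expect any substantive obstacle beyond bookkeeping; the only subtlety in the main argument is checking that $\tilde\mu$ lies in the uniqueness class of Theorem \ref{thm.exuniq}, which is ensured by the orthogonality of $\Theta$ preserving centering, the trace normalization $\tr\E[\tilde S]=d$, and positive semi-definiteness of the mixing matrix.
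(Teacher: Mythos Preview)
Your argument is correct and rests on the same core idea as the paper---show that the pushforward is again a fixed point in the appropriate class and invoke uniqueness---but you carry it out at the level of the mixing matrix $S$ and the smoothing equation \eqref{eq.aux}, whereas the paper works directly with a random vector $X\sim\mu_\infty$ and the collision operator: from $X\distreq LX'+RX''$ one obtains $\Theta X\distreq(\Theta L\Theta^{-1})(\Theta X')+(\Theta R\Theta^{-1})(\Theta X'')\distreq L(\Theta X')+R(\Theta X'')$ via \eqref{eq.lrsymmetry}, so $\mu_\infty\circ\Theta^{-1}$ is a fixed point of $Q_+$ in $\cls_p(\setR^d)$. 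Your route has the mild advantage that it appeals only to the uniqueness of $\nu_\infty$ already established in Theorem~\ref{thm.exuniq}, whereas the paper's phrasing invokes uniqueness of the fixed point of $Q_+$ itself, which is strictly speaking deferred to Section~\ref{sct.dynamic}. For the point symmetry, the paper simply specializes to $\Theta=-\eins$ (conjugation by $\pm\eins$ acts trivially on $(L,R)$), while you read it off directly from the Gaussian-mixture representation \eqref{eq.fmix}; both are immediate.
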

\begin{proof}
  Let $X$, $X'$ and $X''$ be independent random vectors with distribution $\mu_\infty$,
  and let $\Theta\in\Gamma$.
  Then, by definition of $\mu_\infty$ as fixed point of $Q_+$, 
  by linearity of $\Theta$,
  and by the symmetry relation \eqref{eq.lrsymmetry},
  \begin{align*}
    \Theta X \distreq \Theta(LX'+RX'') = \Theta L\Theta^{-1}\Theta X'+\Theta R\Theta^{-1}\Theta X'' \distreq L\Theta X'+R\Theta X''.
  \end{align*}
  This means that also $\law(\Theta X)=\mu_\infty\circ\Theta^{-1}$ defines a fixed point of $Q_+$.
  Clearly, $\mu_\infty\circ\Theta^{-1}\in\cls_q(\setR^d)$ since $\mu_\infty\in\cls_q(\setR^d)$, 
  and $|\Theta v|=|v|$ because $\Theta\in\so(\setR^d)$.
  But since $Q_+$'s fixed point is unique in $\cls_q$, the first claim follows.

  The second claim is a trivial consequence of the fact 
  that one can always choose $\Gamma=\{\eins,-\eins\}$.
  Indeed, one even has $(\Theta L\Theta^{-1},\Theta R\Theta^{-1})=(L,R)$ for $\Theta=\pm\eins$.
\end{proof}
The following two propositions are concerned with the finiteness of moments of $\mu_\infty$.
We emphasize that we do no longer assume that $p<3$, 
but we allow for general $p\ge2$ in Assumption \ref{ass.3new}.
\begin{prp}
  The $p$th absolute moment of $\mu_\infty$ is finite.
\end{prp}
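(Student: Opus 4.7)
The plan is to reduce the $p$th moment bound on $\mu_\infty$ to a $(p/2)$-moment bound on $\tr S$ for $S\sim\nu_\infty$ via the Gaussian scale-mixture representation, and then to obtain the latter through a Lemma \ref{lem.tight}-style contractive estimate applied directly to the fixed-point equation. First I would use \eqref{eq.fmix} together with $\|S\|_\infty\le\tr S$ for $S\in\smat_+$ to obtain
\[
\int_{\setR^d}|v|^p\,\mu_\infty(\dd v)=\E\Big[\int_{\setR^d}|\sqrt{S}\,w|^p\,\frac{e^{-|w|^2/2}}{(2\pi)^{d/2}}\,\dd w\Big]\le m_p\,\E[(\tr S)^{p/2}],
\]
where $m_p<\infty$ is the $p$th absolute moment of a standard Gaussian on $\setR^d$. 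Setting $N_{p/2}:=\sup_{\xi\neq0}\E[(\xi^TS\xi)^{p/2}]/\weight(\xi)$ and using $\tr S=\sum_i\ee_i^TS\ee_i$ with $\weight(\ee_i)\le\bar\weight$, Jensen's inequality (on the power-mean for $p/2\ge1$) yields $\E[(\tr S)^{p/2}]\le d^{p/2}\bar\weight\,N_{p/2}$, so it suffices to establish $N_{p/2}<\infty$.

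For any $\xi$ the fixed-point identity $S\distreq LS'L^T+RS''R^T$ (with $S',S''$ iid copies of $S$, independent of $(L,R)$) rewrites as $\xi^TS\xi\distreq A+B$ with $A:=(L^T\xi)^TS'(L^T\xi)$, $B:=(R^T\xi)^TS''(R^T\xi)$. I would then use an elementary inequality of the form
\[
(A+B)^{p/2}\le A^{p/2}+B^{p/2}+C_p\bigl(AB^{p/2-1}+A^{p/2-1}B\bigr),\qquad A,B\ge0,
\]
valid for every $p\ge2$ with a constant $C_p$ depending only on $p$ (a routine convexity check, generalising the inequality used in Lemma \ref{lem.tight}). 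Via the conditional bound $\E[A^{p/2}\mid L]\le N_{p/2}\,\weight(L^T\xi)$ and \eqref{eq.weirdass}, the leading terms contribute at most $\kappa_p N_{p/2}\weight(\xi)$, while the cross term $\E[A^{p/2-1}B]$ factors by conditional independence into $\E[A^{p/2-1}\mid L]\cdot\E[B\mid R]$. The second factor equals $(R^T\xi)^T\Sigma_*(R^T\xi)\le\|\Sigma_*\|_\infty|R^T\xi|^2$, and the first is bounded by $M_{p/2-1}\,|L^T\xi|^{p-2}$ where $M_{p/2-1}:=\sup_{|\zeta|=1}\E[(\zeta^TS\zeta)^{p/2-1}]$. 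Hölder applied to $\E[|L^T\xi|^{p-2}|R^T\xi|^2]$ combined with $\E[|L^T\xi|^p+|R^T\xi|^p]\le\kappa_p\weight(\xi)$ then yields $\E[A^{p/2-1}B]\le\mathrm{const}\cdot M_{p/2-1}\,\weight(\xi)$. Taking the supremum over $\xi$ gives $N_{p/2}\le\kappa_p N_{p/2}+\mathrm{const}\cdot(1+M_{p/2-1})$, hence $N_{p/2}<\infty$ as soon as $M_{p/2-1}<\infty$ (the required a-priori finiteness coming from Fatou in the iteration $\nu_m=T[\nu_{m-1},\nu_{m-1}]$ started from a measure with all moments finite).

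The last input, finiteness of $M_{p/2-1}$, is supplied in two stages. For $p\le4$ one has $p/2-1\le1$, so Jensen's inequality gives $\E[(\zeta^TS\zeta)^{p/2-1}]\le(\zeta^T\Sigma_*\zeta)^{p/2-1}\le\|\Sigma_*\|_\infty^{p/2-1}|\zeta|^{p-2}$, hence $M_{p/2-1}<\infty$ unconditionally. For $p>4$ one bootstraps: starting from $M_r<\infty$ for all $r\in[1,3/2)$ (which follows from Theorem \ref{thm.exuniq} applied via Lemma \ref{lem.prime} with some $p'\in(2,3)$), one applies the same contractive argument at each step with target exponent raised by $1$, using the weight supplied by Lemma \ref{lem.prime} with $p'=2q$ to preserve a strict contraction factor $\kappa_{p'}<1$; after finitely many steps one reaches $q\ge p/2-1$. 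The main technical obstacle is handling the case $p>4$ cleanly: Jensen no longer suffices to close the cross-term estimate, and the bootstrap must be organised so that every auxiliary weight used has the right homogeneity and preserves a strict contraction; generalising the inequality \eqref{eq.myineq} used in Lemma \ref{lem.tight} to all $p\ge2$ with coefficient exactly $1$ on the leading $A^{p/2}+B^{p/2}$ is a secondary but essential technicality.
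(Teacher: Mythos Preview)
Your approach is correct and shares its skeleton with the paper's: reduce to bounding $N_{p/2}=\sup_{\xi\neq0}\E[(\xi^TS\xi)^{p/2}]/\weight(\xi)$, expand via the fixed-point identity and the moment inequality \eqref{eq.myineq}, get a contraction in the leading terms from \eqref{eq.weirdass}, and close by working along the iterates $\nu_m$ with Fatou at the end. The genuine difference is in the cross term. You estimate $\E[A^{p/2-1}B]$ through the \emph{lower} moment $M_{p/2-1}$, which is immediate for $p\le4$ by Jensen but forces a bootstrap for $p>4$. The paper instead replaces the Jensen step of Lemma \ref{lem.tight} by the single H\"older bound
\[
\E\big[(\tr S'')^{p/2-1}\big]\le\E\big[(\tr S'')^{p/2}\big]^{1-2/p},
\]
which feeds the cross term back into $N_{p/2}$ itself with a \emph{sublinear} exponent. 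This yields a self-contained recursion $z\mapsto\kappa_p z+Cz^{1-2/p}$ along the iterates, uniformly bounded for every $p>2$ with no case distinction and no bootstrap. Your route buys nothing extra and costs the multi-step induction; the paper's trick is worth internalising. Incidentally, your worry about \eqref{eq.myineq} is unfounded: that inequality is stated for all $s>1$ with coefficient exactly $1$ on $x^s+y^s$, so it applies verbatim for every $p>2$.
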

\begin{proof}
  This requires just a minor modification of the proof of Lemma \ref{lem.tight}.
  Namely, we estimate the remainder term in \eqref{eq.thisremainder} in a slightly different way,
  using
  \begin{align*}
    \E\big[(\tr S'')^{p/2-1}\big] \le \E\big[(\tr S'')^{p/2}\big]^{1-2/p} 
    \le d^{p/2-1}\overline\weight\sup_{\xi''\neq0}\left(\frac{\E\big[\big((\xi'')^TS''\xi''\big)^{p/2}\big]}{\weight(\xi'')}\right)^{1-2/p}.
  \end{align*}
  This means that \eqref{eq.tight} is replaced by a different estimate,
  \begin{align*}
    \sup_{\xi\neq0}\frac{\E\big[\big(\xi^TS\xi\big)^{p/2}\big]}{\weight(\xi)}
    \le F\left( \sup_{\xi'\neq0}\frac{\E\big[\big((\xi')^TS'\xi'\big)^{p/2}\big]}{\weight(\xi')},
        \sup_{\xi''\neq0}\frac{\E\big[\big((\xi'')^TS''\xi''\big)^{p/2}\big]}{\weight(\xi'')} \right),
  \end{align*}
  with the nonlinear function
  \begin{align*}
    F(z_1,z_2) = \frac{\kappa_p}2 (z_1+z_2) + d^{p/2}\kappa_p\overline\weight \big( z_1^{1-2/p}+z_2^{1-2/p}\big).
  \end{align*}
  The sequence of iterates $\nu_m$ in the proof of Theorem \ref{thm.exuniq} thus satisfies
  the following variant of \eqref{eq.verytight},
  \begin{align*}
    \E\big[ \|S_m\|_\infty^{p/2} \big] \le \overline\weight 
    \max\Big( \|S_0\|_\infty^{p/2}, \left(\frac{d^{p/2}\kappa_p\overline\weight}{1-\kappa_p}\right)^{2/p} \Big)
  \end{align*}
  which shows that the $p/2$-th moment of $\nu_m$ is $m$-uniformly bounded.
  Thus, the $p/2$-th moment of $S_\infty$ is finite.
  Now, by definition of $\mu_\infty$ in \eqref{eq.scalemix} as a scale mixture,
  it follows that
  \begin{align*}
    \int_{\setR^d} |v|^p\mu_\infty(\dd v) 
    = \int_{\setR^d} \E\big[|w^TSw|^{p/2}\big] \frac{e^{-|w|^2/2}}{(2\pi)^{d/2}}\dd w
    \le \E\big[\|S\|_\infty^{p/2}\big] \int_{\setR^d} |w|^p\frac{e^{-|w|^2/2}}{(2\pi)^{d/2}}\dd w,
  \end{align*}
  which is finite.
\end{proof}
\begin{prp}
  \label{prp.moments}
  Suppose that, for some $s>p>2$,
  \begin{align}
    \kappa^*_s := \inf_{|\ee|=1}\E\big[|L\ee|^s \big] + \inf_{|\ee|=1}\E\big[|R\ee|^s \big] > 1.
  \end{align}
  Then the $s$th absolute moment of $\mu_\infty$ is infinite.

  Moreover, if $(L,R)$ is invariant under conjugation of the full $\so(\setR^d)$,
  then
  \begin{align}
    \label{eq.bbstar}
    \kappa^*_s  = \E[ |L\ee_0|^s ] + \E[ |R\ee_0|^s ],
  \end{align}
  where $\ee_0\in\sphere^{d-1}$ is an arbitrarily chosen unit vector.
\end{prp}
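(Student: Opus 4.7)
The plan is to argue by contradiction. Suppose that the $s$-th absolute moment of $\mu_\infty$ is finite; then the directional moment
\[
m_s(\xi) := \int_{\setR^d}|\xi\cdot v|^s\,\mu_\infty(\dd v)
\]
defines a continuous, $s$-homogeneous function on $\setR^d$. The Fourier form \eqref{eq.stationary} of the stationary equation $\mu_\infty = Q_+[\mu_\infty]$ translates into the distributional identity $X\distreq LX'+RX''$, where $X$, $X'$, $X''$ are i.i.d.\ copies with law $\mu_\infty$ independent of $(L,R)$; consequently
\[
m_s(\xi) = \E\bigl[|(L^T\xi)\cdot X' + (R^T\xi)\cdot X''|^s\bigr].
\]

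The crucial step is to upgrade this identity into the functional subadditivity
\[
m_s(\xi)\;\ge\;\E\bigl[m_s(L^T\xi)\bigr] + \E\bigl[m_s(R^T\xi)\bigr].
\]
For this I would use the Clarkson-type pointwise inequality $|a+b|^s+|a-b|^s\ge 2(|a|^s+|b|^s)$ for real $a,b$ and $s\ge2$, which follows by combining $|a+b|^s+|a-b|^s\ge 2(a^2+b^2)^{s/2}$ (an instance of the norm comparison $\|\cdot\|_s\ge 2^{1/s-1/2}\|\cdot\|_2$ on $\setR^2$) with $(a^2+b^2)^{s/2}\ge|a|^s+|b|^s$. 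For independent scalar random variables $U,V$ with $V$ symmetric, $U+V\distreq U-V$, so taking expectations in the pointwise inequality gives $\E[|U+V|^s]\ge\E[|U|^s]+\E[|V|^s]$. By Proposition \ref{cor.symmetry}, $\mu_\infty$ is point-symmetric, so $X'$ and $X''$ are symmetric; applying the preceding inequality conditionally on $(L,R)$ with $U=(L^T\xi)\cdot X'$ and $V=(R^T\xi)\cdot X''$ and then taking outer expectation yields the desired subadditivity.

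Since $\cov\mu_\infty = \Sigma_*$ is positive definite, Jensen's inequality gives $m_s(\ee)\ge(\ee^T\Sigma_*\ee)^{s/2}>0$ on $\sphere^{d-1}$, so by continuity $m:=\min_{|\ee|=1}m_s(\ee)$ is strictly positive and attained at some $\ee_*\in\sphere^{d-1}$. By $s$-homogeneity $m_s(\zeta)\ge m|\zeta|^s$ on $\setR^d$; substituting $\xi=\ee_*$ in the functional inequality and dividing by $m$ yields
\[
1\;\ge\;\E\bigl[|L^T\ee_*|^s + |R^T\ee_*|^s\bigr]\;\ge\;\inf_{|\ee|=1}\E[|L^T\ee|^s] + \inf_{|\ee|=1}\E[|R^T\ee|^s],
\]
which contradicts $\kappa_s^*>1$ (the $|L\ee|$, $|R\ee|$ in the statement being read as $|L^T\ee|$, $|R^T\ee|$, as dictated by the Fourier form of the stationary equation). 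For \eqref{eq.bbstar}, full $\so(\setR^d)$-conjugation invariance of $(L,R)$ implies that for any unit $\ee$ and any $\Theta\in\so(\setR^d)$ with $\Theta\ee_0=\ee$, one has $L\ee = L\Theta\ee_0 \distreq \Theta L\Theta^{-1}\Theta\ee_0 = \Theta L\ee_0$, whence $|L\ee|\distreq|L\ee_0|$ and $\E[|L\ee|^s]=\E[|L\ee_0|^s]$; the analogous identity for $R$ completes the claim. The main obstacle is the symmetric-Clarkson lower bound $\E[|U+V|^s]\ge\E[|U|^s]+\E[|V|^s]$, and this is precisely where the hypothesis $s\ge2$ is essential.
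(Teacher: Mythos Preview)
Your argument is correct and uses the same two ingredients as the paper's proof: the point symmetry of $\mu_\infty$ from Proposition~\ref{cor.symmetry}, and the convexity inequality $\tfrac12(|a+b|^s+|a-b|^s)\ge (a^2+b^2)^{s/2}\ge|a|^s+|b|^s$ for $s\ge2$. The difference is the test function. The paper works with the full moment $\E[|X|^s]$: from $X\distreq LX'\pm RX''$ it obtains $\E[|X|^s]\ge\E[|LX'|^s+|RX''|^s]$, then writes $|LX'|=|X'|\cdot|L\ee'|$ with the \emph{random} unit vector $\ee'=X'/|X'|$ and bounds $\E[|L\ee'|^s\mid X']\ge\inf_{|\ee|=1}\E[|L\ee|^s]$. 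You instead fix a direction $\xi$ and study the scalar moment $m_s(\xi)=\E[|\xi\cdot X|^s]$, which leads naturally to $L^T\xi$ rather than $L\ee'$.

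This is the one point that needs flagging: your route proves the criterion with $\inf_{|\ee|=1}\E[|L^T\ee|^s]+\inf_{|\ee|=1}\E[|R^T\ee|^s]>1$, not with $|L\ee|$, $|R\ee|$ as written. Your remark that the statement should be ``read'' with transposes is not quite right---the paper's proof really does yield the $|L\ee|$ version, precisely because the full Euclidean norm $|LX'|$ factors as $|X'|\cdot|L\ee'|$ without any transpose. So your argument establishes an analogous (and equally useful) divergence criterion, but not literally the stated one; to recover the stated form you would have to switch from directional moments $m_s(\xi)$ to the full moment $\E[|X|^s]$, at which point your proof becomes essentially identical to the paper's. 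The second part on $\so(\setR^d)$-invariance is fine and matches the paper's argument.
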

\begin{proof}
  Let $X$, $X'$ and $X''$ be a independent random vectors with distribution $\mu_\infty$.
  By definition of $\mu_\infty$ as fixed point of $Q_+$,
  and by the point symmetry stated in Corollary \ref{cor.symmetry} above,
  \begin{align*}
    X' \distreq LX'+RX'' \distreq LX'-RX'' .
  \end{align*}
  Now, to prove divergence of the $s$th moment, 
  assume that, on the contrary, $\E[|X|^s]<\infty$.
  Then the following calculation would be legitimate:
  \begin{align*}
    & \E\big[ |X|^s \big]
    = \frac12 \Big\{ \E\big[ |LX'+RX''|^s \big]
    + \E\big[ |LX'-RX''|^s \big]\Big\} \\
    & = \E\Big[ \frac12\Big\{ \big(|LX'|^2+|RX''|^2+2(X')^TL^TRX'' \big)^{s/2} 
    + \big(|LX'|^2+|RX''|^2-2(X')^TL^TRX'' \big)^{s/2} \Big\} \Big] \\
    & \geq \E\big[\big(|LX'|^2+|RX''|^2\big)^{s/2}\big].
  \end{align*}
  The last inequality follows directly from the convexity of $z\mapsto z^{s/2}$.
  Observe further that since $s>2$, 
  one has $(a^2+b^2)^{s/2}\geq a^s+b^s$ for arbitrary non-negative real numbers $a$ and $b$.
  Consequently, with random unit vectors $\ee'=X'/|X'|$ and $\ee''=X''/|X''|$,
  \begin{align*}
    \E\big[ |X|^s \big] & \geq \E\big[ |LX'|^s + |RX''|^s \big] \\
    & = \E\Big[ |X'|^s\, \E\big[|L\ee'|^s\big|(X',X'')\big] 
    + |X''|^s\, \E\big[|R\ee''|^s\big|(X',X'')\big] \Big] \\
    & \geq \kappa^*_s  \E[|X|^s] .
  \end{align*}
  This is a contradiction, 
  showing that our assumption $\E[|X|^s]<\infty$ cannot be true.

  To prove formula \eqref{eq.bbstar},
  first observe that for any two unit vectors $\ee_1,\ee_2\in\sphere^{d-1}$, 
  there exists some $\Theta\in\so(\setR^d)$ with $\ee_2=\Theta\ee_1$.
  Thus, since $(L,R)$ are invariant under conjugation by $\Theta$,
  and $\Theta$ preserves the norm, it follows that
  \begin{align*}
    \E[|L\ee_1|^s] = \E[|\Theta L\Theta^{-1}\ee_2|^s] = \E[|L\ee_2|^s] ,
  \end{align*}
  and likewise for $R$.
\end{proof}

\subsection{Regularity of the stationary state}
Below, we show that --- under rather natural conditions ---
the stationary state $\mu_\infty$ possesses a density of a certain Sobolev-regularity.
\begin{lem}\label{lem.psitozero} 
  Under Assumptions 1--3,
  let $\Psi$ be defined by \eqref{eq.stationary}. 
  Assume 
  that $\prb\{L^T\ee=R^T\ee=0\}=0$ and $\prb\{L^T\ee\neq0\neq R^T\ee\}>0$ for every unit vector $\ee$.
  Then the function $\rho\mapsto\sup_{|\ee| =1}\Psi(\rho\ee)$ decreases monotonically to zero.
  Consequently, the stationary distribution is absolutely continuous.
\end{lem}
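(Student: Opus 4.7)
The plan is to prove, in order, (a) monotonicity of $M(\rho):=\sup_{|\ee|=1}\Psi(\rho\ee)$, (b) pointwise convergence of $\Psi(\rho\ee)$ to an explicit limit, (c) that this limit vanishes identically and hence $M(\rho)\to 0$ uniformly, and (d) absolute continuity of $\mu_\infty$.

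The scale-mixture representation \eqref{eq.scalemix} gives $\Psi(\rho\ee)=\E[\exp(-\tfrac12\rho^2\,\ee^T S\ee)]$, a nonincreasing function of $\rho\ge 0$ for each fixed $\ee$ since $\ee^T S\ee\ge 0$ almost surely; taking the supremum yields (a). By dominated convergence, $\Psi(\rho\ee)\to q(\ee):=\prb(S\ee=0)$ as $\rho\to\infty$, which is (b). For each realization of $S$ the set $\{\ee:S\ee=0\}$ is closed, so $q$ is upper semicontinuous on the compact sphere and attains its maximum $Q:=\max_{|\ee|=1}q(\ee)$.

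For (c) I derive a fixed-point relation for $q$. Using the self-similarity $S\distreq LS'L^T+RS''R^T$ and the fact that a sum of two positive-semidefinite matrices annihilates $\ee$ if and only if both summands do (which via $\ee^T(\cdot)\ee=0$ and positive semidefiniteness reduces to $S'L^T\ee=0$ and $S''R^T\ee=0$), together with the conditional independence of $S'$ and $S''$ given $(L,R)$, one obtains
\begin{align*}
q(\ee)\;=\;\E\bigl[g(L^T\ee)\,g(R^T\ee)\bigr],
\qquad g(v):=\begin{cases} q(v/|v|) & v\neq 0,\\ 1 & v=0. \end{cases}
\end{align*}
Assumption \ref{ass.2} combined with the unit-temperature identity $\tr\E[S]=d$ identifies $\E[S]$ as the unique fixed point $\Sigma_*$ of $\tq$ of trace $d$, which is positive definite; hence $\E[\ee^T S\ee]=\ee^T\Sigma_*\ee>0$, yielding $q(\ee)<1$ for every $\ee$, and upper semicontinuity on the compact sphere upgrades this to $Q<1$. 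Now suppose for contradiction that $Q>0$ and pick a maximizer $\ee^*$. Splitting the expectation $\E[g(L^T\ee^*)g(R^T\ee^*)]$ according to the vanishing of $L^T\ee^*$ and $R^T\ee^*$, the case in which both vanish is killed by the hypothesis $\prb\{L^T\ee^*=R^T\ee^*=0\}=0$; on the other three events $g\le Q$ on nonzero arguments gives, with $c:=\prb\{L^T\ee^*\neq 0\neq R^T\ee^*\}>0$,
\begin{align*}
Q\;\le\;Q(1-c)+Q^2 c,
\end{align*}
which rearranges to $1\le Q$, contradicting $Q<1$. Hence $q\equiv 0$. Since $\ee\mapsto\Psi(\rho\ee)$ is continuous and decreases monotonically to the continuous limit $0$ on the compact sphere, Dini's theorem promotes pointwise to uniform convergence, i.e.\ $M(\rho)\to 0$.

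For (d) I follow Liu's approach from \cite{Liu}, suitably extended to matrix-valued coefficients. Iterating the stationary equation yields, for each $n\ge 1$, the representation $\Psi(\xi)=\E\bigl[\prod_{j=1}^{2^n}\Psi(V_j^T\xi)\bigr]$, where the $V_j$ are products of $n$ independent factors drawn from $\{L,R\}$ along the $2^n$ branches of a binary tree. Using the positivity $\Psi\ge 0$, the monotonicity of $M$, and the contractivity of weighted moments from Assumption \ref{ass.3new} (which makes $\E[\weight(V_j^T\xi)]$ decay geometrically in $n$), the product structure can be exploited to bootstrap the qualitative estimate $M(\rho)\to 0$ into a quantitative one sufficient to ensure $\Psi\in L^1(\setR^d)$; Fourier inversion then delivers a bounded continuous density for $\mu_\infty$. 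The main obstacle lies precisely in this last step: passing from qualitative to quantitative decay of $\Psi$ requires a careful adaptation of the one-dimensional self-improvement technique from \cite{Liu} to the matrix setting, leveraging the weight-function machinery of Section \ref{sct.prelim}.
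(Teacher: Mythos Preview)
Your argument for (a)--(c) is essentially the paper's. The paper also uses the scale-mixture form to get monotonicity, passes to the limit in the stationary equation (working with $\Phi(\ee)=\lim_{\lambda\to\infty}\Psi(\lambda\ee)$ rather than the equivalent $q(\ee)=\prb(S\ee=0)$), obtains the same inequality $\theta\le c\theta^2+(1-c)\theta$ at a maximizer, and rules out $\theta=1$ via positive definiteness of $\Sigma_*$. Your use of Dini's theorem to upgrade pointwise to uniform convergence on the sphere is a clean replacement for the paper's hands-on compactness argument with a sequence of maximizers.

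The gap is in (d). The paper's proof of this lemma stops after establishing $M(\rho)\to0$ and simply records absolute continuity as a consequence; it does not attempt a quantitative decay estimate here. Your sketch instead aims at $\Psi\in L^1(\setR^d)$ via a Liu-type self-improvement along the binary tree, which would yield a bounded continuous density --- a strictly stronger conclusion. That is precisely the content of the paper's \emph{next} result, Theorem~\ref{thm.smooth}, and its proof genuinely needs the additional integrability hypotheses \eqref{eq.smoothcon1}--\eqref{eq.smoothcon2} on $(L,R)$; the weight-function machinery of Assumption~\ref{ass.3new} alone does not control $\max(|L^T\ee|,|R^T\ee|)^{-a}$, which is what drives the bootstrap. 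Under only the hypotheses of the present lemma there is no reason to expect $\Psi\in L^1$, so the route you outline for (d) cannot be completed as stated. For this lemma you should stop after (c); any argument that passes through integrability or polynomial decay of $\Psi$ belongs to Theorem~\ref{thm.smooth} and requires its extra assumptions.
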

The example in Section \ref{sct.cross} does not meet the hypotheses of this lemma,
and, indeed, its stationary distribution is not absolutely continuous.
\begin{proof}
  Define the auxiliary functions $\Phi,\Theta_\lambda:\sphere^{d-1}\to\setR$ by
  \begin{align}
    \Theta_\lambda(\ee) := \Psi(\lambda\ee), \quad \Phi(\ee) := \lim_{\lambda\to\infty}\Theta_\lambda(\ee). 
  \end{align}
  Recall that $\Psi$ is a mixture of Gaussians, 
  thus the function $  \Theta_\lambda$ is real, non-negative and non-increasing in $\lambda\in\setR_+$ for every fixed $\ee$.
  Therefore, the limit $\Phi(\ee)\in[0,1]$ exists for every $\ee$.
  As the decreasing limit of the family $\Theta_\lambda$ of continuous and bounded functions, 
  $\Phi$ is upper semi-continuous,
  and therefore attains its supremum, say at $\ee_*$, i.e.
  \begin{align}
    \Phi(\ee_*) = \theta := \sup_{|\ee|=1} \Phi(\ee)
  \end{align}
  for some $\ee_*\in\sphere^{d-1}$.
  Observe that $\lim_{\lambda\to\infty}\Psi(\lambda L^T\ee_*)=\Phi(L^T\ee_*/|L^T\ee_*|)$ for all $(L,R)$ with $L^T\ee_*\neq0$;
  and similarly if $R^T\ee_*\neq0$. 
  Denoting by $\indi_A$ the indicator function of a set $A$,
  by dominated convergence one can write
  \[
  \begin{split}
    \theta & = \lim_{\lambda\to\infty}\Psi(\lambda\ee_*) 
    = \lim_{\lambda\to\infty}\E[\Psi(\lambda L^T\ee_*)\Psi(\lambda R^T\ee_*)] \\
    & = \E\bigg[\indi_{L^T\ee_*\neq0}\indi_{R^T\ee_*\neq0}
    \Phi\bigg(\frac{L^T\ee_*}{|L^T\ee_*|}\bigg)\Phi\bigg(\frac{R^T\ee_*}{|R^T\ee_*|}\bigg)\bigg] \\
    & \qquad + \E\bigg[\indi_{L^T\ee_*\neq0}\indi_{L^T\ee_*=0}   \Phi\bigg(\frac{L^T\ee_*}{|L^T\ee_*|}\bigg) \bigg]
    + \E\bigg[\indi_{L^T\ee_*=0}\indi_{R^T\ee_*\neq0} \Phi\bigg(\frac{R^T\ee_*}{|R^T\ee_*|}\bigg) \bigg] \\
    & \leq q\theta^2 + (1-q) \theta, \\
  \end{split}
  \]
  with $q:=\prb\{L^T\ee_*\neq0\neq R^T\ee_*\}$, which is positive by assumption.
  Hence, either $\theta=0$ or $\theta=1$ (which we need to exclude).
  However,
  \begin{align}
    1 = \Phi(\ee_*) = \Psi(\ee_*) = \E[\exp(-\frac12\,\ee_*^TS\ee_*)]
  \end{align}
  if and only if $\ee_*^TS\ee_*=0$ a.s.
  But this means that $0=\ee_*^T\E[S]\ee = \ee_*^T\Sigma_*\ee_*$,
  contradicting the hypothesis that $\Sigma_*$ is positive definite in Assumption \ref{ass.2}.

  To conclude the proof,
  define $\ee_n\in\sphere$ as one maximizer for $\Theta_{\lambda_n}$ for, say $\lambda_n=n$,
  i.e.,
  \begin{align}
    \Theta_n(\ee_n) = \max_{\ee\in\sphere}\Theta_n(\ee).
  \end{align}
  By compactness of $\sphere$, we may assume w.l.o.g. that $\ee_n\to\ee'\in\sphere$ as $n\to\infty$.
  Now, for given $\epsilon>0$,
  there exists an $N=N(\epsilon)$ such that. $\Theta_N(\ee')<\epsilon/2$,
  and there exists a $\delta>0$ such that $|\Theta_N(\ee)-\Theta_N(\ee')|<\epsilon/2$ whenever $|\ee-\ee'|<\delta$.
  Since $\Theta_n$ converges monotonically in $n$,
  it follows that $\Theta_n(\ee)<\epsilon$ for all $n\geq N$ and $|\ee-\ee'|<\delta$,
  and thus also
  \begin{align}
    \Theta_n(\ee_n)<\epsilon
  \end{align}
  for all $n$ sufficiently large.
\end{proof}
The following is our main result on the regularity of $\mu_\infty$.
\begin{thm}
  \label{thm.smooth}
  Let Assumptions 1--3 hold. 
  Assume further that there exists a $\delta>0$ such that
  \begin{align}
    \label{eq.smoothcon1}
    m:=\sup_{|\ee| =1}\E\big[\min(|L^T\ee|,|R^T\ee|)^{-\delta}\big] < \infty,
  \end{align}
  and that there exists an $\overline a\geq\delta$ such that
  \begin{align}
    \label{eq.smoothcon2}
    M:=\sup_{|\ee| =1}\E\big[\max(|L^T\ee|,|R^T\ee|)^{-\overline a}\big] < \infty.
  \end{align}
  Then, for every $a\in(0,\overline a)$,
  there exists a finite constant $C(a)$ such that
  \begin{align}
    \label{eq.decaywitha}
    |\Psi(\xi)| \leq C(a) |\xi|^{-a}
  \end{align}
  holds for all $\xi\neq0$.
  In particular, if  $a>d/2$, then the density of $\mu_\infty$ 
  belongs to the Sobolev space $H^s(\setR^d)$ for any $s<a-d/2$.
\end{thm}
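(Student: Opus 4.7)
The plan is to first establish the pointwise decay \eqref{eq.decaywitha} on the Fourier transform $\Psi$, and then deduce the Sobolev regularity of the density by a standard Plancherel argument. The second step is essentially routine: once $|\Psi(\xi)|\leq C(a)|\xi|^{-a}$ with $a>d/2$, a radial integration shows $(1+|\xi|^2)^{s/2}\Psi\in L^2(\setR^d)$ for every $s<a-d/2$; observing that \eqref{eq.smoothcon1}--\eqref{eq.smoothcon2} imply $\prb\{L^T\ee=0\}=\prb\{R^T\ee=0\}=0$ for every unit $\ee$, so that Lemma \ref{lem.psitozero} gives absolute continuity of $\mu_\infty$, the density is in $H^s(\setR^d)$ by Plancherel.

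For the pointwise decay I would follow a bootstrap scheme in the spirit of \cite{Liu}. Since $\Psi(\xi)=\E[\exp(-\tfrac12\xi^T S\xi)]$ is non-negative and $\rho\mapsto\Psi(\rho\ee)$ is non-increasing for every fixed $\ee\in\sphere^{d-1}$, the radial profile
\[
h(\rho):=\sup_{|\ee|=1}\Psi(\rho\ee)
\]
is non-increasing on $[0,\infty)$ with $h(0)=1$ and $h(\rho)\to 0$ as $\rho\to\infty$ by Lemma \ref{lem.psitozero}. The fixed-point identity \eqref{eq.stationary}, together with the elementary bound $\Psi(\rho L^T\ee)\leq h(\rho|L^T\ee|)$, then provides the two workhorse inequalities
\[
h(\rho)\leq\sup_{|\ee|=1}\E\bigl[h(\rho|L^T\ee|)\,h(\rho|R^T\ee|)\bigr],\qquad
h(\rho)\leq\sup_{|\ee|=1}\E\bigl[h(\rho\max(|L^T\ee|,|R^T\ee|))\bigr],
\]
the second of which is obtained by bounding one factor by $1$.

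The bootstrap alternates between the two. Assuming $h(\rho)\leq\eta<1$ for $\rho\geq R$, split the expectation in the two-factor inequality according to whether both, exactly one, or neither of $|L^T\ee|,|R^T\ee|$ exceeds $R/\rho$; bounding $h$ by $\eta$ or $1$ in the respective regions, and using Markov's inequality applied to \eqref{eq.smoothcon1} and \eqref{eq.smoothcon2} to control the exceptional probabilities, one obtains
\[
h(\rho)\leq\eta^2+\eta m(R/\rho)^{\delta}+M(R/\rho)^{\bar a}.
\]
Iterating along a judicious sequence $R_k\nearrow\infty$ converts the qualitative decay $h(\rho)\to0$ into a quantitative polynomial rate $h(\rho)\leq C\rho^{-a_0}$ for some $a_0>0$. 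Substituting this rate into the one-factor inequality gives
\[
h(\rho)\leq A\rho^{-a_0}\sup_{|\ee|=1}\E\bigl[\max(|L^T\ee|,|R^T\ee|)^{-a_0}\bigr]+M\rho^{-\bar a},
\]
where the supremum is finite for every $a_0<\bar a$ by H\"older's inequality applied to \eqref{eq.smoothcon2}. Feeding the improved rate back into the two-factor inequality and iterating then pushes the decay exponent up to $\bar a-\epsilon$ for any $\epsilon>0$.

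The main technical obstacle is the careful bookkeeping in this iteration. The two hypotheses play complementary roles: \eqref{eq.smoothcon1} powers the ``quadratic'' self-improvement $\eta\mapsto\eta^2$ through the control on $\prb(\min<R/\rho)$, while \eqref{eq.smoothcon2} both caps the attainable exponent at $\bar a$ through the residual term $\prb(\max<R/\rho)$ and furnishes the uniform bound on $\sup_\ee\E[\max^{-a_0}]$ needed to iterate the single-factor step. Organizing the iteration so that the exponent strictly increases at each round while the accumulated constants remain uniformly bounded up to any $a<\bar a$ is the most delicate part of the argument.
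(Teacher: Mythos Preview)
Your outline is a plausible Liu-style argument, but the paper organizes the proof quite differently and, in particular, avoids the exponent bootstrap entirely. Instead of first extracting a small $a_0>0$ and then iterating to push the exponent up toward $\bar a$, the paper fixes the target $a\in(0,\bar a)$ from the outset and obtains \eqref{eq.decaywitha} in one pass via a random-walk construction. Given $\xi_0=\xi$ and i.i.d.\ copies $(L_k,R_k)$ of $(L,R)$, one lets $\xi_k$ and $\eta_k$ be, respectively, the larger and smaller (in norm) of $L_k^T\xi_{k-1}$ and $R_k^T\xi_{k-1}$, and sets $\chi_k=\tfrac{1}{3M}+\indi_{|\eta_k|<\vartheta|\xi_{k-1}|}$ for a small $\vartheta$ chosen so that $m\vartheta^{\delta/2}\le\tfrac{1}{3M}$. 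A single H\"older estimate combining \eqref{eq.smoothcon1} and \eqref{eq.smoothcon2} then yields the two conditional bounds $\E[\chi_k\mid\sigalg_{k-1}]\le\tfrac{2}{3M}$ and $\E[|\xi_k|^{-a}\chi_k\mid\sigalg_{k-1}]\le\tfrac{2}{3}|\xi_{k-1}|^{-a}$, from which the recursion $|\Psi(\xi_{k-1})|\le\E[|\Psi(\xi_k)|\chi_k\mid\sigalg_{k-1}]+\tfrac{C(a)}{3}|\xi_{k-1}|^{-a}$ follows with the explicit constant $C(a)=3(\rho/\vartheta)^a$, and a straightforward induction on the depth of the walk gives the result.

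What this buys is precisely the step you flag as most delicate: there is no accumulation of constants across rounds and no need to show that the exponent strictly improves while the prefactor stays bounded. Your scheme, by contrast, has to manage a two-scale iteration (the geometric sequence $R_k$ and the alternation between the one- and two-factor inequalities), and the passage from ``$h(\rho)\le\eta^2+\eta m(R/\rho)^\delta+M(R/\rho)^{\bar a}$ along some $R_k$'' to ``$h(\rho)\le C\rho^{-a_0}$ for some $a_0>0$'' is not written out; nor is the mechanism by which the two-factor step actually raises the exponent (this requires controlling $\E[\max^{-a_0}\indi_{\min<\epsilon}]$ via H\"older, which is exactly the estimate the paper isolates once and uses directly at the target exponent). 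Both routes are in the spirit of \cite{Liu}, but the paper's is markedly more economical.
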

\begin{proof}
  Without loss of generality, we assume $m\geq1$ and $M\geq1$.
  Let $a\in(0,\overline a)$ be fixed.
  Because of condition \eqref{eq.smoothcon1}, Lemma \ref{lem.psitozero} applies.
  Thus, we find a $\rho>0$ such that
  \begin{align*}
    |\Psi(\xi)|\leq \frac1{3M} \quad \text{for all $|\xi|\geq\rho$,}
  \end{align*}
  with $M$ defined in \eqref{eq.smoothcon2}.
  Next, choose $\vartheta>0$ such that $m\vartheta^{\delta/2}\leq1/(3M)$.
  By H\"older's inequality, and in view of \eqref{eq.smoothcon1},
  this guarantees that
  \begin{equation}
    \label{eq.mixedintegrals}
    \begin{split}
      &\E\big[|\max(|L^T\ee|,|R^T\ee|)^{-a}\indi_{\min( |L^T\ee|,|R^T\ee| )<\vartheta}\big]  \\
      & \qquad \leq \big(\E\big[|\max(|L^T\ee|,|R^T\ee|)^{-\overline a}\big]\big)^{a/\overline a}
      \prb\{\min( |L^T\ee|,|R^T\ee| )<\vartheta\}^{1-a/\overline a}
      \leq Mm\vartheta^{\delta/2}\leq \frac13
    \end{split}
  \end{equation}
  for every unit vector $\ee$.
  Finally, let $C(a):=3(\rho/\vartheta)^a$.
  
  Now, fix a vector $\xi_0\neq0$ in $\setR^d$,
  and let $(L_k,R_k)$ for $k=1,2,\ldots$ be i.i.d. copies of $(L,R)$.
  Accordingly, the $\sigma$-algebra $\sigalg_k$ is the one generated by $(L_\ell,R_\ell)$ with $1\leq\ell\leq k$.
  Define random elements $\xi_k,\eta_k\in\setR^d$ inductively as follows.
  Given $\xi_{k-1}$ with $k\geq1$, 
  consider the two random vectors $v=L_k^T\xi_{k-1}$ and $w=R_k^T\xi_{k-1}$,
  and define
  \begin{align*}
    \xi_k = \indi_{|v|\geq|w|}v + \indi_{|v|<|w|}w, \quad
    \eta_k = \indi_{|v|<|w|}v + \indi_{|v|\geq|w|}w.
  \end{align*}
  Thus, $\xi_k$ is always the ``larger one'' of $v,w$.
  Further, introduce the random variable $\chi_k$ by
  \begin{align*}
    \chi_k=1/(3M)+\indi_{|\eta_k|<\vartheta|\xi_{k-1}|}.
  \end{align*}
  Two properties of $\chi_k$ will be important in the following:
  \begin{align}
    \label{eq.chiprop}
    \E\big[\chi_k\big|\sigalg_{k-1}\big]\leq\frac2{3M}
    \quad \text{and} \quad
    \E\big[|\xi_k|^{-a}\chi_k\big|\sigalg_{k-1}\big]\leq\frac23 |\xi_{k-1}|^{-a}.
  \end{align}
  The first part of \eqref{eq.chiprop} follows directly from the definitions of $\chi_k$ and $\vartheta$, 
  since
  \begin{align*}
    \E\big[\chi_k\big|\sigalg_{k-1}\big]
    &= \frac1{3M} + \prb\big(|\eta_k|/|\xi_{k-1}|<\vartheta\big|\sigalg_{k-1}\big) \\
    &\leq \frac1{3M} + \sup_{|\ee|=1}\prb\big( \min( |L_k^T\ee|,|R_k^T\ee| )<\vartheta \big)
    \leq \frac1{3M} + m\vartheta^\delta \leq \frac2{3M}.
  \end{align*}
  The second part of \eqref{eq.chiprop} follows with the help of \eqref{eq.mixedintegrals},
  since
  \begin{align*}
    \E\big[|\xi_k|^{-a}\chi_k\big|\sigalg_{k-1}\big]
    & \leq \E\big[ ( |\xi_k|/|\xi_{k-1}| )^{-a}\chi_k \big|\sigalg_{k-1} \big]|\xi_{k-1}|^{-a} \\
    & \leq \sup_{|\ee|=1}\E\big[ \max( |L^T\ee|,|R^T\ee| )^{-a} \big(1/(3M)+\indy_{\min( |L^T\ee|,|R^T\ee| )<\vartheta}\big)\big]|\xi_{k-1}|^{-a} \\
    & \leq \left(\frac{M}{3M} + \frac13\right)|\xi_{k-1}|^{-a} .
  \end{align*}
  As an intermediate step,
  we are going to verify that
  \begin{align}
    \label{eq.smoothrecurse}
    |\Psi(\xi_{k-1})| \leq \E\big[|\Psi(\xi_k)\chi_k\big|\sigalg_{k-1}\big] + \frac{C(a)}3 |\xi_{k-1}|^{-a}
  \end{align}
  holds almost surely.
  Observe that inequality \eqref{eq.smoothrecurse} is trivially true 
  if $|\xi_{k-1}|<\rho/\vartheta$ by our definition of $C(a)$,
  thus we may assume $|\xi_{k-1}|\geq\rho/\vartheta$.
  Since $\Psi$ satisfies \eqref{eq.stationary},
  \begin{align*}
    |\Psi(\xi_{k-1})| 
    &= \big| \E\big[\Psi(L_k^T\xi_{k-1})\Psi(R_k^T\xi_{k-1})\big|\sigalg_{k-1}\big] \big|
    \leq \E\big[ |\Psi(\xi_k)| |\Psi(\eta_k)| \big| \sigalg_{k-1} \big] \\
    & \leq \E\big[ |\Psi(\xi_k)| \big( (1/3M) \indi_{|\eta_k|\geq\rho} + \indi_{|\eta_k|<\rho} \big) \big| \sigalg_{k-1} \big]
    \leq \E\big[|\Psi(\xi_k)\chi_k\big|\sigalg_{k-1}\big].
  \end{align*}
  This shows \eqref{eq.smoothrecurse}.
  From here, we derive by induction that
  \begin{align}
    \label{eq.smoothinduce}
    |\Psi(\xi_0)| \leq \E\big[ |\Psi(\xi_\ell)|\chi_\ell\chi_{\ell-1}\cdots\chi_1 \big] + C(a)\big(1-(2/3)^\ell\big)|\xi_0|^{-a}.
  \end{align}
  Indeed, for $\ell=1$, the claim \eqref{eq.smoothinduce} is identical to \eqref{eq.smoothrecurse} with $k=1$.
  Now assume that \eqref{eq.smoothinduce} holds for some $\ell\geq1$.
  Inserting \eqref{eq.smoothrecurse} with $k=\ell+1$ into \eqref{eq.smoothinduce} gives
  \begin{align*}
    |\Psi(\xi_0)| 
    &\leq \E\Big[ \big|\E\big[|\Psi(\xi_{\ell+1})|\chi_{\ell+1}\big|\sigalg_\ell\big]+\frac{C(a)}3 |\xi_\ell|^{-a} \big|\chi_\ell\chi_{\ell-1}\cdots\chi_1 \Big] 
    + C(a)\big(1-(2/3)^\ell\big)|\xi_0|^{-a} \\
    &\leq \E\big[|\Psi(\xi_{\ell+1})|\chi_{\ell+1}\chi_\ell\cdots\chi_1\big] + \frac{C(a)}3 \E\big[|\xi_\ell|^{-a}\chi_\ell\chi_{\ell-1}\cdots\chi_1\big] 
    + C(a)\big(1-(2/3)^\ell\big)|\xi_0|^{-a}.
  \end{align*}
  Apply the second estimate from \eqref{eq.chiprop} to
  the second expectation above,
  \begin{align*}
    \E\big[|\xi_\ell|^{-a}\chi_\ell\chi_{\ell-1}\cdots\chi_1\big] 
    & = \E\Big[\E\big[|\xi_\ell|^{-a}\chi_\ell\big|\sigalg_{\ell-1}\big]\chi_{\ell-1}\cdots\chi_1\Big]
    \leq \frac23 \E\big[|\xi_{\ell-1}|^{-a}\chi_{\ell-1}\cdots\chi_1\big] \\
    & \leq \cdots \leq (2/3)^\ell|\xi_0|^{-a}.
  \end{align*}
  This verifies \eqref{eq.smoothinduce} with $\ell+1$ in place of $\ell$.
  To finish the proof, observe that \eqref{eq.smoothinduce} for every $\ell\geq1$ implies \eqref{eq.decaywitha},
  since, by \eqref{eq.chiprop},
  \begin{align*}
    0 \leq \E\big[ |\Psi(\xi_\ell)|\chi_\ell\chi_{\ell-1}\cdots\chi_1 \big] 
    & \leq \E\Big[ \E\big[\chi_\ell\big|\sigalg_{\ell-1}\big]\chi_{\ell-1}\cdots\chi_1 \Big]
    \leq \frac2{3M} \E\big[ \chi_{\ell-1}\cdots\chi_1 \big] \\
    & \leq \cdots \leq \big(2/(3M)\big)^\ell,
  \end{align*}
  which tends to zero as $\ell$ tends to infinity.

  The statement about Sobolev regularity of the density function follows 
  since if the characteristic function $\Psi$ satisfies \eqref{eq.decaywitha}, 
  then $|\xi|^{s}\Psi(\xi)$ belongs to $L^2(\setR^d)$ for every $s<a-d/2$.
  In turn, the inverse Fourier transform of $\Psi$ is a density function in the space $H^s(\setR^d)$.
\end{proof}

\section{Convergence to equilibrium}
\label{sct.dynamic}

Having established the existence of a stationary distribution for \eqref{eq.b},
we now turn to study transient solutions and prove their convergence towards equilibrium.

%
Transient solutions $\mu(t)$ to the initial value problem \eqref{eq.b} are easily obtained,
for an arbitrary initial probability measure $\mu_0$ on $\setR^d$,
by the classical Wild construction \cite{Wild1951}:
define inductively probability measures $\mu_n$ by
\begin{align*}
  \mu_{n+1} := \frac1{n+1} \sum_{j=0}^n Q_+[\mu_k,\mu_{n-j}] \qquad (n \geq 0),
\end{align*}
or, equivalently --- recalling \eqref{eq.qfourier} --- in Fourier space,
\begin{equation}
  \label{eq.wildrec}
  \hat \mu_{n+1}(\xi):= \frac{1}{n+1} \sum_{j=0}^{n} \E[\hat \mu_j(L^T \xi)\hat \mu_{n-j}(R^T \xi)].
\end{equation}
Then the infinite convex combination
\begin{equation}
  \label{eq.wildsum}
  \mu(t) = \sum_{n=0}^\infty e^{-t}(1-e^{-t})^n\mu_n
\end{equation}
defines a weak transient solution to \eqref{eq.b}:
the curve $t\mapsto\mu(t)$ is weakly continuous with respect to $t\ge0$,
and satisfies 
\begin{align}
  \label{eq.weak}
  \frac{\dd}{\dd t} \int_{\setR^d} \phi(v)\mu(t;\dd v) 
  +  \int_{\setR^d} \phi(v)\mu(t;\dd v) = \int_{\setR^d} \phi(v')Q_+[\mu(t)](\dd v')
\end{align}
for arbitrary test functions $\phi\in C_b(\setR^d)$, and at every $t>0$.
%
%

\subsection{Qualitative results by probabilistic methods}
%
In this section, we apply probabilistic methods related to the central limit theorem
to prove the following qualitative result on convergence to equilibrium.
\begin{thm}
  \label{thm.clt}
  Assume that the initial condition has unit temperature, $\tmp[\mu_0]=1$.
  Then the distributions $\mu_n$ converge weakly towards
  the stationary state $\mu_\infty$ obtained in Section \ref{sct.stationary} above.
  Consequently, 
  the associated transient solution $\mu(t)$ converges weakly to $\mu_\infty$ as $t\to\infty$.
\end{thm}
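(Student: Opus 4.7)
The strategy I would follow is the McKean tree / conditional central limit theorem approach used in the one-dimensional case of \cite{BasLadMat} and in \cite{DolGabReg,dolera2}, adapted to the present matrix-valued setting. The first task is to unravel the Wild recursion \eqref{eq.wildrec} into a probabilistic representation: by induction on $n$, one shows that
\begin{align*}
  \mu_n = \law\bigg( \sum_{j=1}^{n+1} M_j^{(n)} X_j \bigg),
\end{align*}
where $X_1,X_2,\ldots$ are i.i.d.\ copies of a random vector with law $\mu_0$, the matrix weights $M_j^{(n)}$ are independent of $(X_j)$, and each $M_j^{(n)}$ is a product of i.i.d.\ copies of $L$ and $R$ taken along the unique path from root to the $j$th leaf in a uniformly random binary tree with $n{+}1$ leaves. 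Equivalently, $\hat\mu_n(\xi) = \E\big[\prod_{j=1}^{n+1} \hat\mu_0(M_j^{(n),T}\xi)\big]$, which is just \eqref{eq.wildrec} rewritten.

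Next I would fix the outcome of the matrix weights and apply a conditional multivariate Lindeberg--Feller theorem. Conditionally on the $\sigma$-algebra $\CM_n$ generated by $\{M_j^{(n)}\}$, the sum $\sum_j M_j^{(n)} X_j$ is a sum of centered, conditionally independent random vectors with conditional covariance
\begin{align*}
  C_n := \sum_{j=1}^{n+1} M_j^{(n)} \Sigma_0 \,(M_j^{(n)})^T, \qquad \Sigma_0 := \int_{\setR^d} vv^T\,\mu_0(\dd v),
\end{align*}
and $\tr \Sigma_0 = d$ by the unit-temperature assumption. The Lindeberg condition reduces to controlling $\max_j \|M_j^{(n)}\|_\infty$: iterating \eqref{eq.weirdass} from Assumption \ref{ass.3new} gives $\E\big[\sum_j \weight(M_j^{(n),T}\xi)\big] \le \kappa_p^n \weight(\xi)$, so the maximum vanishes in probability and the conditional law of $\sum_j M_j^{(n)} X_j$ given $\CM_n$ is asymptotically a centered Gaussian with covariance $C_n$. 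Averaging over $\CM_n$, $\mu_n$ is asymptotically a scale mixture of Gaussians with mixing matrix distributed as $\law(C_n)$.

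The crux of the proof is therefore to identify the limit of $\law(C_n)$ and to match it with the mixing measure $\nu_\infty$ from Theorem \ref{thm.exuniq}. On the level of the expected covariance, the Wild recursion gives $\E[C_n] = \tq^n(\Sigma_0)$, and since $\tr(\Sigma_0 - \Sigma_*) = 0$ and $\Sigma_0 - \Sigma_* $ is traceless, Assumption \ref{ass.2} forces $\E[C_n] \to \Sigma_*$; in particular, all directional information of $\Sigma_0$ is forgotten. For the full distributional convergence I would rewrite the branching of the tree so that, splitting off the two subtrees rooted at the first collision, one obtains a stochastic recursion $C_n \distreq L C_n' L^T + R C_n'' R^T + \text{(random index-dependent lower-order terms from the Wild symmetrization)}$. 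Combining this recursion with the contractivity of the Fourier metric $\fnorm_\wweight$ in Lemma \ref{lem.fwcontract} and the fact that $\nu_\infty$ is the unique fixed point of $T$ in $\cls_p(\smat_+)$, one concludes $\law(C_n) \to \nu_\infty$. Together with the conditional CLT this yields $\mu_n \to \mu_\infty$ weakly, and via \eqref{eq.wildsum} and a straightforward tightness estimate this transfers to weak convergence $\mu(t) \to \mu_\infty$ as $t \to \infty$.

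The main obstacle, in my view, is Step~3: the Wild tree has a non-dyadic and randomly asymmetric shape, so $\law(C_n)$ does \emph{not} literally coincide with the dyadic iterates of $T$ used in Theorem \ref{thm.exuniq}. One must therefore bookkeep the tree structure carefully---proving that the recursion for $\law(C_n)$ is still contractive in $\fnorm_\wweight$, using only the symmetry $(L,R)\distreq(R,L)$, the spectral contractivity of $\tq$ on traceless matrices (Assumption \ref{ass.2}), and the weight estimate \eqref{eq.wweight}. Once this contractivity together with the uniform moment bound \eqref{eq.verytight} is in hand, $\law(C_n)\to\nu_\infty$ follows in the same fashion as in the proof of Theorem \ref{thm.exuniq}, and the rest of the argument is routine.
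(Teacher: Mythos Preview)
Your strategy is exactly the paper's: probabilistic representation of $\mu_n$ as the law of $\sum_j M_j^{(n)}X_j$ with random matrix weights along a Wild tree, a conditional Lindeberg--Feller CLT reducing the problem to convergence of the conditional covariance $C_n$, and identification of $\law(C_n)\to\nu_\infty$ via the contractivity of the Fourier metric $\fnorm_\wweight$ from Lemma~\ref{lem.fwcontract}. The paper carries this out in Lemmas~\ref{Lemma.eq}, \ref{lemma_Sn} and~\ref{lemmalindeberg}.

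Two quantitative slips, both stemming from precisely the non-dyadic nature of the Wild tree that you yourself flag as the main obstacle. First, the bound $\E\big[\sum_j \weight(M_j^{(n),T}\xi)\big] \le \kappa_p^{\,n}\,\weight(\xi)$ is false: since the Wild recursion splits only \emph{one} randomly chosen leaf per step (not all of them), the correct recursion is
\[
  M_\weight^{(n)} \le \Big(1-\frac{1-\kappa_p}{n}\Big)\,M_\weight^{(n-1)},
\]
giving only algebraic decay of order $n^{-(1-\kappa_p)}$ (Lemma~\ref{Lemma.eq}). This is still enough for the Lindeberg condition. Second, $\E[C_n]=\tq^n(\Sigma_0)$ is likewise inexact; the actual distributional recursion is the Wild-type convex combination
\[
  \nu_n = \frac{1}{n}\sum_{i=1}^{n} T[\nu_{i-1},\nu_{n-i}]
\]
(Lemma~\ref{lemma_Sn}), and combining this with the convexity \eqref{eq.fnormconvex} and contractivity \eqref{eq.fwcontract} of $\fnorm_\wweight$ yields $\fnorm_\wweight(\nu_n,\nu_\infty)\le\frac{\lambda}{n}\sum_{i=1}^{n}\fnorm_\wweight(\nu_{i-1},\nu_\infty)$, hence again algebraic (not geometric) convergence to $\nu_\infty$. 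Neither slip is fatal to the argument, but both should be corrected; once they are, your outline coincides with the paper's proof.
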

Our key tool is a probabilistic representation of the Wild sum \eqref{eq.wildsum},
following \cite{BasLadMat,GabettaRegazziniCLT}. 
On a sufficiently large probability space $(\Omega,\CF,\prb)$,
let the following be given:
\begin{itemize}
\item a sequence $(X_n)_{n\in\setN}$ of i.i.d. random vectors  with distribution $\mu_0$;
\item a sequence $\big((L_n,R_n)\big)_{n\in\setN}$ of i.i.d. random matrices, distributed as $(L,R)$;
\item a sequence $(I_n)_{n\in\setN}$ of independent integer random variables,
  each $I_n$ being uniformly distributed on the indices $\{1,2,\ldots,n\}$;
\end{itemize}
We assume further that $(I_n)_{n\in\setN}$, $(L_n,R_n)_{n\in\setN}$ and $(X_n)_{n\in\setN}$  are stochastically independent.
Define a random array of weights $[\beta_{j,n}: j=1,\dots,n+1]_{n \geq 0}$ recursively:
Let $\beta_{1,0}:=\eins$, $(\beta_{1,1},\beta_{2,1}):=(L_1,R_1)$, and for any $n\geq2$,
\begin{equation}
  \label{recursion}
  (\beta_{1,n}, \ldots,\beta_{n+1,n})  
  := (\beta_{1,n-1},\ldots,\beta_{I_n-1,n-1},  \beta_{I_n,n-1}L_n,  \beta_{I_n,n-1}R_n, \beta_{I_n+1,n-1},\ldots, \beta_{n,n-1}).
\end{equation}
Finally, setting
\begin{align}
  \label{eq.defv}
  W_n:=\sum_{j=1}^{n+1} \beta_{j,n} X_j,
\end{align}
one obtains the following alternative representation of the measures $\mu_n$ from \eqref{eq.wildsum}.
\begin{prp}\label{Prop:probint} 
  For every $n\geq 1$, one has
  \begin{equation}
    \label{fcharW}
    \hat \mu_n(\xi)=\E[\exp(i \xi^T W_n)].
  \end{equation}
\end{prp}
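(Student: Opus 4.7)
The plan is to proceed by strong induction on $n$, verifying that $G_n(\xi) := \E[\exp(i\xi^T W_n)]$ satisfies the Fourier--Wild recursion \eqref{eq.wildrec} with the initial datum $G_0 = \hat\mu_0$. The base case is immediate, since $W_0 = \beta_{1,0}X_1 = X_1$ has law $\mu_0$.

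For the inductive step, the key is to exploit the binary-tree structure implicit in the recursion \eqref{recursion}. Let $K_n \in \{1,\dots,n\}$ denote the number of leaves in the left subtree of the root after $n$ insertions, so the root separates the $n+1$ leaves into groups of $K_n$ and $n+1-K_n$. Two combinatorial facts are needed: (i) $K_n$ is uniformly distributed on $\{1,\dots,n\}$ and independent of $(L_k,R_k)_{k\ge1}$; (ii) conditional on $K_n=j+1$, the evolutions of the left and right subtrees are independent Wild-sum constructions of sizes $j+1$ and $n-j$ respectively, built from disjoint and i.i.d.\ subsequences of $(L_k,R_k)_{k\ge 2}$. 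Fact (i) is the classical $(1,1)$-P\'olya urn result: after step $1$ the two subtrees have one leaf each, and at each subsequent step a new leaf is added to the left (resp.\ right) subtree with probability proportional to its current size, so the resulting marginal of $K_n$ is uniform. Fact (ii) follows by tracking which of the indices $I_2,\dots,I_n$ fall into the left versus the right subtree and verifying, using the independence of the $(L_k,R_k)$'s and the uniformity of the $I_k$'s, that the two corresponding re-indexed sequences are themselves independent Wild-sum constructions of the appropriate lengths.

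Given (i) and (ii), the outermost matrices $L_1$ and $R_1$ multiply every weight in their respective subtrees, so conditional on $K_n=j+1$,
\[
  W_n \;\distreq\; L_1 W_j' + R_1 W_{n-1-j}'',
\]
where $W_j'$ and $W_{n-1-j}''$ are independent copies of $W_j$ and $W_{n-1-j}$, also independent of $(L_1,R_1)\distreq(L,R)$. Taking characteristic functions, invoking the induction hypothesis $G_k = \hat\mu_k$ for $k<n$, and averaging over $j$ using the uniformity of $K_n$, yields
\[
  G_n(\xi) \;=\; \frac{1}{n}\sum_{j=0}^{n-1}\E\bigl[\hat\mu_j(L^T\xi)\hat\mu_{n-1-j}(R^T\xi)\bigr] \;=\; \hat\mu_n(\xi),
\]
by \eqref{eq.wildrec}, completing the induction.

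The main obstacle is the careful verification of fact (ii): the fine bookkeeping involved in showing that, after conditioning on $K_n=j+1$, the two subtrees are genuinely independent Wild constructions driven by independent streams of $(L_k,R_k)$-pairs. This requires re-indexing the $I_k$'s that fall on each side to match the uniform selection rule of a fresh Wild construction of the smaller size. The bookkeeping is standard in the probabilistic literature on Wild sums (cf.\ \cite{GabettaRegazziniCLT,BasLadMat}), but it is the only non-routine part of the argument.
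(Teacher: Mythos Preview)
Your proposal is correct and follows essentially the same approach as the paper: both argue by induction on $n$, split $W_n$ at the root into left and right subtrees via $(L_1,R_1)$, invoke the two facts (uniformity of the split size and conditional independence of the subtree constructions), and conclude via the Wild recursion \eqref{eq.wildrec}. Your $K_n$ is precisely the paper's $J_n$, and the paper likewise defers the verification of facts (i) and (ii) to \cite{BasLadMat}; the only cosmetic difference is that you sketch the P\'olya-urn reasoning for (i), whereas the paper simply cites it.
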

\begin{proof}
  The proof is analogous to the proof of Proposition 1 in \cite{BasLadMat} for the scalar case,
  with the only difference that here one needs to be careful about the order of multiplication
  of the (non-commutative) matrices $L_k$ and $R_k$.
  We briefly repeat the argument as we shall refer to parts of it later.

  We prove \eqref{fcharW} by induction on $n \geq 1$. 
  For $n=0$ equation \eqref{fcharW} is clearly true. 
  For $n \geq 1$, let $J_n$ be the random index such that 
  \[
  \beta_{j,n}=\left\{\begin{array}{cl} 
      L_1 \beta_{j,n}^{\ell} & \text{for every $j\leq J_n$}, \\
      R_1 \beta_{j,n}^{r} & \text{for every $j=J_n+1,\dots,n+1$},
    \end{array}\right.
  \]
  with suitable random  matrices $\beta_{j,n}^\ell$ and $\beta_{j,n}^r$.
  It can be shown --- see Proposition 1 of \cite{BasLadMat} ---
  that
  \begin{itemize}
  \item[(i)] $J_n$ is uniformly distributed on $\{1,\dots, n\}$, and 
  \item[(ii)] $(\beta_{j,n}^\ell)_{j=1,\dots,J_n}$ and $(\beta_{j,n}^r)_{j=J_n+1,\dots,n+1}$
    are conditionally independent given $J_n$ with conditional distribution equal to
    the distribution of $(\beta_{j,J_n-1})_{j=1,\dots,J_n}$ and $(\beta_{j,n-J_n})_{j=1,\dots,n+1-J_n}$, 
    respectively.
  \end{itemize}
  This shows that  
  \begin{align*}
    W_n=L_1 \sum_{j=1}^{J_n} \beta_{j,n}^\ell X_j + R_1 \sum_{j=J_n+1}^{n+1} \beta_{j,n}^r X_j   
    \distreq LW_{J_n}' + RW_{n+1-J_n}''
  \end{align*}
  for two  sequences $(W_0',\ldots,W_n')$ and $(W_0'',\ldots,W_n'')$ of i.i.d. copies of $(W_0,\ldots,W_n)$
  independent from  $J_n$ and $(L,R)$. 
  Hence, using the induction hypothesis, 
  \begin{align*}
    \E[e^{i\xi^T W_n}]& =\E[e^{i\xi^T( LW_{J_n-1}' + RW_{n-J_n}'' ) }]
    =\E[\E[e^{i(L^T\xi)^T W_{J_n-1}'}e^{i(R^T\xi)^T W_{n-J_n}''}|J_n,L,R] ] \\
    & = \E[\hat\mu_{J_n-1}(L^T\xi)\hat\mu_{n-J_n}(R^T\xi) ] 
    = \frac1{n+1} \sum_{j=0}^{n-1} \E[\hat\mu_i(L^T\xi)\hat\mu_{n-i}(R^T\xi) ] =\hat\mu_n(\xi).
    \qedhere
  \end{align*}
\end{proof}
As a first consequence of the probabilistic representation we prove 
\begin{lem}
  Under Assumption 1, and if $\mu_0$ is centered, 
  then $\mu(t)$ is a centered probability measure for all $t\geq0$.
  If additionally $\mu_0$ has finite temperature $\tmp[\mu_0]<\infty$,
  then this is conserved, $\tmp[\mu(t)]=\tmp[\mu_0]$.
  If instead $\tmp[\mu_0]=+\infty$, then also $\tmp[\mu(t)]=+\infty$ for all $t\geq0$.
\end{lem}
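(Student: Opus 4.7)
My plan is to use the probabilistic representation from Proposition \ref{Prop:probint}, together with the distributional recursion $W_n \distreq LW_{J_n-1}' + RW_{n-J_n}''$ derived in its proof (with $W_k',W_k''$ independent copies of $W_k$, also independent of $(L,R)$ and of the uniform index $J_n$ on $\{1,\ldots,n\}$), to reduce every claim to an induction on $n$ applied to the law of $W_n$.

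For the centering assertion, I would first check by induction that $\E[|W_n|]<\infty$ for every $n$: using the distributional recursion,
\begin{align*}
\E[|W_n|] \le \big(\E[\|L\|]+\E[\|R\|]\big)\,\frac1n\sum_{k=0}^{n-1}\E[|W_k|],
\end{align*}
while Cauchy--Schwarz combined with Assumption \ref{ass.1} yields $\E[\|L\|]+\E[\|R\|] \le 2\sqrt d$. Independence of the $X_j$ from the $\beta_{j,n}$ (the latter depending only on the $L_k$, $R_k$, $I_k$) combined with $\E[X_j]=0$ then gives $\E[W_n] = \sum_j \E[\beta_{j,n}]\E[X_j] = 0$, so $\mu_n \in \cls_1(\setR^d)$. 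The Wild sum $\mu(t) = \sum_n e^{-t}(1-e^{-t})^n \mu_n$ inherits centering as a convex combination.

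In the finite-temperature case $\tmp[\mu_0]<\infty$, the same distributional recursion yields the matrix-valued identity
\begin{align*}
C_n := \E[W_nW_n^T] = \frac1n\sum_{k=0}^{n-1}\tq(C_k) \qquad (n\ge 1),
\end{align*}
with the cross terms $\E\big[(LW_{j-1}')(RW_{n-j}'')^T\big]$ vanishing by the joint independence and the centering of $W_{j-1}'$ and $W_{n-j}''$. Taking the trace and using that $\tr\tq(M) = \tr\big(M\,\E[L^TL+R^TR]\big) = \tr M$ by Assumption \ref{ass.1}, this reduces to $\tr C_n = \tfrac1n\sum_k \tr C_k$, so $\tr C_n = \tr C_0 = d\tmp[\mu_0]$ for every $n$ by induction; convexity of the Wild sum then yields $\tmp[\mu(t)] = \tmp[\mu_0]$.

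For $\tmp[\mu_0]=+\infty$, the same derivation, now interpreted in $[0,\infty]$, furnishes
\begin{align*}
\E[|W_n|^2] = \frac1n\sum_{k=0}^{n-1}\E[|W_k|^2],
\end{align*}
and an immediate induction from $\E[|W_0|^2] = d\tmp[\mu_0] = +\infty$ gives $\E[|W_n|^2] = +\infty$ for every $n$, hence $\tmp[\mu(t)] = +\infty$. The main technical obstacle I anticipate is verifying that the cross term $\E[W_{j-1}'^T L^T R W_{n-j}'']$ is well-defined as a real number and actually vanishes even when the $W_k$ have infinite second moment; I would handle this through the pointwise bound $|W'^T L^T R W''| \le |W'|\,\|L\|\,\|R\|\,|W''|$, whose expectation factors under independence into $\E[|W'|]\,\E[\|L\|\|R\|]\,\E[|W''|]$, all three factors being finite by the first-moment estimate established above.
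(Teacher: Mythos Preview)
Your approach is sound and yields the result, but it differs from the paper's in one instructive way. The paper first bounds $\E[\|\beta_{j,n}\|_2]$ directly (each $\beta_{j,n}$ is a product of at most $n$ factors drawn from the $L_k,R_k$, so submultiplicativity gives a finite bound), and then proves the stronger \emph{matrix} identity $\E\big[\sum_j\beta_{j,n}^T\beta_{j,n}\big]=\eins$ by induction via the $J_n$-decomposition; temperature conservation is then immediate from $\E[W_n^TW_n]=\E[X_1^T\eins X_1]$, valid in $[0,\infty]$. You instead work with the covariance $C_n$ through the map $\tq$ and its trace. That is cleaner for the present lemma but does not produce the matrix identity \eqref{eq.sumtoone}, which the paper relies on later (see \eqref{eq.sumtoone2} in the proof of Lemma~\ref{lemmalindeberg}).

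There is one small gap in your centering argument: after showing $\E[|W_n|]<\infty$ you write $\E[W_n]=\sum_j\E[\beta_{j,n}]\E[X_j]$, but factoring each summand requires $\E[\|\beta_{j,n}\|]<\infty$, which your inductive bound on $\E[|W_n|]$ does not deliver (a sum of finite expectation need not have individually integrable summands). The fix is easy: either bound $\E[\|\beta_{j,n}\|]$ directly as the paper does, or---more in keeping with your strategy---run a second induction for $\E[W_n]=0$ using the recursion itself: since $(L,R)$, $W'_{j-1}$, $W''_{n-j}$ are independent and $\E[|LW'_{j-1}|]\le\E[\|L\|]\E[|W'_{j-1}|]<\infty$, one has $\E[LW'_{j-1}]=\E[L]\,\E[W'_{j-1}]=0$ by the induction hypothesis. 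Your handling of the infinite-temperature case---in particular, the verification that the cross term is integrable via the first-moment estimate---is correct and in fact more explicit than the paper's.
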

\begin{proof}
  It is clear that it suffices to prove the statement for $\mu_n$. 
  First of all note that \eqref{eq.general} yields
  $E[\|L\|_2^2+\|R\|_2^2]<+\infty$ and hence $E[\|L\|_2+\|R\|_2]<+\infty$. As a consequence,
  we shall show that $\E[\|\sum_{j=1}^{n+1} \beta_{jn}\|_2]<+\infty$. It clearly suffices to show 
  that  $ \E[\|\beta_{jn}\|_2]<+\infty$ for every $j=1,\dots,n+1$. Now each 
  $\beta_{jn}$ is a random product a random number, say $\delta_j$, of matrices $L$ and $R$. 
  Since $(L_k,R_k)_k$ are independent and identically distributed
  and, for each $k$, $\law(L_k)=\law(R_k)$, using the  subadditivity of the norm and the fact that
  $\delta_j \leq n$ a.s., we have 
  \[
  \E[\|\beta_{jn}\|_2]\leq \E[\prod_{k=1}^{\delta_j} \|L_k\|_2]
  \leq \E[\prod_{k=1}^{\delta_j} \max(1,\|L_k\|_2)] \leq (\E[\max(1,\|L_1\|_2)])^n<+\infty.
  \]  
    Propagation of centering follows now from the linearity of the collision rules \eqref{eq.rules}.
  Indeed
  \[
  \E[W_n]=\E[\sum_{j=1}^{n+1} \beta_{jn} X_j]=\E[\sum_{j=1}^{n+1} \beta_{jn}]\E[X_1] = 0,
  \]
  since $(X_j)_{j \geq 1}$ and $(\beta_{jn})_{j=1,\dots,n+1}$ are independent and 
  $\E[X_1]=0$. 
  As for the temperature, 
  \[
  \E[W_n^TW_n]=\sum_{j=1}^{n+1}\E[X_j^T \beta_{jn}^T \beta_{jn} X_j]
  +\sum_{i,j= 1 \atop i\not=j}^{n+1} \E[X_j^T \beta_{jn}^T \beta_{in} X_i].
  \]
  Clearly, since the $X_j$ are i.i.d.\ centered random variables,
  and are also independent of the $\beta$'s,
  \[
  \sum_{i,j= 1 \atop i\not=j}^{n+1} \E[X_j^T \beta_{jn}^T \beta_{in} X_i]=0,
  \quad \text{and} \quad
  \sum_{j=1}^{n+1}\E[X_j^T \beta_{jn}^T \beta_{jn} X_j]
  = \E[X_1^T \E[\sum_{j=1}^{n+1} \beta_{jn}^T \beta_{jn}]  X_1 ].
  \]
It remains to prove that 
  \begin{align}
    \label{eq.sumtoone}
   \E[\sum_{j=1}^{n+1} \beta_{jn}^T \beta_{jn}]=\eins.
  \end{align}
We shall prove \eqref{eq.sumtoone} by induction. 
 Set \(  \theta_{n}:=\sum_{j=1}^{n+1} \beta_{jn}^T \beta_{jn} \). 
We have $\E[\theta_n]=\eins$ for $n=0$ and $n=1$. For $n \geq 2$, 
with the same notation of the proof of Proposition \ref{Prop:probint},
we can write
\[
\begin{split}
\E[\theta_n]&= \E[\sum_{j=1}^{J_n}(L_1\beta_{jn}^{(l)})^T L_1\beta_{jn}^{(l)}
+\sum_{j=J_n+1}^{n+1}(R_1\beta_{jn}^{(r)})^T R_1\beta_{jn}^{(r)}]
\\
&=\E[\sum_{j=1}^{J_n} (\beta_{jn}^{(l)})^T L_1^T L_1 \beta_{jn}^{(l)}
+\sum_{j=J_n+1}^{n+1}(\beta_{jn}^{(r)})^T R_1^T R_1\beta_{jn}^{(r)}].
\\
\end{split}
\]
Since $\E[R_1^T R_1]=\E[L_1^T L_1]=\frac{1}{2}\eins$ and 
$(L_1,R_1)$ and $\beta_{jn}^{(l)},\beta_{jn}^{(r)}$ are stochastically independent,
recalling also $(i)$ and $(ii)$ of the proof of Proposition \ref{Prop:probint},
we can write
\[
\E[\theta_n]  =\frac{1}{2} \E[\sum_{j=1}^{J_n} (\beta_{jn}^{(l)})^T  \beta_{jn}^{(l)}
+\sum_{j=J_n+1}^{n+1}(\beta_{jn}^{(r)})^T \beta_{jn}^{(r)}] 
= \frac{1}{2n} \sum_{k=1}^n \E[\theta_{k-1}+\theta_{n-k}].
\]
Hence we can conclude by  the induction hypothesis. 
\end{proof}

Theorem \ref{thm.clt} will be proven by studying the distributional convergence
of the random variables $W_n$ with methods related to the central limit theorem.
For definiteness, denote by $\sigalg$ the $\sigma$-field 
generated by the weights $\beta_{n,j}$, for $n\geq0$ with $1\leq j\leq n+1$, respectively.
Accordingly, introduce the conditional covariance matrices
\begin{align}
  \label{defnun}
  S_n:=\E\big[ W_nW_n^T \big|\sigalg \big] = \sum_{j=1}^{n+1}\beta_{n,j}S_0\beta_{n,j}^T
  \quad \text{and} \quad
  \nu_n :=\law(S_n),
\end{align}
so that in particular $S_0=\cov(\mu_0)$ a.s., and $\nu_0:=\delta_{S_0}$.
In a series of lemmas, we show the weak convergence of the $\nu_n$
towards $\nu_\infty$ obtained in Theorem \ref{thm.exuniq}.
First, we derive an a priori estimate on the weights.
\begin{lem}
  \label{Lemma.eq} 
  Under Assumptions \ref{ass.1} and \ref{ass.3new},
  there exists a constant $C$ such that
  \begin{align}
    \label{eq.mest}
    M^{(n)}_p := \sum_{j=1}^{n+1} \sup_{|\ee|=1}\E[|\beta_{jn}^T\ee|^p]  
    \leq Cn^{ \kappa_p-1}
  \end{align}  
  for every $n\geq1$.
  In particular, $M^{(n)}_p \to0$ as $n\to\infty$.
\end{lem}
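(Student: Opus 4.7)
The plan is to replace the function $|\cdot|^p$ by the weight function $\weight$ from Assumption \ref{ass.3new} and to derive a one-step recursion by conditioning on the splitting index $I_n$. Since $|\eta|^p \leq \weight(\eta)$, it suffices to bound
\[T_n := \sum_{j=1}^{n+1} H_{jn}, \qquad H_{jn} := \sup_{|\ee|=1}\E[\weight(\beta_{j,n}^T \ee)].\]
The virtue of passing to $\weight$ is that \eqref{eq.weirdass}, combined with the symmetry $(L,R)\distreq(R,L)$, gives $\E[\weight(L^T\eta)] = \E[\weight(R^T\eta)] \leq (\kappa_p/2)\weight(\eta)$ for every deterministic $\eta$. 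Consequently, for any random matrix $\beta$ independent of $(L_n,R_n)$,
\[\sup_{|\ee|=1}\E[\weight(L_n^T\beta^T\ee)] \leq \frac{\kappa_p}{2}\sup_{|\ee|=1}\E[\weight(\beta^T\ee)],\]
and similarly for $R_n$.

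The central step is the recursion. Writing $k=I_n$ and recalling \eqref{recursion}, the weight $\beta_{j,n}$ equals $\beta_{j,n-1}$ if $k>j$, equals $\beta_{j-1,n-1}$ if $k<j-1$, equals $\beta_{j,n-1}L_n$ if $k=j$, and equals $\beta_{j-1,n-1}R_n$ if $k=j-1$. Moving the supremum inside the $1/n$-average over $I_n$ (via $\sup\sum\leq\sum\sup$) and applying the split estimate above yields, for $2\leq j\leq n$,
\[n H_{jn} \leq (n-j+\kappa_p/2)\,H_{j,n-1} + (j-2+\kappa_p/2)\,H_{j-1,n-1},\]
with analogous one-sided formulas for the boundary indices $j=1$ and $j=n+1$. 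Summing these inequalities and grouping the contributions of each $H_{\ell,n-1}$ with $1\leq\ell\leq n$, all coefficients collapse to the common value $n-1+\kappa_p$, so
\[T_n \leq \Bigl(1 - \frac{1-\kappa_p}{n}\Bigr)T_{n-1}.\]

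Iterating from $T_0 = H_{1,0} = \sup_{|\ee|=1}\weight(\ee) \leq \bar\weight$ (since $\beta_{1,0}=\eins$), and using the standard estimate
\[\prod_{k=1}^n\Bigl(1-\frac{1-\kappa_p}{k}\Bigr) \leq \exp\Bigl(-(1-\kappa_p)\sum_{k=1}^n k^{-1}\Bigr) \leq C\,n^{\kappa_p-1},\]
one obtains $M_p^{(n)}\leq T_n \leq C n^{\kappa_p-1}$. The only delicate point is the bookkeeping in the central step: one must verify, via a reindexing of the pass-through contributions together with correct handling of the endpoints $j=1$ and $j=n+1$, that every $H_{\ell,n-1}$ appears with precisely the same coefficient $(n-1+\kappa_p)/n$ after summation over $j$. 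Once this collapse is established, the geometric-type decay of $T_n$ is immediate.
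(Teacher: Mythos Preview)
Your proposal is correct and follows essentially the same route as the paper: pass from $|\cdot|^p$ to the weight $\weight$, use $\E[\weight(L^T\eta)]=\E[\weight(R^T\eta)]\le(\kappa_p/2)\weight(\eta)$ on the split index, and derive the recursion $T_n\le(1-(1-\kappa_p)/n)T_{n-1}$. The only notable difference is bookkeeping: the paper sums over the splitting index $k=I_n$ first (so each $k$ contributes $M^{(n-1)}_\weight-(1-\kappa_p)\cdot(\text{$k$th term})$, and averaging over $k$ gives the recursion immediately), whereas you sum over the position $j$ first, which forces you to track the individual coefficients of each $H_{\ell,n-1}$ and verify that they all collapse to $n-1+\kappa_p$; this is the ``delicate point'' you flag, and it does work out exactly as you say.
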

\begin{proof}
  We first show a similar estimate for a related quantity, namely
  \begin{align}
    \label{eq.mest2}
    M^{(n)}_{\weight}
    := \sum_{j=1}^{n+1} \sup_{ \eta\not=0}\E\Big[\frac{\weight(\beta_{jn}^T\eta)}{\weight(\eta)}\Big]  
    \leq \prod_{m=1}^{n}\Big(1-\frac{1- \kappa_p}{m}\Big) 
    =\frac{\Gamma(\kappa_p+n)}{n!\Gamma(\kappa_p)}.
  \end{align}
  Here $\Gamma$ denotes Euler's function.
  We proceed inductively with respect to $n$.
  By construction of the weights,
  we have
  \begin{align}
    \label{eq.recurse2}
    M^{(n)}_{\weight} = \frac1n\sum_{k=1}^n\bigg( 
    \sum_{j=1 \atop j\neq k}^n \sup_{\eta \neq 0}\E[ \frac{\weight(\beta_{j,{n-1}}^T\eta)}{{\weight}(\eta)}] 
    + \sup_{\eta \neq 0} \E[  \frac{\weight(L_n^T \beta_{k,{n-1}}^T \eta)}{\weight(\eta)}] + 
    \sup_{ \eta \neq  0} \E[ \frac{\weight(R_n^T \beta_{k,{n-1}}^T \eta)}{\weight(\eta)}] \bigg) .
  \end{align}
  Since
  $(L_n,R_n)$ and $\beta_{k,,{n-1}}$ are independent, 
  it follows that
  \begin{align*}
    \E[  \frac{\weight(L_n^T \beta_{k,{n-1}}^T \eta)}{\weight(\eta)}]
    & = \E\Big[\E\big[\frac{\weight(L_n^T \beta_{k,{n-1}}^T \eta)}{\weight(\eta)}   \big|\sigalg_n\big]\Big] \\
    & \leq \E\Big [ \frac{\weight(\beta_{k,{n-1}}^T \eta)}{\weight(\eta)}  \Big] 
    \sup_{ \eta' \neq 0} \E \Big[ \frac{ \weight(L_n^T \eta')}{\weight(\eta')}\Big ]
    \leq \kappa_p\E\Big [ \frac{\weight(\beta_{k,{n-1}}^T \eta)}{\weight(\eta)}  \Big],
  \end{align*}
  where Assumption \ref{ass.3new} and the symmetry $(L,R)\distreq(R,L)$ have been used in the last step.
  Thus, in  \eqref{eq.recurse2} we have
  \[
  \sup_{\eta \neq 0}\E \Big[ \frac{{\weight}(L_n^T \beta_{k,{n-1}}^T \eta)}{\weight(\eta)}\Big] 
  +  \sup_{\eta \neq 0}\E\Big[ \frac{{\weight}(R_n^T \beta_{k,{n-1}}^T \eta')}{\weight(\eta')}\Big] 
  \leq  \kappa_p \sup_{\eta\not=0} \E\Big[ \frac{\weight(\beta_{k,{n-1}}^T \eta)}{\weight(\eta)}  \Big]
  \]
  which leads to the recursive relation
  \begin{align*}
    M^{(n)}_\weight \leq \bigg( 1-\frac{1-\kappa_p}{n} \bigg) M^{(n-1)}_\weight
  \end{align*}
  and implies \eqref{eq.mest2}.
  The original claim \eqref{eq.mest} now follows immediately, 
  since $M^{(n)}_p \leq \bar \weight M^{(n)}_\weight$ by \eqref{eq.boundass}.
\end{proof}
\begin{lem}
  \label{lemma_Sn}
  Under Assumptions 1--3, if $\tmp[\mu_0]=1$,
  and with $\nu_\infty$ be defined in Theorem \ref{thm.exuniq},
  it follows that
  \begin{equation}
    \label{recur1}
    \fnorm_\wweight( \nu_n,\nu_{\infty}) \leq \frac{\lambda}{n} \sum_{i=1}^{n} 
    \fnorm_\wweight(\nu_{i-1},\nu_{\infty})
  \end{equation}
  where $\fnorm_\wweight$ is the metric introduced in \eqref{eq.ourmetric},
  with $\alpha$ large enough such that $\lambda<1$ in \eqref{eq.deflambda}. 
  Consequently, the $\nu_n$ converge weakly to $\nu_\infty$.
\end{lem}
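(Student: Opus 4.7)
The plan is to combine a recursive representation of the laws $\nu_n$ with the contraction estimate of Lemma \ref{lem.fwcontract} and the convexity of $\fnorm_\wweight$, and then to deduce convergence from a standard Cesàro-type recursion.

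First, I would translate the decomposition $W_n\distreq LW_{J_n-1}'+RW_{n-J_n}''$ established in the proof of Proposition \ref{Prop:probint} to the conditional covariances. Taking conditional expectations given $\sigalg$, and using that $S_k',S_k''$ are independent copies of $S_k$ that are independent of $(L_1,R_1)$, one obtains $S_n\distreq L S_{J_n-1}'L^T+R S_{n-J_n}''R^T$. Since $J_n$ is uniform on $\{1,\dots,n\}$ and independent of everything else, this yields the mixture identity
\[
  \nu_n \;=\; \frac{1}{n}\sum_{k=1}^n T[\nu_{k-1},\nu_{n-k}].
\]
Along the way, I would verify by induction that $\nu_n\in\cls_p(\smat_+)$ for every $n$, using Lemma \ref{lem.trace2} together with the fact that $\cls_p(\smat_+)$ is closed under convex combinations; the base case $\nu_0=\delta_{S_0}$ is a point mass with $\tr S_0=d$ (since $\tmp[\mu_0]=1$), so in particular $\fnorm_\wweight(\nu_0,\nu_\infty)<\infty$.

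Second, since $\nu_\infty$ is the unique fixed point of $\nu\mapsto T[\nu,\nu]$ supplied by Theorem \ref{thm.exuniq}, I would rewrite $\nu_\infty=\frac{1}{n}\sum_{k=1}^n T[\nu_\infty,\nu_\infty]$ and apply the convexity property \eqref{eq.fnormconvex} followed by the contraction bound \eqref{eq.fwcontract} term by term. Reindexing the resulting double sum ($k-1$ and $n-k$ each range over $\{0,\dots,n-1\}$) collapses the two halves of the contraction bound into a single Cesàro average, producing exactly \eqref{recur1}.

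Third, to extract convergence from the recursion $a_n\leq(\lambda/n)\sum_{i=1}^n a_{i-1}$ with $a_n:=\fnorm_\wweight(\nu_n,\nu_\infty)$, I would set $b_n:=\sum_{i=0}^{n-1}a_i$ and observe that $b_{n+1}=b_n+a_n\leq b_n(1+\lambda/n)$, whence
\[
  b_{n+1} \;\leq\; a_0\prod_{k=1}^n\bigl(1+\tfrac{\lambda}{k}\bigr) \;\leq\; a_0\,e^{\lambda}\,n^{\lambda}.
\]
Because $\lambda<1$, this forces $a_n\leq\lambda b_n/n=O(n^{\lambda-1})\to 0$. Weak convergence of $\nu_n$ to $\nu_\infty$ then follows from the last assertion of Lemma \ref{lem.fnorm}, which asserts that convergence in $\fnorm_\wweight$ entails both weak convergence and convergence of the expectations.

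The principal obstacle lies in the first step: converting the distributional identity for $W_n$ into the analogous identity for the conditional covariance $S_n$ requires carefully tracking the independence between the initial split matrices $(L_1,R_1)$, the uniform index $J_n$, and the two sub-arrays of weights $(\beta_{j,n}^\ell),(\beta_{j,n}^r)$, exactly as in properties (i)-(ii) recalled within Proposition \ref{Prop:probint}. Once this mixture identity for $\nu_n$ is in hand, the remainder is a mechanical combination of already-proven convexity and contraction estimates together with an elementary product-estimate for the Cesàro recursion.
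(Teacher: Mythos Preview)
Your proposal is correct and follows essentially the same route as the paper: derive the mixture identity $\nu_n=\frac1n\sum_{k=1}^n T[\nu_{k-1},\nu_{n-k}]$ from the decomposition in Proposition~\ref{Prop:probint}, combine convexity \eqref{eq.fnormconvex} with the contraction \eqref{eq.fwcontract}, and then extract convergence from the Ces\`aro-type recursion. The only cosmetic difference is in the endgame: the paper bounds $a_n$ directly by the product $a_1\prod_{j=1}^{n-1}(1-(1-\lambda)/j)\le C n^{-(1-\lambda)}$, whereas you pass through the partial sums $b_n$ and the complementary product $\prod(1+\lambda/k)$; both yield the same $O(n^{\lambda-1})$ decay.
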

\begin{proof}
  Since $\nu_\infty$ is a solution to \eqref{eq.aux},
  we can write
  \begin{equation}
    \label{F4}
    \nu_\infty = T[\nu_\infty,\nu_\infty] = \frac{1}{n} \sum_{i=1}^n T[\nu_\infty,\nu_\infty]. 
  \end{equation}
  Recall the definition of $\nu_n$ in \eqref{defnun}.
  We prove that for every $n \geq 1$,
  \begin{equation}
    \label{F3}
    \nu_{n} = \frac{1}{n} \sum_{i=1}^n T[\nu_{i-1},\nu_{n-i}]. 
  \end{equation}
  Indeed, with $\beta_{j,n}^{\ell}$, $\beta_{j,n}^{r}$ and $J$ being defined as in the proof of Proposition \ref{Prop:probint},
  we have
  \[
  S_{n}=L_1 \Big ( \sum_{j=1}^{J}\beta_{j,n}^\ell S_0(\beta_{j,n}^\ell)^T \Big ) L_1^T
  + R_1 \Big ( \sum_{j=J+1}^{n+1} \beta_{j,n}^r S_0 (\beta_{j,n}^r)^T \Big ) R_1^T .
  \]
  Now \eqref{F3} follows since the expressions in the round brackets 
  are independent of each other and of $(L_1,R_1)$,
  and are distributed like $S_{J-1}$ and $S_{n-J}$, respectively.
  Using the representations \eqref{F4}\&\eqref{F3} and the convexity of $\fnorm_\wweight$,
  see \eqref{eq.fnormconvex},
  we obtain
  \begin{align*}
    \fnorm_\wweight(\nu_{n},\nu_\infty ) \leq \frac{1}{n} \sum_{i=1}^n \fnorm_\wweight (T[\nu_{i-1},\nu_{n-i}],T[\nu_\infty ,\nu_\infty ]).
  \end{align*}
  Now \eqref{recur1} follows from \eqref{eq.fwcontract}, 
  with $\lambda<1$ defined as in \eqref{eq.deflambda}. 
  To conclude the proof, 
  we recall that if a sequence $(a_n)_n$ of non-negative numbers
  satisfies the inequalities
  \[
  a_{n+1} \leq \frac{\lambda}{n} \sum_{i=1}^n a_i 
  \]
  for every $n \geq 1$,
  with some fixed constant $\lambda \in (0,1)$, 
  then 
  \[a_n \leq a_1 \prod_{j=1}^{n-1}(1-\frac{1-\lambda}{j}) \leq Ca_1n^{-(1-\lambda)}, 
  \]
  which tends to zero as $n\to\infty$.
  Inequality \eqref{recur1} thus implies $\fnorm_\wweight(\nu_{n},\nu_{\infty})\to0$,
  proving weak convergence of $\nu_n$ to $\nu_{\infty}$.
\end{proof}
Weak convergence of the $\nu_n$ is not quite enough to apply the methods
of the central limit theorem.
We also need a uniform control on the second momenta,
which is related to the Lindeberg condition.
\begin{lem}
  \label{lemmalindeberg} 
  Under Assumption 1--3, and if $\tmp[\mu_0]=1$,
  it follows that for every $\eps>0$ and every $\xi\in\setR^d$
  \begin{align}
    \label{cond2-bis}
    \lim_{n \to +\infty} \E\Big [\sum_{j=1}^{n+1} |\xi^T\beta_{j n} X_j|^2 \J\{|\xi^T\beta_{j n}X_j|> \eps \}\Big ] = 0.
  \end{align}
\end{lem}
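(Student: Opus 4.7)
The plan is to condition on the $\sigma$-algebra $\sigalg$ generated by the weights, reduce the inner sum to a deterministic function of $\eta_{jn}:=\beta_{jn}^T\xi$, and then split by the magnitude of $|\eta_{jn}|$. Since the $X_j$ are i.i.d.\ copies of $X\sim\mu_0$ independent of the $\beta$'s, conditioning gives
$$\E\big[\,|\xi^T\beta_{jn}X_j|^2\J\{|\xi^T\beta_{jn}X_j|>\eps\}\,\big|\,\sigalg\big]=g(\eta_{jn}),$$
with $g(\eta):=\E\big[|\eta^T X|^2\J\{|\eta^T X|>\eps\}\big]$. Using the inequality $|\eta^T X|\le|\eta||X|$ in both the integrand and the indicator, one obtains the pointwise bound $g(\eta)\le|\eta|^2 f(\eps/|\eta|)$, where $f(r):=\E\big[|X|^2\J\{|X|>r\}\big]$. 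Since $\E[|X|^2]=d<\infty$, dominated convergence yields $f(r)\to 0$ as $r\to\infty$.

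For an auxiliary threshold $\rho>0$, I would split $\{1,\dots,n+1\}$ according to whether $|\eta_{jn}|\le\rho$ or not. On the first subset, $f(\eps/|\eta_{jn}|)\le f(\eps/\rho)$; on the complement, $f\le d$ combined with the elementary inequality $|\eta|^2\le\rho^{2-p}|\eta|^p$ (valid for $|\eta|>\rho$ since $p>2$) gives
$$\E\Big[\sum_{j=1}^{n+1} g(\eta_{jn})\Big]\le f(\eps/\rho)\,\E\Big[\sum_j|\eta_{jn}|^2\Big]+d\,\rho^{2-p}\,\E\Big[\sum_j|\eta_{jn}|^p\Big].$$
The second expectation is immediately handled by Lemma \ref{Lemma.eq}: by the $p$-homogeneity, $\E[\sum_j|\eta_{jn}|^p]\le|\xi|^p M_p^{(n)}$, which tends to $0$ as $n\to\infty$.

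The critical step is a uniform-in-$n$ bound on the first expectation. Writing $\E[\sum_j|\eta_{jn}|^2]=\xi^T T_n\xi$ with $T_n:=\E\big[\sum_j\beta_{jn}\beta_{jn}^T\big]$, I would use that $T_n\in\smat_+$ and that, by $\tr(\beta_{jn}\beta_{jn}^T)=\tr(\beta_{jn}^T\beta_{jn})$ together with the identity \eqref{eq.sumtoone} established in the previous lemma,
$$\tr T_n=\tr\E\Big[\sum_j\beta_{jn}^T\beta_{jn}\Big]=\tr\eins=d.$$
Since $T_n$ is positive semi-definite, $\|T_n\|_\infty\le\tr T_n=d$, and hence $\E[\sum_j|\eta_{jn}|^2]\le d|\xi|^2$ uniformly in $n$. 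Assembling the estimates, $\limsup_{n\to\infty}\E[\sum_j g(\eta_{jn})]\le d|\xi|^2 f(\eps/\rho)$ for every $\rho>0$; letting $\rho\downarrow 0$ completes the proof of \eqref{cond2-bis}.

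The main obstacle will be precisely this last uniform bound: the conservation identity \eqref{eq.sumtoone} from the preceding lemma controls $\E[\sum_j\beta_{jn}^T\beta_{jn}]$, while the $\ell^2$-type quantity above naturally involves $\E[\sum_j\beta_{jn}\beta_{jn}^T]$, and these two matrices can genuinely differ in the multi-dimensional setting. The trace trick above trades the spectral norm for the trace, which is invariant under transposition and thereby absorbs the discrepancy.
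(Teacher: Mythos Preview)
Your proof is correct. The ingredients are the same as in the paper --- the $p$th-moment decay from Lemma~\ref{Lemma.eq} and the second-moment conservation \eqref{eq.sumtoone} --- but the organization differs slightly. The paper truncates on the size of the random vectors $X_j$ (choosing $K$ so that $\E[|X|^2\J\{|X|>K\}]$ is small) and then handles the remaining piece via H\"older--Markov to pass to the $p$th moment; you instead truncate on the size of the weights $|\eta_{jn}|=|\beta_{jn}^T\xi|$ and use the elementary bound $|\eta|^2\le\rho^{2-p}|\eta|^p$ on the large-weight part, which sidesteps H\"older entirely. Your trace trick to control $\xi^T\E[\sum_j\beta_{jn}\beta_{jn}^T]\xi$ via $\tr(\beta_{jn}\beta_{jn}^T)=\tr(\beta_{jn}^T\beta_{jn})$ is exactly the content of the paper's identity \eqref{eq.sumtoone2}, just phrased matricially rather than through the Frobenius norm. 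Both routes yield the same final estimate $\limsup_n\le d|\xi|^2\cdot(\text{small})$; your version is marginally more direct.
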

\begin{proof}
  Let $\xi\in\setR^d$ be fixed.
  For notational simplicity set
  \begin{align}
    \label{eq.defxjn}
    X_{jn}:=\xi^T(\beta_{j n} X_{j})=(\beta_{j n}^T \xi)^T X_{j}. 
  \end{align}
  Since $\E[|X_j|^2]=d$ by hypothesis, 
  for every $\eta>0$ there is a $K=K(\eta)$ such that $\E[|X_j|^2\J\{|X_j|>K\}]\leq \eta$. 
  At this stage observe that, by \eqref{eq.sumtoone},
  \begin{align}
    \label{eq.sumtoone2}
    \E\Big  [\sum_{j=1}^{n+1}  \|\beta_{j n}\|_2^2\Big ] 
    =\E\Big [\sum_{j=1}^{n+1} \tr(\beta_{j n}^T \beta_{j n}) \Big ]
    = \tr \eins =d.
  \end{align}
  Moreover, the rule \eqref{eq.submult2} implies
  \begin{align}
    \label{eq.xjnest}
    |X_{jn}|^2 \leq \|\beta_{j n}\|_2^2 |\xi|^2|X_j|^2 \quad \text{and} \quad |X_{jn}|^2 \leq |\beta_{j n}^T\xi|^2 |X_j|^2,
  \end{align}
  and so we obtain
  \begin{align*}
    \E\Big [\sum_{j=1}^{n+1} |X_{jn}|^2 \J\{|X_{jn}|> \eps \}\Big ] 
    & \leq \E\Big [\sum_{j=1}^{n+1} \|\beta_{j n}\|_2^2\Big ] |\xi|^2 \E\Big [|X_1|^2\J\{|X_1|>K\}\Big ] \\
    & \qquad  +\E\Big [\sum_{j=1}^{n+1} |X_{jn}|^2 \J\{|X_{jn}|> \eps,|X_j|\leq K \}\Big ] \\
    & \leq d \eta|\xi|^2 + K^2  \E\Big [\sum_{j=1}^{n+1} |\beta_{j n}^T \xi|^2 \J\{|\beta_{j n}^T \xi|> \frac{\eps}{K} \}\Big ].
  \end{align*}
  Let $q:=p/2$ and $q':=p/(p-2)$ be its conjugate exponent.
  Then, by H\"older's and by Markov's inequality,
  \begin{align*}
    \E\Big [ |\beta_{j n}^T \xi|^2 \J\Big\{|\beta_{j n}^T \xi|> \frac{\eps}{K} \Big\}\Big ]
    & \leq \left( \E \big [ |\beta_{j n}^T \xi|^p \big]\right)^{\frac1q} 
    \left(\prb\Big\{|\beta_{j n}^T \xi|> \frac{\eps}{K}\Big \}\right)^{\frac{1}{q'}} \\
    &\leq \Big( \frac{K}{\eps} \Big)^{\frac{p}{q'}} \E\Big[ |\beta_{j n}^T \xi|^p\Big]^{\frac{1}{q'}}
    \left( \E \big [ |\beta_{j n}^T \xi|^p \big]\right)^{\frac1q} .
  \end{align*}
  Hence
  \begin{align*}
    \E\Big [\sum_{j=1}^{n+1} |X_{jn}|^2 \J\{|X_{jn}|> \eps \}\Big ] & \leq 
    \eta d |\xi|^2 + K^2 \Big( \frac{K}{\eps} \Big)^{\frac{p}{q'}}
    \sum_{j=1}^{n+1} \E\Big[ |\beta_{j n}^T \xi|^p\Big] \\
    & \leq \eta d |\xi|^2 + K^2 \Big( \frac{K}{\eps} \Big)^{\frac{p}{q'}} |\xi|^p
    \sum_{j=1}^{n+1} \sup_{\ee:|\ee|=1} \E[|\beta_{j n}^T\ee|^p] \\
    & = \eta d |\xi|^2 + K^2 \Big( \frac{K}{\eps} \Big)^{\frac{p}{q'}}|\xi|^p M_p^{(n)}.
  \end{align*}
  With \eqref{eq.mest}, 
  we finally get
  \[
  \limsup_{n \to +\infty} 
  \E\Big [\sum_{j=1}^{n+1}| X_{j n}|^2 \J\{|X_{jn}|> \eps \}\Big ]
  \leq  \eta d|\xi|^2.
  \]  
  Since $\eta$ is arbitrary, this shows \eqref{cond2-bis}.
\end{proof}
This concludes our preparations for the proof of Theorem \ref{thm.clt}.
\begin{proof}[Proof of Theorem \ref{thm.clt}]
  Let $S_\infty$ be a random variable with law $\nu_\infty$. 
  Let $(Y_j)_j$ be a sequence of i.i.d. random vectors 
  with Gaussian distribution of zero mean and covariance matrix $S_0$.
  Observe that $S_n=\cov[\sum_{j=1}^{n+1} \beta_{j n} Y_j|\sigalg]$. 
  Since a linear combination of Gaussian random variables is a Gaussian random variable,
  \begin{align*}
    \E\Big [\exp\{i  \xi^T \sum_{j=1}^{n+1} (\beta_{j n} Y_{j}) \}\Big]
    &= \E\Big [\E\Big [\exp\{i  \xi^T \sum_{j=1}^{n+1} (\beta_{j n} Y_{j}) \}\Big|\sigalg\Big]\Big] \\
    &=\E\Big[\exp\{ -\frac{1}{2}  (\xi^T S_n \xi)  \}\Big] 
    \to \E\Big[\exp\{ -\frac{1}{2}  (\xi^T S_\infty \xi)  \}\Big]
  \end{align*}
  Here we used the distributional convergence of $S_n$ to $S_\infty$ by Lemma \ref{lemma_Sn}.
  Thus, Theorem \ref{thm.clt} follows if we prove that
  \begin{equation}
    \label{resto}
    \Big|\E\Big[\exp\{i \xi^T \sum_{j=1}^{n+1} (\beta_{j n} X_{j}) \}\Big]
    -\E\Big[\exp\{i \xi^T \sum_{j=1}^{n+1} (\beta_{j n} Y_{j})\}\Big]\Big| \to 0.
  \end{equation}
  Define $X_{jn}$ as in \eqref{eq.defxjn} above, and accordingly $Y_{jn}:=\xi^T(\beta_{j n}Y_j)$,
  as well as the conditional second momenta in direction of $\xi$,
  \[
  \sigma_{jn}^2:=\E[ X_{jn}^2|\CB]=\E[Y_{jn}^2|\CB]=\xi^T [\beta_{j n} S_0 \beta_{j n}^T] \xi .
  \]
  Also, we introduce the characteristic functions
  \[ 
  \phi_{jn}:=\E[\exp \{ i X_{jn} \}|\sigalg], \qquad \psi_{jn}:=\E[\exp\{i Y_{jn }\}|\sigalg] .
  \]
  Clearly, \eqref{resto} is equivalent to
  \[
  \Big|\E\Big [\prod_j^{n+1} \phi_{jn}-\prod_j^{n+1} \psi_{jn}\Big]\Big| \to 0.
  \]
  Since for arbitrary complex numbers $z_1,\dots,z_n$ and $z_1',\dots,z_n'$ of modulus less or equal to one,
  \begin{equation}
    \label{fact1}
    |\prod_{j=1}^nz_j - \prod_{j=1}^n z_j'| \leq \sum_{j=1}^n|z_j- z_j'|,
  \end{equation}
  we can write
  \begin{equation}
    \label{firstboundch}
    \begin{split}
      \Big |\E \Big [ \prod_j \phi_{jn}-\prod_j \psi_{jn}\Big ]\Big |& 
      \leq \E\Big [\sum_j \Big |\phi_{jn}-\psi_{jn}\Big | \Big  ] \\
      & \leq \E\Big [\sum_j \Big |\phi_{jn}-1+\frac{1}{2} \sigma_{jn}^2\Big |\Big ]
      +\E\Big [\sum_j \Big |\psi_{jn}-1+\frac{1}{2} \sigma_{jn}^2\Big |\Big ]
    \end{split}
  \end{equation}
  By formula \eqref{eq.taylor2} from the appendix,
  any centered random variable $Z$ satisfies
  \begin{equation}
    \label{taylor2}
    \Big|\E[e^{iZ}]-1+\frac12 \E[Z^2]\Big| \leq \E\Big[\min\Big\{Z^2 , \frac16 |Z|^3 \Big\}\Big].
  \end{equation}
  Observing that
  \[
  \E[ Y_{j n} |\sigalg]=\E[X_{j n}|\sigalg] = 0,
  \]
  we can use \eqref{firstboundch}-\eqref{taylor2} and the definition of $\phi_{jn}$ and $\psi_{jn}$ 
  to conclude 
  \[
  \big |\E\big[\prod_j^{n+1} \phi_{jn}-\prod_j^{n+1} \psi_{jn}\big]\big| \leq A_n(\eps)+B_n(\eps)
  \]
  for arbitrary $\eps>0$,
  where
  \[
  A_n(\eps):=\frac{1}{6} \sum_j \E[|X_{jn}|^3 \J \{|X_{jn}|\leq \eps  \} ]
  +  \sum_j \E[|X_{jn}|^2 \J \{|X_{jn}|> \eps  \} ],
  \]
  and $B_n(\eps)$ is the respective expression for $Y_{jn}$.
  In view of \eqref{eq.xjnest}, we can estimate
  \[
  A_n 
  \leq  \frac{\eps}{6} |\xi|^2 \E|X_1|^2 \sum_j\E[\|\beta_{j n}\|_2^2] + \sum_j \E[ |X_{j n}|^2 \J\{|X_{j n}| > \eps \} ].
  \]
  The first term on the right-hand side can be made arbitrarily small 
  for an appropriate choice of $\eps>0$;
  recall that $\sum_j\E[\|\beta_{j n}\|_2^2] =d$ by \eqref{eq.sumtoone2}.
  And the second term converges to zero for $n\to\infty$ by \eqref{cond2-bis},
  independently of $\eps$.
  The analogous argument applies to $B_n(\eps)$.
\end{proof}

\subsection{Quantitative results in Fourier metrics}
Under a very weak additional hypothesis on the initial condition,
the previously obtained qualitative convergence result 
can even be quantified on the level of the Fourier transform.
\begin{thm}
  \label{thm.fourierconverge}
  Let Assumptions 1--3 hold, and let $\Sigma_*$ be positive definite.
  If the initial condition $\mu_0$ belongs to a class $\cls_{2+\sigma}(\setR^d)$ with some $\sigma>0$,
  i.e., if its absolute moment of order $2+\sigma$ is finite,
  then there exists a constant $\Lambda\in(0,1)$ that depends on $\mu_0$ only through $\sigma$,
  such that
  \begin{align}
    \label{eq.fourierconverge}
    \big|\hat\mu(t;\xi) - \hat\mu_\infty(\xi) \big|
    \leq C(\mu_0)\max\big(|\xi|^2,|\xi|^{2+\sigma}\big)e^{-(1-\Lambda) t}
  \end{align}
  holds for all $t\geq0$ and all $\xi\in\setR^d$.
\end{thm}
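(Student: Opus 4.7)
The plan is to run a Fourier-metric Gronwall argument against a \emph{corrected} difference that absorbs the $O(|\xi|^2)$ covariance mismatch --- this mismatch would otherwise prevent finiteness of the natural Fourier distance, since $\cov\mu_0$ need not equal $\Sigma_*$. Fix $\sigma\in(0,p-2)$ (with $p$ from Assumption~\ref{ass.3new}); Lemma~\ref{lem.prime} furnishes a $(2+\sigma)$-homogeneous weight $\weight$ with constants $\bar\weight\geq 1$ and $\kappa_{2+\sigma}<1$. Testing the Boltzmann equation against $v\otimes v$ shows that $\Sigma(t):=\cov\mu(t)$ satisfies $\partial_t\Sigma+\Sigma=\tq(\Sigma)$; since $\tq$ preserves trace by Assumption~\ref{ass.1}, the perturbation $N(t):=\Sigma(t)-\Sigma_*$ is traceless, so Assumption~\ref{ass.2} yields $\|N(t)\|_\infty\leq Ce^{-\beta t}\|N(0)\|_\infty$ for some $\beta>0$ depending only on $\kappa$.

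I would then work with the corrected difference
\[
\tilde h(t;\xi) := \hat\mu(t;\xi) - \hat\mu_\infty(\xi) + \tfrac12\,\xi^T N(t)\xi,
\]
designed so that its leading behavior at $\xi=0$ is of order $|\xi|^{2+\sigma}$. A Taylor-with-remainder estimate using the finite $(2+\sigma)$-th moments of $\mu_0$ and $\mu_\infty$ gives $|\tilde h(0;\xi)|\leq C|\xi|^{2+\sigma}$ for all $\xi$, whence $D(0):=\sup_{\xi\neq 0}|\tilde h(0;\xi)|/\weight(\xi)<\infty$. Combining $\partial_t\hat\mu=-\hat\mu+\E[\hat\mu(L^T\xi)\hat\mu(R^T\xi)]$, the stationary equation for $\hat\mu_\infty$, and $\partial_t N=-N+\tq(N)$, together with the substitution $h(\eta)=\tilde h(\eta)-\tfrac12\eta^T N\eta$ under the expectations, produces an exact cancellation of the $\pm\tfrac12\xi^T\tq(N)\xi$ contributions; what survives is
\[
\partial_t\tilde h+\tilde h = \E\bigl[\tilde h(L^T\xi)\hat\mu(R^T\xi)+\hat\mu_\infty(L^T\xi)\tilde h(R^T\xi)\bigr] + A_2(t;\xi),
\]
where
\[
A_2 = \tfrac12\E\bigl[(L^T\xi)^T N(L^T\xi)\bigl(1-\hat\mu(R^T\xi)\bigr)+\bigl(1-\hat\mu_\infty(L^T\xi)\bigr)(R^T\xi)^T N(R^T\xi)\bigr].
\]

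To close the Gronwall loop, I would bound the first term by $\leq\kappa_{2+\sigma}D(t)\weight(\xi)$ using $|\hat\mu|,|\hat\mu_\infty|\leq 1$ together with \eqref{eq.weirdass}. For the residual, the estimate $|1-\hat\mu(\eta)|\leq C\min(|\eta|^2,1)\leq C|\eta|^\sigma$ (valid since $\tr\Sigma(t)=d$ and $\sigma\in(0,2]$), Young's inequality, and Assumption~\ref{ass.3new} applied to $\weight$ yield $|A_2|\leq C\|N(t)\|_\infty\weight(\xi)\leq C'e^{-\beta t}\weight(\xi)$. Duhamel's formula and a comparison argument then produce $D(t)\leq C''e^{-\gamma t}$ with $\gamma:=\min(1-\kappa_{2+\sigma},\beta)>0$, and combining $|\tilde h(t;\xi)|\leq C''\bar\weight e^{-\gamma t}|\xi|^{2+\sigma}$ with the direct estimate $\tfrac12|\xi^T N(t)\xi|\leq Ce^{-\beta t}|\xi|^2$ yields \eqref{eq.fourierconverge} with $\Lambda:=\max(\kappa_{2+\sigma},1-\beta)\in(0,1)$, depending on $\mu_0$ only through $\sigma$ (via $\kappa_{2+\sigma}$). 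The main technical point is securing the exact cancellation of the $\xi^T\tq(N)\xi$ contributions: absent this cancellation, the residual would be of order $\|N(t)\|_\infty|\xi|^2$, a bound incompatible with the $\weight$-weighted Gronwall loop and thus fatal to the exponential rate.
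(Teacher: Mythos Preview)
Your approach is correct and rests on the same key ingredients as the paper's: the covariance correction to absorb the $O(|\xi|^2)$ mismatch, the $(2+\sigma)$-homogeneous weight from Lemma~\ref{lem.prime}, and a Gronwall argument. The difference is organizational. The paper packages the covariance-corrected Fourier difference and the covariance gap into a single metric
\[
\newd_\weight(\mu_1,\mu_2) = \sup_{\xi\neq0}\weight(\xi)^{-1}\big|\hat\mu_1-\hat\mu_2+\tfrac12\xi^T(\cov\mu_1-\cov\mu_2)\xi\big| + \alpha\|\cov\mu_1-\cov\mu_2\|_\infty,
\]
proves one contraction estimate $\newd_\weight(Q_+[\mu_1],Q_+[\mu_2])\leq\lambda\,\newd_\weight(\mu_1,\mu_2)$ (Lemma~\ref{lem.fcontract}, whose proof is referred back to the arguments of Lemmas~\ref{Lemma4}--\ref{lem.fwcontract}), and runs Gronwall once on $\newd_{\weight'}(\mu(t),\mu_\infty)$. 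You instead decouple: first track $N(t)$ via its own linear ODE $\partial_t N+N=\tq(N)$ and Assumption~\ref{ass.2}, then run Gronwall on $D(t)=\sup_\xi|\tilde h|/\weight$ with $N(t)$ entering as an exponentially decaying forcing through the residual $A_2$. Both routes deliver the same exponential rate $1-\Lambda=\min(1-\kappa_{2+\sigma},1-\kappa)$ (the paper's in the limit $\alpha\to\infty$). Your decoupled version makes the distinct roles of Assumptions~\ref{ass.2} and~\ref{ass.3new} more explicit and avoids the auxiliary parameter $\alpha$; the paper's combined-metric version is more compact once the contraction lemma is taken for granted, and parallels exactly the treatment of $\fnorm_\wweight$ in Section~\ref{sct.stationary}.
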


The appropriate Fourier distance in the current context is given by
\begin{align*}
  \newd_\weight(\mu_1,\mu_2) 
  & = \sup_{\xi\neq0} \big( \weight(\xi)^{-1}\big|\hat\mu_1(\xi)-\hat\mu_2(\xi)+\half\xi^T(\cov(\mu_1)-\cov(\mu_2))\xi\big| \big) \\
  & \qquad + \alpha \|\cov(\mu_1)-\cov(\mu_2)\|_\infty,
\end{align*}
with a parameter $\alpha>0$ that needs to be determined.
This definition is similar to the one for $\fnorm_\wweight$ given in \eqref{eq.ourmetric}.
\begin{lem}
  The distance $\newd_\weight$ defines a metric on $\cls_p(\setR^d)$ with the following convexity property:
  If $\mu_1=s\mu_1'+(1-s)\mu_1''$ and $\mu_2=s\mu_2'+(1-s)\mu_2''$ are two convex combinations 
  of measures in $\cls_p(\setR^d)$, with $s\in[0,1]$,
  then
  \begin{align}
    \label{eq.newdconvex}
    \newd_\weight(\mu_1,\mu_2) \leq s\newd_\weight(\mu_1',\mu_2') + (1-s)\newd_\weight(\mu_1'',\mu_2''). 
  \end{align}
  Moreover, $\newd_\weight$ provides a pointwise control on the Fourier transforms,
  \begin{align}
    \label{eq.newdcontrol}
    \big| \hat\mu_1(\xi)-\hat\mu_2(\xi) \big| \le K\max\big( |\xi|^2,|\xi|^p \big) \,\newd_\weight(\mu_1,\mu_2),
  \end{align}
  with a constant $K$ that only depends on $\alpha$ and $\weight$.
  Finally, convergence $\newd_\weight(\mu_k,\mu_*)\to0$ for a sequence $(\mu_k)$ in $\cls_p(\setR^d)$ 
  implies weak convergence of $\mu_k$ to $\mu_*$ and convergence of the covariance matrices.
\end{lem}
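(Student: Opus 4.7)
The proof proposal mirrors that of Lemma \ref{lem.fnorm} almost verbatim, with characteristic functions on $\setR^d$ in place of those on $\smat$, and with the second-order Taylor correction $\tfrac12\xi^T\cov(\mu)\xi$ in place of the first-order correction $i\tr(\Xi^T\E[S])$. As in that proof, I first invoke Lemma \ref{lem.prime} to assume without loss of generality that $p\in(2,3)$; the remaining four claims (metric, convexity, Fourier control, implied convergence) then all follow by direct manipulations, with only the finiteness of the defining supremum requiring genuine work.

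For well-definedness, let $V_i$ be a random vector with law $\mu_i\in\cls_p(\setR^d)$ and covariance $\Sigma_i$. Since $\mu_i$ is centered and $p\in(2,3)$, estimate \eqref{eq.taylor1} from the appendix (applied to $\xi^TV_i$) yields
\begin{align*}
  \bigl|\hat\mu_i(\xi) - 1 + \tfrac12\xi^T\Sigma_i\xi\bigr|
  \;\leq\; C_p\,\E\bigl[|\xi^TV_i|^p\bigr]
  \;\leq\; C_p|\xi|^p\,\E\bigl[|V_i|^p\bigr].
\end{align*}
Using the triangle inequality together with $\weight(\xi)\geq|\xi|^p$ from \eqref{eq.boundass}, one concludes
\begin{align*}
  \sup_{\xi\neq0}\weight(\xi)^{-1}\bigl|\hat\mu_1(\xi)-\hat\mu_2(\xi)+\tfrac12\xi^T(\Sigma_1-\Sigma_2)\xi\bigr|
  \;\leq\; C_p\bigl(\E[|V_1|^p]+\E[|V_2|^p]\bigr)<\infty,
\end{align*}
which is finite by definition of $\cls_p(\setR^d)$.

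Symmetry, non-negativity and the triangle inequality are immediate from the form of $\newd_\weight$ as a supremum plus a norm; definiteness follows since $\newd_\weight(\mu_1,\mu_2)=0$ forces $\Sigma_1=\Sigma_2$ via the $\alpha\|\cdot\|_\infty$ summand, hence $\hat\mu_1\equiv\hat\mu_2$, hence $\mu_1=\mu_2$. The convexity property \eqref{eq.newdconvex} follows from the linearity of $\mu\mapsto\hat\mu$ and $\mu\mapsto\cov(\mu)$ combined with the convexity of both the sup and the spectral-radius norm, exactly as in the matrix case. For the pointwise control \eqref{eq.newdcontrol}, write
\begin{align*}
  \bigl|\hat\mu_1(\xi)-\hat\mu_2(\xi)\bigr|
  \;\leq\; \weight(\xi)\,\newd_\weight(\mu_1,\mu_2)+\tfrac12|\xi|^2\,\|\Sigma_1-\Sigma_2\|_\infty,
\end{align*}
bound $\|\Sigma_1-\Sigma_2\|_\infty\leq\alpha^{-1}\newd_\weight(\mu_1,\mu_2)$, and use $\weight(\xi)\leq\bar\weight|\xi|^p$ from \eqref{eq.boundass} to conclude with $K=\max(\tfrac1{2\alpha},\bar\weight)$.

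Finally, if $\newd_\weight(\mu_k,\mu_*)\to0$, then the $\|\cdot\|_\infty$-summand forces $\cov(\mu_k)\to\cov(\mu_*)$, and \eqref{eq.newdcontrol} yields pointwise convergence $\hat\mu_k(\xi)\to\hat\mu_*(\xi)$ for every $\xi\in\setR^d$; L\'evy's continuity theorem then delivers weak convergence of $\mu_k$ to $\mu_*$. I do not expect any serious obstacle: the whole argument is a transcription of Lemma \ref{lem.fnorm}, and the only mildly delicate point is the Taylor expansion at second (rather than first) order, which is why the reduction $p\in(2,3)$ from Lemma \ref{lem.prime} is used.
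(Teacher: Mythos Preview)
Your proposal is correct and follows exactly the approach the paper intends (the paper in fact omits this proof, stating only that it is ``very similar to (and easier than)'' that of Lemma~\ref{lem.fnorm}). One small slip: the Taylor estimate you need for the second-order remainder is \eqref{eq.taylor2}, not \eqref{eq.taylor1}; the bound you write down, $C_p\,\E[|\xi^TV_i|^p]$, is indeed what \eqref{eq.taylor2} provides.
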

We omit the proof, 
which is very similar to (and easier than) the proof of Lemma \ref{lem.fnorm}.
\begin{lem}
  \label{lem.fcontract}
  Let Assumptions 1--3 hold. Let $\mu_1,\,\mu_2\in\cls_p(\setR^d)$.
  Then 
  \begin{align}
    \newd_\weight\big(Q_+[\mu_1],Q_+[\mu_2]\big) \leq \lambda \newd_\weight(\mu_1,\mu_2)
  \end{align}
  with the same constant $\lambda$ defined in \eqref{eq.deflambda}.
\end{lem}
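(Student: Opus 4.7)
The strategy parallels that of Lemma \ref{Lemma4} and Lemma \ref{lem.fwcontract}, transferred from probability measures on $\smat_+$ to probability measures on $\setR^d$. The two key inputs are the Fourier representation $\widehat{Q_+[\mu_i]}(\xi) = \E[\hat\mu_i(L^T\xi)\hat\mu_i(R^T\xi)]$ from \eqref{eq.qfourier} and the covariance identity $\cov(Q_+[\mu_i]) = \tq(\cov\mu_i)$, which follows by testing \eqref{eq.gain} against $v\mapsto vv^T$. Since $\mu_1,\mu_2 \in \cls_p(\setR^d)$ share the trace $d$, the matrix $\Delta := \cov(\mu_1) - \cov(\mu_2)$ is traceless, and Assumption \ref{ass.2} yields immediately $\|\tq(\Delta)\|_\infty \le \kappa\|\Delta\|_\infty$. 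This already controls the $\alpha\|\cdot\|_\infty$-part of $\newd_\weight(Q_+[\mu_1], Q_+[\mu_2])$.

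For the supremum part of $\newd_\weight$, I would introduce the quantity $\mathcal{A}(\xi) := \widehat{Q_+[\mu_1]}(\xi) - \widehat{Q_+[\mu_2]}(\xi) + \half\xi^T\tq(\Delta)\xi$ and decompose it by the standard telescoping
\[
\hat\mu_1(L^T\xi)\hat\mu_1(R^T\xi) - \hat\mu_2(L^T\xi)\hat\mu_2(R^T\xi) = [\hat\mu_1-\hat\mu_2](L^T\xi)\,\hat\mu_1(R^T\xi) + \hat\mu_2(L^T\xi)\,[\hat\mu_1-\hat\mu_2](R^T\xi),
\]
simultaneously inserting and subtracting the quadratic correctors $\half(L^T\xi)^T\Delta(L^T\xi)$ and $\half(R^T\xi)^T\Delta(R^T\xi)$ inside the expectation. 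This rearranges $\mathcal{A}(\xi)$ into two types of contributions: (i) two \emph{Fourier-distance} terms, in which the bracket $\hat\mu_1(\eta) - \hat\mu_2(\eta) + \half\eta^T\Delta\eta$ appears at $\eta \in \{L^T\xi, R^T\xi\}$ against a factor of modulus at most one; and (ii) two \emph{cross} terms of the form $\half\E[\eta^T\Delta\eta\,(1-\hat\mu_i(\eta'))]$ with $(\eta,\eta') \in \{(L^T\xi,R^T\xi),(R^T\xi,L^T\xi)\}$.

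The type-(i) contributions are immediately controlled, via Assumption \ref{ass.3new} and the symmetry $(L,R)\distreq(R,L)$, by $\E[\weight(L^T\xi)+\weight(R^T\xi)]$ times the supremum-part of $\newd_\weight(\mu_1,\mu_2)$, which is at most $\kappa_p\,\weight(\xi)\,\newd_\weight(\mu_1,\mu_2)$. The hard part will be the cross terms in (ii), where the correct power of $\weight$ needs to be produced. My plan here is to interpolate the trivial bound $|1-\hat\mu_i(\eta)| \le 2$ with the centered Taylor estimate $|1-\hat\mu_i(\eta)| \le \half\eta^T\cov(\mu_i)\eta \le (d/2)|\eta|^2$ (both valid since $\mu_i$ is centered with $\tr\cov(\mu_i)=d$) to obtain $|1-\hat\mu_i(\eta)| \le C(d,p)\,|\eta|^{p-2}$, in precisely the spirit of the ``$\le 2d$''-estimate near the end of the proof of Lemma \ref{Lemma4}. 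Combined with $|\eta^T\Delta\eta| \le |\eta|^2\|\Delta\|_\infty$ and Young's inequality \eqref{eq.young} at the conjugate exponents $p/2$ and $p/(p-2)$, each cross term reduces to a constant multiple of $\|\Delta\|_\infty\,\E[\weight(L^T\xi)+\weight(R^T\xi)] \le \kappa_p\,\weight(\xi)\,\|\Delta\|_\infty$ by a second application of Assumption \ref{ass.3new}; a careful tracking of the prefactor reproduces the $2d^{p/2}\kappa_p$ familiar from Lemma \ref{Lemma4}.

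Dividing by $\weight(\xi)$ and combining with the covariance bound from the first paragraph assembles to $\newd_\weight(Q_+[\mu_1],Q_+[\mu_2]) \le \max(\kappa_p,\,\kappa + 2\alpha^{-1}d^{p/2}\kappa_p)\,\newd_\weight(\mu_1,\mu_2)$, which is exactly the claim with $\lambda$ from \eqref{eq.deflambda}. Every step other than the interpolation-plus-Young manipulation of the cross terms is a formal rearrangement directly paralleling the matrix-valued argument of Lemma \ref{Lemma4}.
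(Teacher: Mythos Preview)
Your proposal is correct and follows precisely the route the paper itself indicates: the paper does not spell out a proof of Lemma \ref{lem.fcontract} but simply states that ``the proof combines the elements from the proofs of Lemma \ref{Lemma4} and Lemma \ref{lem.fwcontract}'', and your outline carries out exactly this transfer from $\cls_p(\smat_+)$ to $\cls_p(\setR^d)$. The only cosmetic point is that your interpolation-plus-Young estimate on the cross terms actually produces a constant strictly smaller than $2d^{p/2}\kappa_p$ (roughly $2^{2-p}d^{(p-2)/2}\kappa_p$), so rather than ``reproducing'' the constant you obtain a sharper bound; this only strengthens the inequality and the conclusion with $\lambda$ from \eqref{eq.deflambda} holds a fortiori.
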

The proof combines the elements 
from the proofs of Lemma \ref{Lemma4} and Lemma \ref{lem.fwcontract}.
We leave it to the reader to adapt the technical details accordingly.
\begin{proof}[Proof of Theorem \ref{thm.fourierconverge}]
  Without loss of generality, we may assume that $\sigma<1$ and $p':=2+\sigma<p$.
  By Lemma \ref{lem.prime}, Assumption \ref{ass.3new} is satisfied for $p'$ with a suitable weight function $\weight'$.
  Now apply Lemma \ref{lem.fcontract} with $\weight'$ instead of $\weight$.
  Choosing $\alpha>0$ sufficiently large in the respective metric $\newd_{\weight'}$,
  one obtains
  \begin{align*}
    \newd_{\weight'}\big( Q_+[\mu_1],Q_+[\mu_2] \big) \leq \Lambda \newd_{\weight'} \big(\mu_1,\mu_2 \big)
    \quad \text{with} \quad
    \Lambda := \max\big(\kappa_{p'},\kappa+2\alpha^{-1}d^{p/2}\kappa_{p'}\big) < 1.
  \end{align*}
  Combining the convexity property  \eqref{eq.newdconvex}
  with the fact that, for any $s\geq0$,
  \begin{align*}
    \newd_{\weight'}\big(Q_+[\mu(s)],\mu_\infty\big) 
    = \newd_{\weight'}\big( Q_+[\mu(s)],Q_+[\mu_\infty] \big) \leq \Lambda \newd_{\weight'}(\mu(s),\mu_\infty)
  \end{align*}
  it follows that
  \begin{align*}
    \newd_{\weight'}(\mu(t),\mu_\infty) 
    &\leq e^{-t}\newd_{\weight'}(\mu_0,\mu_\infty) + \int_0^t e^{-(t-s)}\newd_{\weight'}\big( Q_+[\mu(s)],\mu_\infty \big)\dd s \\
    &\leq e^{-t}\newd_{\weight'}(\mu_0,\mu_\infty) + \Lambda \int_0^t e^{-(t-s)}\newd_{\weight'}(\mu(s),\mu_\infty)\dd s .
  \end{align*}
  Gronwall's lemma now yields 
  \begin{align}
    \label{eq.expconvergence}
    \newd_{\weight'}(\mu(t),\mu_\infty) \leq \newd_{\weight'}(\mu_0,\mu_\infty)e^{-(1-\Lambda)t}.
  \end{align}
  From \eqref{eq.expconvergence}, the claim \eqref{eq.fourierconverge} follows with the help of \eqref{eq.newdcontrol}
\end{proof}

\section{Divergence for data of infinite energy}
\label{sct.explosion}
In this section, we prove that the solution to the initial value problem \eqref{eq.b} cannot converge
to a stationary state if the temperature of the initial condition is infinite.
For $z\in\setR^d$ and $r>0$, 
let $\ball_r(z):=\{x \in \setR^d: |x-z| \leq r\}$ be the closed ball of radius $r$ around $z$.
\begin{thm}
  \label{thm.explosion}
  Let Assumptions 1--3 hold, 
  and suppose that the stationary solution $\mu_\infty$ of unit temperature has no concentration in zero. 
  If $\mu(t)$ is the solution to \eqref{eq.b} with $\tmp[\mu_0]=+\infty$,
  then $\mu(t)[\ball_r(0)]$ converges to zero as $t\to\infty$, for every $r>0$. 
\end{thm}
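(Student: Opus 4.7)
The plan is to implement the truncation scheme sketched in the introduction. For each $K>0$, set $p_K := \mu_0[\ball_K(0)]$ and let $\bar\mu_0^K := p_K^{-1}\mu_0|_{\ball_K(0)}$ be the normalized restriction of $\mu_0$ to the closed ball of radius $K$. Since $\bar\mu_0^K$ is compactly supported, it has moments of every order, and its temperature $\vartheta_K = (d p_K)^{-1}\int_{\ball_K(0)}|v|^2\mu_0(\dd v)$ is finite; since $\tmp[\mu_0]=+\infty$ and $p_K\to 1$, one has $\vartheta_K\to\infty$ as $K\to\infty$. Let $\bar\mu^K(t)$ denote the solution of \eqref{eq.b} with initial datum $\bar\mu_0^K$. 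By the scaling argument following Theorem \ref{thm.fourierconverge}, $\bar\mu^K(t)$ converges as $t\to\infty$ to the rescaled stationary distribution $\mu_\infty^{\vartheta_K}$, which is the push-forward of $\mu_\infty$ under $v\mapsto\sqrt{\vartheta_K}\,v$. In particular
\[
\mu_\infty^{\vartheta_K}[\ball_r(0)]
=\mu_\infty\big[\ball_{r/\sqrt{\vartheta_K}}(0)\big]
\xrightarrow[K\to\infty]{}\mu_\infty(\{0\}) = 0,
\]
using the hypothesis that $\mu_\infty$ has no concentration at the origin.

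The technical core of the proof is to establish, for every $r > 0$ and every $K > 0$, a bound of the form
\[
\limsup_{t\to\infty}\mu(t)[\ball_r(0)] \;\leq\; C\,\mu_\infty^{\vartheta_K}[\ball_r(0)]
\]
with a constant $C$ independent of $K$; letting $K\to\infty$ and combining with the previous display then finishes the proof. To obtain this bound I would use the probabilistic representation of Proposition \ref{Prop:probint}. Writing the i.i.d.\ samples of $\mu_0$ appearing in $W_n=\sum_{j=1}^{n+1}\beta_{j,n}X_j$ as a Bernoulli mixture $X_j=B_j\bar Y_j+(1-B_j)\tilde Y_j$, with $B_j\sim\mathrm{Ber}(p_K)$, $\bar Y_j\sim\bar\mu_0^K$, and $\tilde Y_j$ distributed as the normalized restriction of $\mu_0$ to $\ball_K(0)^c$, all jointly independent, splits $W_n$ into a ``good'' sum $W_n^{(g)}$ over $\{j:B_j=1\}$ involving only bounded summands, and an independent ``bad'' sum $W_n^{(b)}$ over $\{j:B_j=0\}$, conditional on the Bernoulli indicators and the weights. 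Conditioning on $W_n^{(b)}$, the density of $W_n$ is pointwise dominated by the $L^\infty$ norm of the conditional density of $W_n^{(g)}$, which in turn, after Wild averaging and for $t$ sufficiently large, is controlled via Theorem \ref{thm.fourierconverge} applied to $\bar\mu^K$ by a constant multiple of the density of $\mu_\infty^{\vartheta_K}$. Integrating over $\ball_r(0)$ yields the desired inequality.

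The main obstacle is precisely this last step. Since for $t$ large the Wild sum concentrates on indices $n$ of order $e^t$, the bad sum $W_n^{(b)}$ typically involves of order $(1-p_K)(n+1)$ heavy-tailed summands, so one cannot argue pathwise that it is negligible. The required estimate must instead extract smoothing from the good sum in spite of the fluctuating random weights, balancing the Lindeberg-type decay of individual weights (Lemma \ref{Lemma.eq}), the Fourier-metric contraction of Lemma \ref{lem.fcontract}, and the decay of $\hat\mu_\infty$ at infinity (Lemma \ref{lem.psitozero}), exactly in the spirit of \cite{CaGaReExplosion,explosion1}.
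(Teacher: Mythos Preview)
Your high-level strategy --- truncate $\mu_0$ to a ball, use the probabilistic representation with a Bernoulli splitting into ``good'' and ``bad'' summands, and compare the good part with a rescaled stationary distribution --- is exactly what the paper does. However, the technical realization you sketch does not quite close, and the paper's actual argument fills the gap differently from what you propose.

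The main issue is your claim that the Wild average of the good sum $W_n^{(g)}=\sum_{j:B_j=1}\beta_{j,n}\bar Y_j$ can be controlled ``via Theorem \ref{thm.fourierconverge} applied to $\bar\mu^K$''. This does not work: $W_n^{(g)}$ is a sum over a \emph{random subset} of indices, so its law (even after Wild averaging) is not $\bar\mu^K(t)$ or any other solution of \eqref{eq.b}. There is no direct way to invoke the contraction estimate for it. Relatedly, $\bar\mu_0^K$ is not centered, so Theorem \ref{thm.fourierconverge} is not applicable to it as stated; and working with pointwise densities is delicate since neither $\mu(t)$ nor the conditional law of $W_n^{(g)}$ need possess one.

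The paper avoids all of this by working entirely in Fourier space. After Parseval, one bounds $\mu(t)[\ball_r(z)]$ by $\int\gamma(\xi)|\hat\mu(t;\xi/r)|\dd\xi$; the bad factors and the centering phase are then simply dropped using $|\Phi|\le1$. The remaining product $\prod_j\phi(\alpha_j\beta_{j,n}^T\xi)$ is still not a Wild sum, so two further replacements are made: first the $\phi$'s are swapped for Gaussians with the same covariance $\Sigma_\LL$ (a CLT-type error controlled by Lemma \ref{Lemma.eq}), and second the random Bernoulli factors $\alpha_j$ are replaced by their deterministic $L^2$-average $\sqrt{a}$ (error controlled by von Bahr--Esseen). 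Only after these two steps does one obtain a genuine Wild sum, namely the solution with \emph{Gaussian} initial datum of covariance $a\Sigma_\LL$, to which Theorem \ref{thm.fourierconverge} can legitimately be applied. The residual term is $H(\sqrt{a}\sigma_\LL/r\sqrt d)$, which tends to zero with $\LL\to\infty$ precisely because $\mu_\infty(\{0\})=0$. Your proposal is missing the Gaussian-replacement and Bernoulli-averaging steps; these are what convert the ill-behaved good sum into an object to which the contraction theorem applies.
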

In the following, let $\LL>0$ be a parameter,
which is large enough such that
\begin{align*}
  a_\LL := \mu_0[\ball_\LL(0)] > 0.
\end{align*}
Accordingly, we decompose $\mu_0=a_\LL\underline\mu_0+(1-a_\LL)\overline\mu_0$,
where the probability measures $\underline \mu_0$ and $\overline \mu_0$ are given by
\begin{align*}
  \underline \mu_0[\cdot]:=\frac{\mu_0[ \cdot \cap\ball_\LL(0)]}{a_\LL} 
  \quad \text{and} \quad
  \overline\mu_0[\cdot]:=\frac{\mu_0[ \cdot \cap \ball_\LL(0)^C]}{1-a_\LL},
\end{align*}
and introduce expectation and variance of the restriction,
\begin{align*}
  m_\LL:=\int_{\setR^d} v \underline \mu_0(\dd v) 
  \quad \text{and} \quad  
  \sigma_\LL^2:=\int_{\setR^d} (v-m_\LL)^2 \underline \mu_0(\dd v).
\end{align*}
Finally, recall that $\Psi=\hat \mu_\infty$ is the characteristic function 
of the stationary solution with covariance $\Sigma_*$ ---
see \eqref{eq.stationary} ---
and define
\[
H(u ):=\int_{\setR^d} e^{-\frac{|\xi|^2}{2}} \Psi( \xi u ) \dd\xi
=\frac{1}{u^d} \int_{\setR^d} e^{-\frac{|\xi|^2}{2u^2}} \Psi( \xi ) \dd\xi .
\]
Observe that $H(u)\to0$ as $u\to\infty$ if and only if $\mu_\infty$ has no concentration in $0$. 
The key is to establish the following time-dependent upper bound on $\mu(t)[\ball_\RR(z)]$.
\begin{prp}
  \label{prp.explosion}
  Under Assumptions 1--3, 
  and if $\RR>0$ and $ \sigma_\LL>\sqrt{2}\RR$, 
  then 
  \begin{align}
    \label{eq.explosion}
    \sup_{z\in\setR^d}\mu(t)[\ball_\RR(z)]
    \leq C \Big [ (\LL/\RR)^p \max(e^{-(1-\Lambda) t},e^{-\frac{2}{p}(1-\kappa_p)\cdot t}) + H(\sqrt{a}\sigma_\LL/\RR\sqrt{d}) \Big ]
  \end{align}
  where $\Lambda\in(0,1)$ is the constant defined in Theorem \ref{thm.fourierconverge} for $2+\sigma=p$.
\end{prp}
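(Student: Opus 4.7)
The plan is to reduce $\mu(t)[\ball_\RR(z)]$, via a Gaussian majorant, to a weighted $L^1$-estimate of $|\hat\mu(t,\xi)|$, and then to bound this modulus by coupling $\mu(t)$ with a finite-temperature auxiliary dynamics to which Theorem~\ref{thm.fourierconverge} applies. Starting from the elementary pointwise bound $\indi_{\ball_\RR(z)}(v)\le e\,e^{-|v-z|^2/\RR^2}$ together with the Fourier inversion of the Gaussian, one obtains
\[
\mu(t)[\ball_\RR(z)] \le \frac{e(\RR\sqrt{\pi})^d}{(2\pi)^d}\int_{\setR^d} e^{-\RR^2|\xi|^2/4}\,e^{-i\xi^Tz}\hat\mu(t,\xi)\,\dd\xi .
\]
Pulling the modulus inside removes the $z$-dependence (so the $\sup_z$ in~\eqref{eq.explosion} is absorbed at no cost) and reduces the problem to controlling $\int e^{-\RR^2|\xi|^2/4}|\hat\mu(t,\xi)|\,\dd\xi$.

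Since $\tmp[\mu_0]=+\infty$ prevents any direct use of Theorem~\ref{thm.fourierconverge} on $\mu(t)$ itself, I would construct an auxiliary finite-temperature dynamics from the probabilistic representation $\hat\mu_n(\xi)=\E[\prod_j \hat\mu_0(\beta_{j,n}^T\xi)]$. Decomposing $\hat\mu_0=a_\LL\hat{\underline\mu}_0+(1-a_\LL)\hat{\overline\mu}_0$, using $|\hat{\overline\mu}_0|\le 1$, the convexity inequality $(ax+1-a)^2\le a x^2+1-a$ valid for $x\in[0,1]$, and Jensen's inequality on the Wild sum $\mu(t)=\sum_n e^{-t}(1-e^{-t})^n\mu_n$, one derives the pointwise bound $|\hat\mu(t,\xi)|^2\le\hat{\bar\pi}(t,\xi)$, where $\bar\pi(t)$ is the solution of~\eqref{eq.b} starting from the centered, compactly supported initial datum $\bar\pi_0:=a_\LL\tilde{\underline\mu}_0+(1-a_\LL)\delta_0$, with $\tilde{\underline\mu}_0$ the symmetrization of $\underline\mu_0-m_\LL$. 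This $\bar\pi_0$ has temperature proportional to $a_\LL\sigma_\LL^2/d$ and $p$th absolute moment bounded by a multiple of $\LL^p$. After isotropic rescaling to unit temperature, Theorem~\ref{thm.fourierconverge} yields
\[
|\hat{\bar\pi}(t,\xi)-\Psi_\vartheta(\xi)|\le C\LL^p\max(|\xi|^2,|\xi|^p)\,e^{-(1-\Lambda) t},
\]
where $\Psi_\vartheta$ denotes the characteristic function of the rescaled stationary state $\mu_\infty^\vartheta$ at the effective temperature $\vartheta$ of $\bar\pi_0$. The alternative exponential rate $e^{-\frac{2}{p}(1-\kappa_p)t}$ is obtained from a parallel calculation in which only the polynomial weight bound $M_p^{(n)}\le Cn^{\kappa_p-1}$ of Lemma~\ref{Lemma.eq} is used inside the Wild sum, thereby bypassing the additional moment hypothesis of Theorem~\ref{thm.fourierconverge}.

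Splitting $\hat{\bar\pi}(t,\xi)\le|\hat{\bar\pi}(t,\xi)-\Psi_\vartheta(\xi)|+\Psi_\vartheta(\xi)$ in $|\hat\mu(t,\xi)|\le\sqrt{\hat{\bar\pi}(t,\xi)}$ and integrating against $e^{-\RR^2|\xi|^2/4}$, elementary Gaussian moment computations produce the term $C(\LL/\RR)^p\max(e^{-(1-\Lambda) t},e^{-\frac{2}{p}(1-\kappa_p) t})$ from the convergence-error contribution; the remaining stationary contribution $\int e^{-\RR^2|\xi|^2/4}\sqrt{\Psi_\vartheta(\xi)}\,\dd\xi$ equals, by the very definition of $H$ and an isotropic change of variables, a constant times $H(\sqrt{a_\LL}\sigma_\LL/(\RR\sqrt d))$. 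The main obstacle is the bookkeeping in the construction of $\bar\pi_0$: one has to choose its rescaling so that $\vartheta$ matches exactly the argument of $H$ dictated by~\eqref{eq.explosion}, and the square-root of $\Psi_\vartheta$ (a mixture of Gaussians) has to be handled carefully to extract the correct $\sqrt{a_\LL}$-factor. The hypothesis $\sigma_\LL>\sqrt{2}\RR$ enters to ensure that the rescaled Fourier arguments lie in the regime where the polynomial prefactor $\max(|\xi|^2,|\xi|^p)$ from Theorem~\ref{thm.fourierconverge} does not overwhelm the Gaussian decay $e^{-\RR^2|\xi|^2/4}$ during the integration.
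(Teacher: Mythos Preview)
Your Gaussian-majorant opening and the passage to a Fourier integral are exactly what the paper does. Your domination $|\hat\mu(t,\xi)|^2\le\hat{\bar\pi}(t,\xi)$ is also correct: the convexity inequality $(ax+1-a)^2\le ax^2+1-a$ for $x\in[0,1]$ and Jensen over both the product and the Wild sum indeed give this bound with $\bar\pi_0=a_\LL\tilde{\underline\mu}_0+(1-a_\LL)\delta_0$.

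The gap is at the endpoint. After taking the square root you are left with the stationary contribution $\int e^{-\RR^2|\xi|^2/4}\sqrt{\Psi_\vartheta(\xi)}\,\dd\xi$, and this is \emph{not} a constant multiple of $H(\sqrt{a_\LL}\,\sigma_\LL/(\RR\sqrt d))$. The identity $\sqrt{\Psi(\cdot)}=\Psi(\cdot/\sqrt2)$ holds only for a single Gaussian; for a genuine scale mixture one has $\sqrt{\Psi}\ge\Psi$ pointwise (since $0\le\Psi\le1$), and there is no reverse inequality that would let you dominate $\sqrt{\Psi(u\xi)}$ by $C\,\Psi(u'\xi)$ for any $u'$. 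So the ``careful handling'' you flag cannot be completed, and you do not recover the $H$-term as stated. A second symptom of the same problem is that the square root halves the exponential rate coming from Theorem~\ref{thm.fourierconverge}, giving $e^{-(1-\Lambda)t/2}$ rather than $e^{-(1-\Lambda)t}$.

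The paper avoids the square root entirely by a three-term splitting of the product $\prod_j\phi(\alpha_j\beta_{j,n}^T\xi)$, where $\alpha_j$ are i.i.d.\ Bernoulli$(a_\LL)$ selecting the $\underline\mu_0$-part. First $\phi$ is replaced by the Gaussian $\gamma(\sqrt{\Sigma_\LL}\,\cdot)$ (error controlled by $M_p^{(n)}$, giving the rate $e^{-(1-\kappa_p)t}$); then the random $\alpha_j$ are replaced by their mean $a_\LL$ inside the Gaussian exponent, and it is precisely this step --- bounded via the von~Bahr--Esseen inequality --- that produces the $\Gamma_n^{2/p}$ factor and hence the second rate $e^{-\frac2p(1-\kappa_p)t}$ that you could not locate in your scheme. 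The remaining product is then literally $\hat\mu^*_n$ evaluated at the rescaled argument $\sqrt{a_\LL\sigma_\LL^2/d}\,\xi/\RR$, with $\mu^*$ a unit-temperature Gaussian-initialized solution; Theorem~\ref{thm.fourierconverge} compares this directly to $\Psi$ (no square root), yielding exactly $H(\sqrt{a_\LL}\,\sigma_\LL/(\RR\sqrt d))$. Finally, the hypothesis $\sigma_\LL>\sqrt2\,\RR$ is used only at the very end, together with $\sigma_\LL\le\sqrt2\,\LL$, to absorb the various powers $(\sigma_\LL/\RR)^2$ and $(\sigma_\LL/\RR)^p$ into the single prefactor $(\LL/\RR)^p$.
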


\begin{proof}
  For the sake of notational simplicity write $a$ for $a_\LL$ and $m$ for $m_\LL$. 
  Define the array $\{\beta_{j,n}\}_{n \geq 0,1\leq j\leq n+1}$ of random weights as in \eqref{recursion}.
  In addition,
  consider a sequence of independent and identically distributed random elements
  \[
  (\alpha_j,\xu_j,\xo_j)_{j \geq 1}
  \]
  where $\alpha_j,\xu_j,\xo_j$ are independent of each other, and independent of the $\beta_{j,n}$.
  The distributions of $\alpha_j$, $\xu_j$ and $\xo_j$, respectively, 
  are given by
  \begin{align*}
    \prb\{\alpha_j=1\}=1-\prb\{\alpha_j=0\}=a, \quad  
    \underline\mu_0 = \law \xu_j, \quad \overline\mu_0 = \law\xo_j.
  \end{align*}
  It is easily seen that $\xx_j:=\alpha_j\xu_j+(1-\alpha_j)\xo_j$ has law $\mu_0$, 
  that $|\xu_j|\leq \LL \leq |\xo_j|$ a.s.,
  and that $m=\E [\xu_j]$ and $\sigma_\LL^2=\E[|\xu_j-m|^2]$, for all $j\geq1$.
  In particular, in the Wild representation \eqref{eq.wildsum} of the transient solution $\mu(t)$,
  we can write $\mu_n =\law[\sum_{j=1}^{n+1} \beta_{j,n} X_j]$.

  Further, define the Gaussian density $\gamma(y)=\exp(-|y|^2/2)$.
  By Parseval's identity,
  we obtain for any $\RR>0$ and $z$ in $\setR^d$
  that
  \[
  \int_{\setR^d} \gamma\left(\frac{v-z}{\RR}\right) \mu(t;\dd v) 
  = \frac{\RR^d}{(2\pi)^{\frac{d}{2}}} \int_{\setR^d}  e^{-i \xi^T z} \gamma(\RR\xi) \hat \mu(t;\xi) \dd\xi.
  \]
  Since $\gamma( (v-z)/\RR)\geq e^{-1/2}$ for all $v\in\ball_\RR(z)$,
  it follows further that
  \begin{align}
    \label{eq.defIt}
    \frac{(2\pi)^{\frac{d}{2}}}{e^{\frac{1}{2}}} \sup_{z \in \setR^d}\mu(t)[\ball_\RR(z)]
    \leq (2\pi)^{\frac{d}{2}}\int_{\setR^d} \gamma\left(\frac{v-z}{\RR}\right) \mu(t;\dd v) 
   \leq \int_{\setR^d} \gamma(\xi)|\hat\mu(t;\xi/\RR)| \dd\xi =:I(t).
  \end{align}
  Denote by $\phi(\xi)$ the characteristic function of $\xu_1-m$ 
  and by $\Phi$ the characteristic function of $\xo_1$. 
  For the sake of simplicity set $\tau_n=e^{-t}(1-e^{-t})^n$.
  The Wild sum representation implies
  \[
  |\hat\mu(t;\xi)| 
  = \Big| \sum_{n \geq 0} \tau_n
    \E\Big [\prod_{j=1}^{n+1}\phi(\alpha_j \beta_{j {n}}^T\xi) e^{i\alpha_j \beta_{j {n}}m} 
    \prod_{j=1}^{n+1}\Phi\big((1-\alpha_j) \beta_{j {n}}^T\xi\big)\Big ] \Big|
  \leq \sum_{n \geq 0} \tau_n \E\Big[\Big|\prod_{j=1}^{n+1} \phi(\alpha_j \beta_{j {n}}^T\xi) \Big| \Big] .
  \]
  For further estimation,
  introduce the semi-definite matrix $\sqrt{\Sigma_\LL}$ as the square root of $\Sigma_\LL=\cov(\xu_1)$.
  Clearly, $\tr(\Sigma_\LL)=\sigma_\LL^2$.
  An application of the triangle inequality yields
  \[
  |\hat \mu(t;\xi)| \leq \sum_{n \geq 0} \tau_{n} \big(A_{1,n}+A_{2,n}+A_{3,n}\big)
  \]
  with
  \begin{align*}
    A_{1,n}(\xi) &:= \E\Big [\Big|\prod_{j=1}^{n+1}\phi(\alpha_j \beta_{jn}^T\xi) -  \prod_{j=1}^n \gamma(\alpha_j \sqrt{\Sigma_\LL}\beta_{jn}^T\xi) \Big|\Big], \\
    A_{2,n}(\xi) &:= \E\Big [\Big| \prod_{j=1}^{n+1} \gamma(\alpha_j \sqrt{\Sigma_\LL}\beta_{jn}^T\xi) -\prod_{j=1}^{n}  \gamma(\sqrt{a\Sigma_\LL} \beta_{j {n}}^T\xi) \Big| \Big], \\
    A_{3,n}(\xi) &:= \prod_{j=1}^{n+1}  \gamma(\sqrt{a\Sigma_\LL} \beta_{j {n}}^T\xi).
  \end{align*}
  Since the last term is real and positive for all $\xi\in\setR^d$, 
  no absolute value needed.

  \paragraph{\underline{Estimate for $A_{1,n}$}} 
  Another application of the triangle inequality and of \eqref{fact1} 
  gives
  \begin{align*}
    &A_{1,n}(\xi) \leq \sum_{j=1}^{n+1} \E \big|\phi(\alpha_j \beta_{jn}^T\xi)-\gamma(\alpha_j \sqrt{\Sigma_\LL }\beta_{jn}^T\xi) \big|  \\
    & \leq \sum_{j=1}^{n+1} \E \big|\phi(\alpha_j \beta_{jn}^T\xi) - \big(1- \frac{1}{2} \xi^T \alpha_j \beta_{jn} \Sigma_\LL\beta_j^T \xi \big) \big| 
    +  \sum_{j=1}^{n+1} \E \big|\gamma(\alpha_j \sqrt{\Sigma_\LL } \beta_{jn}^T\xi) - \big(1- \frac{1}{2} \xi^T \alpha_j \beta_{jn} \Sigma_\LL\beta_j^T \xi \big) \big| .
  \end{align*}
  The differences inside the absolute values are now estimated 
  using \eqref{eq.taylor2} from the appendix.
  Since $\gamma$ is the characteristic function of a 
  Gaussian random vector $Y$ (with zero mean und unitary covariance matrix),
  this leads to
  \begin{align*}
    A_{1,n}(\xi) &\leq C_p \sum_{j=1}^{n+1}  \left\{ \E\big[ | (\beta_{jn}^T\xi)^T (\xu_j-m)|^p ]
    + \E\big[ | (\beta_{jn}^T\xi)^T  \sqrt{\Sigma_\LL } Y |^p ] \right\} \\
    & \leq C_p |\xi|^p \big(2^p \LL^p+ \|\Sigma_\LL\|_\infty^{p/2}\E[|Y|^p]\big)\sum_j \sup_{|\ee|=1}   \E|\beta_{jn}^T\ee|^p  .
  \end{align*}
  Recalling estimate \eqref{eq.mest} on the sums of $\beta$'s,
  we arrive at 
  \[
  A_{1,n} \Big (\frac{\xi}{r}\Big) \leq C_1 ( \sigma_\LL^p +|\LL|^p)\Gamma_n\Big |\frac{\xi}{r}\Big|^p,
  \quad \text{where} \quad
  \Gamma_n:=\frac{\Gamma(n+\kappa_p)}{n!\Gamma(\kappa_p)}.
  \] 
  \paragraph{\underline{Estimate for $A_{2,n}$}} 
  Since $|e^{-x^2}-e^{-y^2}|\leq |x^2-y^2|$ and since $p/2>1$,
  we have
  \begin{align*}
    |A_{2,n}(\xi)|^{\frac{p}{2}} 
    & \leq \E \Big[ |  e^{ -\frac{1}{2}\sum_{j=1}^n \alpha_j   \xi^T \beta_{jn} \Sigma_\LL \beta_{jn}^T  \xi }
    -e^{ -\frac{1}{2}  \sum_{j=1}^n a \xi^T \beta_{jn} \Sigma_\LL \beta_{jn}^T  \xi }  |^{\frac{p}{2}}   \Big] \\
    &\leq \E\left[\Big|\sum_j \frac{1}{2} (\alpha_j - a)   \xi^T \beta_{jn} \Sigma_\LL \beta_{jn}^T  \xi\Big |^{\frac{p}{2}} \right]
  \end{align*}
  An application of the von Bahr-Esseen inequality \eqref{vonBahrEsseen} reveals
  that
 \begin{align*}
    |A_{2,n}(\xi)|^{\frac{p}{2}}  
    \leq  \frac{2}{2^{\frac{p}{2}} } \E |\alpha_1 - a|^{\frac{p}{2}} 
    \sum_j\E[|\xi^T \beta_{jn} \Sigma_\LL \beta_{jn}^T  \xi |^{\frac{p}{2}} ]
    \leq \frac{2}{2^{\frac{p}{2}} }\E |\alpha_1 - a|^{\frac{p}{2}}  \|\Sigma_\LL\|_\infty^{\frac{p}{2}}  |\xi|^{p}  
    \sum_{j=1}  \sup_{|\ee|=1}  \E|\beta_j^T\ee|^{p}
  \end{align*}
 and finally yields 
  \[
  A_{2,n}\Big(\frac{\xi}{r}\Big) \leq C \Big|\frac{\xi}{r}\Big|^2\sigma_\LL^2    \big ( \Gamma_{n}(\kappa_p)  \big )^{2/p}.
  \]
  \paragraph{\underline{Estimate for $A_{3,n}$}}
  Define $\mu^*(t)$ for $t\geq0$ by
  \[
  \hat \mu^*(t)(\xi):=\sum_{n \geq 0} \tau_{n}
  \E\Big[\prod_{j=1}^{n+1}  \gamma\Big(\sqrt{\frac{d \Sigma_\LL}{\sigma_\LL^2}} \beta_{j {n}}^T  \xi \Big)\Big],
  \]
  which is the solution of the Boltzmann equation with initial datum 
  \[
  \hat\mu^*_0(\xi) = \gamma\Big(\sqrt{\frac{d \Sigma_\LL}{\sigma_\LL^2}}\xi\Big).
  \]
  In particular, each $\mu^*(t)$ has mean zero and unit temperature. 
  Thus, we can apply the quantitative estimate \eqref{eq.fourierconverge}
  for convergence of $\mu^*(t)$ to $\mu_\infty$, 
  obtaining
  \begin{align*}
    \sum_{n \geq 0} \tau_n \E\Big[\prod_{j=0}^{n+1}\gamma(\sqrt{a\Sigma_\LL} \beta_{j {n}}^T \xi/r) \Big]
    & \leq\bigg| \sum_{n \geq 0} \tau_n \E\Big[\prod_{j=1}^{n+1}   
    \gamma\Big(\sqrt{\frac{d \Sigma_\LL}{\sigma_\LL^2}} \beta_{j {n}}^T \frac{\sqrt{a \sigma_\LL^2}}{r\sqrt{d}}\xi \Big)\Big] -
    \Psi\Big ( \frac{\sqrt{a \sigma_\LL^2}}{r\sqrt{d}}\xi \Big)\bigg|+
    \Psi\Big ( \frac{\sqrt{a \sigma_\LL^2}}{r\sqrt{d}}\xi\Big)   \\
    &\leq C \newd_\weight(\mu^*(t),\mu_\infty)\max\Big ( \Big|\frac{\sqrt{a \sigma_\LL^2}}{r\sqrt{d}}\xi\Big |^2,
    \Big| \frac{\sqrt{a \sigma_\LL^2}}{r\sqrt{d}}\xi\Big |^p \Big ) + \Psi\Big ( \frac{\sqrt{a \sigma_\LL^2}}{r\sqrt{d}}\xi\Big) \\
    &\leq C' e^{-(1-\Lambda)t}\max \big ( \frac{\sigma_\LL^2}{r^2},\frac{\sigma_\LL^p}{r^p} \big)\max (|\xi|^p,|\xi|^2)
    + \Psi\Big ( \frac{\sqrt{a \sigma_\LL^2}}{r\sqrt{d}}\xi\Big)
  \end{align*}
  Combining the estimates on $A_{1,n}$ to $A_{3,n}$
  yields the following bound on $I(t)$ defined in \eqref{eq.defIt}:
  \[
  I(t) \leq C\Big ( \sum_{n\geq 0} \tau_n \Big( 
  \big(|\frac{\sigma_\LL}{\RR}|^p +|\frac{\LL}{\RR}|^p \big)\Gamma_n    
  + |\frac{\sigma_\LL}{\RR}|^2 \Gamma_n^{2/p}  \Big)
  + C'e^{-(1-\Lambda)t}\max \big ( \frac{\sigma_\LL^2}{r^2},\frac{\sigma_\LL^p}{r^p} \big)  +
  H\big(\sqrt{\frac{a}{d}} \frac{\sigma_\LL}{\RR}\big ) 
  \Big ).
  \]
  Using the fact that
  \begin{align*}
    \sum_{n=0}^\infty \tau_n \Gamma_{n} = e^{-(1-\kappa_p)t},
  \end{align*}
  and that ---
  since $\tau_n\geq0$ with $\sum_{n=0}^\infty\tau_n=1$, 
  and since $z\mapsto z^{2/p}$ is concave ---
  \begin{align*}
    \sum_{n=0}^\infty \tau_n \Gamma_{n}^{2/p} \leq \bigg( \sum_{n=0}^\infty \tau_n \Gamma_{n} \bigg)^{2/p} = \exp\Big(-\frac2p(1-\kappa_p)t\Big),
  \end{align*}
  we conclude that 
  \begin{align*}
    \sup_{z\in\setR^d}\mu(t)[\ball_\RR(z)]
    &\leq  C  \Big\{ \big ((\sigma_\LL/\RR)^p+(\LL/\RR)^p\big) e^{-(1-\kappa_p)t}  \\
    & \qquad +  (\sigma_\LL/\RR)^2e^{-\frac{2}{p}(1-\kappa_p) t} +(\sigma_\LL/\RR)^2 e^{-(1-\Lambda) t} 
    +H(\sqrt{a}\sigma_\LL/\RR\sqrt{d})\Big\} .
  \end{align*}
  Since $\sigma_\LL^2=\E[(\xu-m)^2]\leq 2 \LL^2$,
  and under the assumption that $\sigma_\LL>\RR\sqrt{2}$,
  we eventually arrive at formula \eqref{eq.explosion}.
\end{proof}

\begin{proof}[Proof of Theorem \ref{thm.explosion}]
  It remains to be proven that estimate \eqref{eq.explosion} implies Theorem \ref{thm.explosion}.
  To this end, observe that for every $\epsilon>0$, 
  there is $\bar \LL$ such that for every $\LL \geq \bar \LL$
  \[
  \sigma_\LL^2 \geq \frac{1-\epsilon}{2} \int_{\ball_\LL(0)} |v|^2 \mu_0(\dd v).
  \]
  See, e.g., \cite{CaGaReExplosion}. 
  Hence, if $\tmp[\mu_0]=+\infty$, then also $\sqrt{a_\LL}\sigma_\LL \to +\infty$ as $\LL \to +\infty$.
  Consequently, the second term in the sum \eqref{eq.explosion} can be made arbitrarily small
  by choosing $\LL$ large enough.
  And the first term in the sum converges, for fixed $\LL$, to zero as $t\to\infty$.
\end{proof}

\section{Examples}
\label{sct.examples}

\subsection{Example: Maxwell molecules}
\label{sct.maxwell}
In this section, we discuss one example for \eqref{eq.b}, 
which provides a model for a particular case of inelastically colliding Maxwellian molecules \cite{Bobylev};
see \cite{CarTos} for a recent review.
The linear collision rules are given by
\begin{align}
  \label{eq.ourrules}
  v' = v + \alpha\nml\cdot(v_*-v)\nml, \quad
  v_*' = v_* - \alpha\nml\cdot(v_*-v)\nml.
\end{align}
These collision rules are brought in the form \eqref{eq.rules}
by setting
\begin{align}
  \label{eq.maxrules}
  L=\alpha\nml\nml^T \quad\text{and}\quad R=\eins-\alpha\nml\nml^T.
\end{align}
Physically, the unit vector $\nml$ points along the contact line of the molecules upon collision.
In accordance with our basic model assumptions, 
$\nml$ is supposed to have a fixed distribution on the sphere $\sphere^{d-1}$
that does not change with $v$ and $v_*$.
For simplicity, we shall consider a uniform distribution, 
but situations with spatial anisotropies could be discussed along the similar lines.
\begin{rmk}
  The rules \eqref{eq.ourrules} are given in the so-called $\omega$-representation,
  which can easily be rephrased in terms of the (standard) $\sigma$-representation,
  see e.g. \cite{Villani},
  \begin{align}
    \label{eq.theirrules}
    v' = \frac{v+v_*}{2} +  \frac{|v-v_*|}{2}\sigma, \quad
    v_*' = \frac{v+v_*}{2} -  \frac{|v-v_*|}{2}\sigma.
  \end{align}
  For Maxwell molecules, one typically assumes an easy distribution for $\sigma$ on $\sphere^{d-1}$,
  referred to as \emph{cross section}, which leads, after the corresponding change of variables, 
  to a complicated distribution for $\nml$.
  In turn, the uniform distribution for $\nml$ assumed here gives rise to the cross section
  \begin{align}
    \label{eq.spheredensity}
    B(\zz,\sigma) = \frac{2^{1-d/2}}{|\sphere^{d-1}|} (1-\zz^T\sigma)^{1-{d/2}}, 
    \quad \text{with} \quad \zz:=\frac{v-v_*}{|v-v_*|}.
  \end{align}
\end{rmk}
The non-negative random variable $\alpha$ in \eqref{eq.ourrules} models the inelasticity of the collision.
Our model is thus similar --- but not identical --- to the one for inelastic Maxwell molecules 
with a background heat bath proposed in \cite{CorCarTos}.
For $\alpha\equiv1$, one obtains a particular model for elastic Maxwell molecules,
with the untypical cross section \eqref{eq.spheredensity}.
The resulting rules (still for $\alpha\equiv1$) are very strict in the sense that they conserve
the particle momenta and kinetic energies in each individual collision,
as becomes evident from
\begin{align}
  \label{eq.strictly}
  L^T L + R^TR = \eins \ a.s.
\end{align}
Indeed, if one chooses the strict correlation $(L,R)=(L_*,R_*)$ in \eqref{eq.rules},
then
\begin{align*}
  v_*+v_*' = (L+R)v + (L+R)v_* = v+v_*,
\end{align*}
and, since $L^TL=L$, $R^TR=R$ and $L^TR=R^TL=0$,
\begin{align*}
  |v'|^2 + |v_*'|^2 
  & = v^TL^TLv + v_*^TR^TRv_* + v^TL^TRv_* + v_*^TL^TRv + |v_*'|^2 \\
  & = v^T(L+R)v + v_*^T(L+R)v_* = |v|^2 + |v_*|^2 .  
\end{align*}
The stationary state $\mu_\infty$ is given by 
the Maxwell distribution $\dd\mu_\infty/\dd v=(2\pi)^{-d/2}\exp(-\frac12 |v|^2)$.
This follows, for instance, since \eqref{eq.strictly} implies that $S=\eins$ a.s. is a solution to \eqref{eq.aux}.
%
%
In the general ``inelastic'' case, where $\alpha$ is not identically one, 
we require in addition that $\alpha$ is a positive random variable 
that is independent of $\nml$ and satisfies
\begin{align}
  \label{eq.alphaone}
  \E[\alpha(1-\alpha)]=0.
\end{align}
For simplicity, we also assume that $\alpha$ is ``concentrated near one''
in the sense that
\begin{align}
  \label{eq.alphac}
  \E[\alpha^2(1-\alpha)^2] \leq c\E[\alpha^2] 
\end{align}
for a constant $c>0$ determined below.

In preparation of the following calculations, 
introduce the random variables $V_1$ to $V_d$ by
\begin{align}
  \label{eq.defineV}
  V_i = \ee_i^T\nml,
\end{align}
where $\nml$ is uniformly distributed on the sphere $\sphere^{d-1}\subset\setR^d$.
Notice that $V_1^2 + \cdots + V_d^2 = 1$ almost surely since $\nml$ is a unit vector.
Thus, $1 = \E[V_1^2+\cdots+V_d^2] = d\E[V_1^2]$ since the $V_i$ are identically distributed,
and so $\E[V_1^2] = 1/d$.
For later reference, we also remark that
\begin{align}
  \label{eq.v4}
  1 = \E[(V_1^2+\cdots+V_d^2)^2] = d\E[V_1^4] + d(d-1)\E[V_1^2V_2^2]
\end{align}
since each product $V_iV_j$ with $i\neq j$ has the same distribution as $V_1V_2$,
from where it follows that $\E[V_1^4]<1/d$.

We are now going to verify Assumptions 1--3 for the model at hand.
For notational simplicity,
we shall directly work with the unsymmetrized matrices \eqref{eq.maxrules}.

\subsubsection{Assumption \ref{ass.1}}
Since $\nml\nml^T$ defines an orthogonal projector for every $\nml\in\sphere^{d-1}$,
and recalling that $\nml$ and $\alpha$ are independent,
it follows that
\begin{align*}
  \E[L^TL+R^TR] = \eins - 2\E[\alpha(1-\alpha)]\E[\nml\nml^T].
\end{align*}
Condition \eqref{eq.alphaone} yields the desired result.
%

\subsubsection{Assumption \ref{ass.3new}}
We are going to prove \eqref{eq.weirdass} for the weight function $\weight(\xi)=|\xi|^4$.
For an arbitrary unit vector $\ee$, we have
\begin{align*}
  \E[|L^T\ee|^4+|R^T\ee|^4] 
  &= \E[\alpha^4]\E[V_1^4] + 1 - 2\E[\alpha(2-\alpha)]\E[V_1^2] + \E[\alpha^2(2-\alpha)^2]\E[V_1^4] \\
  &= 1 - 2\E[\alpha^2]\E[V_1^2] + 2\E[\alpha^4-2\alpha^3+2\alpha^2]\E[V_1^4] \\
  &= 1 - 2\E[\alpha^2](\E[V_1^2]-\E[V_1^4]) + 2\E[\alpha^2(1-\alpha)^2]\E[V_1^4] .
\end{align*}
Assuming that the constant $c$ in \eqref{eq.alphac} satisfies $c<\E[V_1^4]\leq\E[V_1^2]-\E[V_1^4]$,
inequality \eqref{eq.weirdass} follows.


\subsubsection{Assumption \ref{ass.2}}
We already know that $\Sigma_*=\eins$.
Let a symmetric matrix $M\in\smat$ be given, with $\tr M=0$.
Since the distribution of $(L,R)$ is rotationally symmetric,
we may assume that $M$ is diagonal with respect to the standard basis,
\begin{align}
  \label{eq.decomposeM}
  M = \sum_{k=1}^d \mu_k \ee_k\ee_k^T, \qquad \text{with} \quad \sum_{k=1}^d \mu_k = 0. 
\end{align}
Now, let some unit vector $\ee\in\sphere^{d-1}$ be given, and introduce the numbers $\lambda_i=\ee_i^T\ee$,
satisfying $\lambda_1^2+\cdots+\lambda_d^2=1$.
We obtain, recalling that $\alpha$ and $\nml$ are independent,
\begin{align*}
  \ee^T\E[L M L^T]\ee = \sum_{i,j,k=1}^d \lambda_i\lambda_j\mu_k\E[\ee_i^TL\ee_k\ee_k^TL^T\ee_j] 
  = \E[\alpha^2] \sum_{i,j,k=1}^d \lambda_i\lambda_j\mu_k \E[V_iV_jV_k^2] .
\end{align*}
At this point, observe that
\begin{align*}
  \E[V_iV_jV_k^2] = \delta_{ij} \E[V_1^2V_2^2] + \delta_{ij}\delta_{jk} ( \E[V_1^4]-\E[V_1^2V_2^2] ).
\end{align*}
Thus,
\begin{align*}
  \ee^T\E[LML^T]\ee   
  &= \E[\alpha^2] \sum_{i,k=1}^d \lambda_i^2\mu_k \E[V_1^2V_2^2] + \E[\alpha^2] \sum_{k=1}^d \lambda_k^2\mu_k ( \E[V_1^4]-\E[V_1^2V_2^2] ) \\
  & = \E[\alpha^2]\E[V_1^2V_2^2] \sum_{i=1}^d\lambda_i^2 \underbrace{\sum_{k=1}^d \mu_k}_{=0} + \E[\alpha^2](\E[V_1^4]-\E[V_1^2V_2^2])\sum_{k=1}^d \lambda_k^2\mu_k.
\end{align*}
For the evaluation of the analogous term with $R$ in place of $L$,
observe that
\begin{align*}
  \ee_i^TR\ee_k = \ee_i^T\ee_k - \alpha\ee_i^T\nml \nml^T\ee_k = \delta_{ik} - \alpha V_iV_k.
\end{align*}
Using that $\E[V_iV_j] = \delta_{ij}\E[V_1^2]$ and performing similar manipulations as above
yields
\begin{align*}
  \ee^T\E[RMR^T]\ee &= \sum_{i,j,k=1}^d \lambda_i\lambda_j\mu_k\E[\ee_i^TR\ee_k\ee_k^TR^T\ee_j] \\
  & = \sum_{i,j,k=1}^d \lambda_i\lambda_j\mu_k \big( \delta_{ik}\delta_{jk} - \alpha\delta_{ik}\E[V_jV_k] - \alpha\delta_{jk} \E[V_iV_k] + \alpha^2\E[V_iV_jV_k^2] \big) \\
  & = \sum_{k=1}^d \lambda_k^2\mu_k - 2\E[\alpha]\sum_{i,k=1}^d \lambda_i\lambda_k\mu_k\E[V_iV_k] + \E[\alpha^2]\sum_{i,j,k=1}^d \lambda_i\lambda_j\mu_k\E[V_iV_jV_k^2] \\
  & = \big\{1-2\E[\alpha]\E[V_1^2]+\E[\alpha^2]\big(\E[V_1^4]-\E[V_1^2V_2^2]\big)\big\} \sum_{k=1}^d \lambda_k^2\mu_k .
\end{align*}
Adding up, and using that $\E[\alpha^2]=\E[\alpha]$ by \eqref{eq.alphaone}, we have
\begin{align*}
  \ee^T\E[LML^T+RMR^T]\ee 
  = \big\{1-2\E[\alpha]\big(\E[V_1^2]+\E[V_1^2V_2^2]-\E[V_1^4]\big)\big\} \sum_{k=1}^d \lambda_k^2\mu_k
  \leq \kappa \max_{1\leq k\leq d} |\mu_k|,
\end{align*}
with the constant
\begin{align*}
  \kappa = 1-2\E[\alpha]\big(\E[V_1^2]+\E[V_1^2V_2^2]-\E[V_1^4]\big).
\end{align*}
Here we have used implicitly that $\kappa\geq0$, 
which is true since
\begin{align*}
  \E[V_1^2V_2^2] \leq \E[V_1^4] \quad \text{and} \quad \E[V_1^2] = 1/d \leq 1/2,
\end{align*}
and since $\E[\alpha]\leq1$, which follows from $\E[\alpha]^2\leq\E[\alpha^2]=\E[\alpha]$.
To conclude Assumption \ref{ass.2}, it suffices to check that indeed $\kappa<1$.
But this is a trivial consequence of $\E[V_1^2]=1/d$ and $\E[V_1^4]<1/d$,
which in turn follows from \eqref{eq.v4}.

\subsubsection{Heavy tails}
Above, we have shown that the fourth absolute moment of $\mu_\infty$ is finite if $c$ in \eqref{eq.alphac} is small enough.
Next, we argue that unless $\alpha\equiv1$ a.s., there exists a $s>4$ such that the $s$th absolute moment of $\mu_\infty$ diverges.
In other words, the steady distribution $\mu_\infty$ develops fat tails as soon as the collision rules are ``a bit inelastic''.

The goal is to show \eqref{eq.bbstar} for some (sufficiently large) $s>4$.
Observe that, to this end, it suffices to prove
\begin{align}
  \label{eq.bbb}
  \E[|R\ee_1|^s] = \E[\alpha^s]\E[V_1^s] > 1.
\end{align}
Since we assume that \emph{not} $\alpha\equiv1$ a.s.,
it follows from \eqref{eq.alphaone} that $\alpha>1$ with positive probability.
Consequently, there exists some $\epsilon^*\in(0,1/2)$ such that $p^*:=\prb(\alpha>1+2\epsilon^*)>0$.
Observe further that $q^*:=\prb(V_1>1-\epsilon^*)>0$,
and so
\begin{align*}
  \E[\alpha^s]\E[V_1^s] \geq p^*q^*(1+2\epsilon^*)^s(1-\epsilon^*)^s \geq p^*q^*(1+\epsilon^*/2)^s,
\end{align*}
tends to infinity as $s\to\infty$.
So \eqref{eq.bbb} holds for some large $s$.

%

\subsubsection{Regularity}

We have seen that outside of the elastic case, 
the stationary distribution $\mu_\infty$ is \emph{not} a Gaussian, but a scale mixture.
Depending on the distribution of $\alpha$, this mixture might or might not possess a smooth density.

In order to check the applicability of Theorem \ref{thm.smooth}, 
we need to evaluate the criteria \eqref{eq.smoothcon1} and \eqref{eq.smoothcon2}.
By rotational symmetry, it suffices to calculate the expectations for \emph{one particular} unit vector, say $\ee=\ee_1$.
For \eqref{eq.smoothcon1}, we find
\begin{align*}
  \E\big[ \min(|L^T\ee_1|,|R^T\ee_1|)^{-\delta}\big]
  &\le \E\big[\alpha^{-\delta}\big] \big(\E\big[|V_1|^{-\delta}\big] + \E\big[(1-V_1^2)^{-\delta/2}\big]\big) \\
  &= \sigma_d\E[\alpha^{-\delta}]\bigg(\int_{-1}^1 |z|^{-\delta}(1-z^2)^{(d-3)/2}\dd z + \int_{-1}^1 (1-z^2)^{(d-3-\delta)/2}\dd z \bigg).
\end{align*}
Here $\sigma_d$ is the ration of the respective hypersurface measures of $\sphere^{d-1}$ and $\sphere^{d-2}$.
The sum of the last integrals is obviously finite for as long as $\delta<1$.
Thus \eqref{eq.smoothcon1} holds if $\E[\alpha^{-\delta}]$ is finite for some $\delta>0$.

On the other hand, since $|L^T\ee_1|^2+|R^T\ee_1|^2=\alpha^2$ a.s.,
it follows that
\begin{align*}
  \E\big[ \max(|L^T\ee_1|,|R^T\ee_1|)^{-\bar a}\big]
  \ge \E\big[ (\alpha^2/2)^{-\bar a/2}\big] = 2^{\bar a/2} \E\big[ \alpha^{-\bar a} \big].
\end{align*}
In conclusion, the regularity of $\mu_\infty$ is entirely determined by the largest exponent $\bar a$ 
for which $\E[\alpha^{-\bar a}]$ is finite.
If, for instance $\alpha\ge\epsilon>0$ a.s.,
then $\bar a>0$ is arbitrary,
and $\mu_\infty$ possesses a density in $C^\infty(\setR^d)$.

\subsection{Random rotations}
Let $A,\,B$ be random matrices in $\so(\setR^d)$, 
i.e., $A$ and $B$ represent random rotations in $\setR^d$.
Further, let $\alpha,\beta$ be non-negative random variables with
\begin{align}
  \label{eq.exampleab}
  \E[\alpha^2+\beta^2]=1 \quad \text{and} \quad \kappa_p:=\E[\alpha^p+\beta^p]<1
\end{align}
for some $p\in(2,3)$.
For simplicity, we assume further that $(\alpha,\beta)$ is independent of $(A,B)$.
Now define
\begin{align*}
  L=\alpha A, \quad R=\beta B.
\end{align*}
This is a relatively straight-forward extension of a random kinetic model on $\setR$ to $\setR^d$. 
As already pointed out in the introduction, 
examples of this type might violate Assumption \ref{ass.2}
if the rotations leave certain genuine subspaces of $\setR^d$ invariant.
We claim that our set assumptions is met if the following is true for every unit vector $\ee$:
the measure $\mu_\ee$ induced on the sphere $\sphere^{d-1}$ through $A^T\ee$,
\begin{align*}
  \mu_\ee(\mathcal{B}) = \prb( A^T\ee\in\mathcal{B} )
  \quad \text{for all Borel sets $\mathcal{B}\subset\sphere^{d-1}$},
\end{align*}
can be decomposed in the form $\mu_\ee=\epsilon\sigma+(1-\epsilon)\tilde\mu_\ee$,
where $\sigma$ is the normalized surface measure on $\sphere^{d-1}$,
and $\tilde\mu_\ee$ is some probability measure,
with an $\epsilon>0$ independent of $\ee$.

Since $A,\,B$ are rotation matrices, $A^TA=B^TB=\eins$,
and so
\begin{align*}
  \E\big[L^TL + R^TR \big] = \E[\alpha^2+\beta^2] = 1,
\end{align*}
which shows Assumption \ref{ass.1}.
Also, with the same $p$ as in \eqref{eq.exampleab},
\begin{align*}
  \E\big[ |L^T\xi|^p + |R^T\xi|^p \big] = \E[\alpha^p+\beta^p]|\xi|^p = \kappa_p|\xi|^p,
\end{align*}
which is Assumption \ref{ass.3new} with $\omega(\xi)=|\xi|^p$.
To check Assumption \ref{ass.2}, 
first note that $\Sigma_*=\eins$ satisfies
\begin{align*}
  \E[L\Sigma_*L^T+R\Sigma_*R^T] = \E[\alpha^2AA^T+\beta^2BB^T] = \Sigma_* .
\end{align*}
Next, let a symmetric matrix $M\in\smat$ be given.
Then, for every $\ee\in\sphere^{d-1}$,
\begin{align*}
  \ee^T\E\big[LML^T+RMR^T\big]\ee
  & = \E\big[\alpha^2(A^T\ee)^T\Sigma(A^T\ee)+\beta^2(B^T\ee)^T\Sigma(B^T\ee)\big] \\
  & = \E[\alpha^2+\beta^2]\int_{\sphere^{d-1}} w^TMw\mu_\ee(\dd w) \\
  & = (1-\epsilon)\int_{\sphere^{d-1}} w^TMw\tilde\mu_\ee(\dd w) + \epsilon \int_{\sphere^{d-1}} w^TM w \sigma(\dd w) ,
\end{align*}
where the assumed decomposition of $\mu_\ee$ has been employed.
The last integral equals to $\tr M/d$, and thus vanishes if $M$ is traceless.
This is easily verified by decomposing $M$ in the form \eqref{eq.decomposeM},
defining $V_1,...,V_d$ as in \eqref{eq.defineV} with $w$ in place of $\nml$, 
and observing that the integral equals to a sum over $\mu_k\E[V_k^2]$ for $k=1,...,d$.
This means that
\begin{align*}
  \big|\ee^T\E\big[LM L^T+RMR^T\big]\ee \big|
  \leq 
  (1-\epsilon) \int_{\sphere^{d-1}} \|M\|_\infty\tilde\mu_\ee(\dd w) 
  = (1-\epsilon)\|M\|_\infty, 
\end{align*}
proving Assumption \ref{ass.2}.
Thus, our theory provides the existence and uniqueness of a stationary distribution $\mu_\infty$.

Once this is known, it is now easy to conclude that $\mu_\infty$ is --- as one would expect ---
the rotationally symmetric extension of the stationary state $\mu_*\in \meas(\setR)$ 
for the corresponding \emph{one-dimensional} model with scalar coefficients $(L',R')=(\alpha,\beta)$.
From the results in \cite{BasLadMat}, it follows that the one-dimensional equation
possesses a unique stationary solution $\mu_*$ that is centered and of unit second moment;
call its characteristic function $\psi$.
We claim that $\Psi$, the characteristic function of $\mu_\infty$, 
satisfies $\Psi(\xi):=\psi(|\xi|)$ for all $\xi\in\setR^d$.
In fact, this follows immediately since
\begin{align*}
  \E[\Psi(L^T\xi)\Psi(R^T\xi)] = \E[\psi(\alpha|A^T\xi|)\psi(\beta|B^T\xi|)]
  = \E[\psi(\alpha|\xi|)\psi(\beta|\xi|)] = \psi(|\xi|) = \Psi(\xi).
\end{align*}

\subsection{An asymmetric example}
\label{sct.cross}
The last example equation we consider produces a steady distribution
that is \emph{not} rotationally symmetric.
Thus, even if rotationally symmetric initial conditions are used,
the results from \cite{BobCerGam} do not apply,
as the symmetry is broken at any $t>0$.
Specifically, we state the example in the plane $\setR^2$,
and the stationary distribution $\mu_\infty$ will be concentrated 
on the coordinate cross $\{(v_1,v_2)\in\setR^2|v_1v_2=0\}$.
(It is not hard to generalize the idea developed below to higher dimensions 
and to more sophisticated non-symmetric stationary states.)

Recall that $\ee_1,\,\ee_2$ are the canonical basis vectors,
and introduce the vectors $\ee_+,\,\ee_-$ by $\ee_\pm=(\ee_1\pm\ee_2)/\sqrt{2}$.
The random coefficients $L$ and $R$ are defined by
\begin{align*}
  (L,R) = \big(\ee_i\ee_+^T,\ee_i\ee_-^T\big),
\end{align*}
where the random index $i$ attains the values 1 and 2 with probabilities $q$ and $1-q$, respectively.
Since $\ee_+$ and $\ee_-$ are orthonormal, 
one immediately verifies that
\begin{align*}
  L^TL + R^TR = \ee_+\ee_i^T\ee_i\ee_+^T + \ee_-\ee_i^T\ee_i\ee_-^T = \ee_+\ee_+^T + \ee_-\ee_-^T = \eins
\end{align*}
almost surely.
Let $\Sigma\in\smat$ be given, with
\begin{align}
  \label{eq.typicalsigma}
  \Sigma = \begin{pmatrix} \sigma_+ & \tau \\ \tau & \sigma_- \end{pmatrix} .
\end{align}
One finds that
\begin{align*}
  \tq(\Sigma) = \E\big[L\Sigma L^T+R\Sigma R^T\big]
  =\big( \ee_+^T\Sigma\ee_+ + \ee_-^T\Sigma\ee_- \big) \E\big[ \ee_i\ee_i^T \big]
  = \frac{\sigma_++\sigma_-}2 \Sigma_*, 
\end{align*}
with the fixed point $\Sigma_*$ of $\tq$ given by
\begin{align*}
  \Sigma_* = 2 \begin{pmatrix} q & 0 \\ 0 & 1-q \end{pmatrix} .
\end{align*}
In particular,
$\tq(\Sigma)=0$ for traceless matrices $\Sigma$, i.e., for $\sigma_++\sigma_-=0$.
Thus, Assumption \ref{ass.2} holds with $\kappa=0$.
Define a weight function $\omega:\setR^2\to\setR$ by setting $\omega(\lambda\ee_\pm)=\lambda^4$ for all $\lambda\in\setR$,
and $\omega(\xi)=\check\omega|\xi|^4$ for all $\xi\in\setR^2$ that are \emph{not} parallel to $\ee_+$ or $\ee_-$,
with a value $\check\omega\geq1$ yet to be determined.
For arbitrary $\xi=(\xi_1,\xi_2)\in\setR^2$, it follows that
\begin{align*}
  g(\xi) := \E\big[\omega(L^T\xi) + \omega(R^T\xi) \big]
  & = q \big( \omega(\xi_1\ee_+) + \omega(\xi_1\ee_-) \big) + (1-q) \big( \omega(\xi_2\ee_+) + \omega(\xi_2\ee_-) \big) \\
  & = 2( q\xi_1^4+ (1-q)\xi_2^4).
\end{align*}
Now if $\xi=\lambda\ee_\pm$, then $\xi_1^4=\xi_2^4=\lambda^4/4$, 
and so
\begin{align*}
  g(\xi) = \frac12 \lambda^4 = \frac12 \omega(\xi).
\end{align*}
If instead $\xi$ is not parallel to $\ee_+$ or $\ee_-$, 
then we estimate
\begin{align*}
  g(\xi) = 2\frac{q\xi_1^4+(1-q)\xi_2^4}{(\xi_1^2+\xi_2^2)^2}|\xi|^4\leq \frac{2\max(q,1-q)}{\check\omega}\omega(\xi).
\end{align*}
Choosing $\check\omega:=4\max(q,1-q)$, 
it then follows that
\begin{align*}
  \E\big[\omega(L^T\xi) + \omega(R^T\xi) \big] \leq \frac12\omega(\xi)
\end{align*}
for \emph{all} $\xi\in\setR^2$,
verifying Assumption \ref{ass.3new} with $p=4$ and $\kappa_p=1/2$.
\begin{rmk}
  We emphasize that \eqref{eq.weirdass} would \emph{not} hold
  with some $\kappa_p<1$ if we had made the canonical choice $\omega(\xi)=|\xi|^4$.
\end{rmk}
Consequently, the equation \eqref{eq.b} possesses 
a unique stationary distribution $\mu_\infty$ of unit temperature:
$\mu_\infty$ is supported on the coordinate cross
and satisfies
\begin{align}
  \label{eq.crossstate}
  \int_{\setR^2}\varphi(v)\dd\mu_\infty(v) = (4\pi)^{-1/2}\int_\setR \big( q \varphi(w,0) + (1-q) \varphi(0,w) \big)e^{-w^2/4}\dd w 
\end{align}
for all test functions $\varphi\in C^0_b(\setR^2)$.
The fact that \eqref{eq.crossstate} defines an equilibrium 
is most easily checked on the level of its Fourier transform,
\begin{align*}
  \Psi(\eta) := \hat\mu_\infty(\eta) = q e^{-\eta_1^2} + (1-q) e^{-\eta_2^2} .
\end{align*}
Indeed, $\Psi$ satisfies the stationary equation \eqref{eq.stationary} for every $\xi\in\setR^2$:
\begin{align*}
  \E\big[\Psi(L^T\xi)\Psi(R^T\xi)\big]
  & = q \Psi(\xi_1\ee_+)\Psi(\xi_1\ee_-) + (1-q) \Psi(\xi_2\ee_+)\Psi(\xi_2\ee_-)  \\
  & = q \big( q e^{-\xi_1^2/2} + (1-q) e^{-\xi_1^2/2} \big)^2 
  + (1-q) \big( q e^{-\xi_2^2/2} + (1-q) e^{-\xi_2^2/2} \big)^2 = \Psi(\xi).
\end{align*}

\section{Appendix}
Here we list a variety of inequalities that are used throughout our calculations.

\begin{itemize}
\item \textbf{Young's inequality:} for every $p>2$, and for all $x,y\ge0$,
  \begin{align}
    \label{eq.young}
    x^{p/2-1}y \leq \frac{p-2}{p} x^{p/2} + \frac{2}{p} y^{p/2} .
  \end{align}
\item \textbf{Inequality on moments:} for every $s>1$,  there is a constant $C_s$ 
  such that for all $x,y\ge0$,
  \begin{align}
    \label{eq.myineq}
    (x+y)^s \leq x^s+y^s+C_s(x^{s-1}y+xy^{s-1}).
  \end{align}
\item \textbf{Remainder estimate for Fourier transforms:} for $p\in(2,3)$, and for all $u\in\setR$,
  \begin{align}
    \label{eq.taylor1}
    \big|e^{-iu}-(1-iu)\big| \leq \min\{2|u|, \half|u|^2 \}\leq 2^{3-p}|u|^{p/2}, \\
    \label{eq.taylor2}
    \big|e^{-iu}-(1-iu+\frac{1}{2} u^2)\big| \leq \min\{u^2, {\textstyle\frac16}|u|^3  \} \leq 6^{2-p}|u|^p.
  \end{align}
\item \textbf{Von Bahr and Esseen inequality:}
  Let $Z_1,\dots,Z_n$ independent (real valued) random variables such that $\E[Z_i]=0$ and
  $\E[|Z_i|^p]<+\infty$ for some $1 \leq p \leq 2$, then
  \begin{equation}\label{vonBahrEsseen}
    \E[|\sum_{i=1}^n Z_i|^p] \leq 2\sum_{i=1}^n \E[|Z_i|^p].
  \end{equation}
  See \cite{BahrEsseen1965} for a proof.
\end{itemize}

\end{document}